\newtheorem{thm}{Theorem}[section]
\newtheorem{defi}[thm]{Definition}
\newtheorem{lem}[thm]{Lemma}
\newtheorem{prop}[thm]{Proposition}
\newtheorem{remark}[thm]{Remark}
\newtheorem{example}[thm]{Example}
\def\fin{\hfill $\lozenge$}
\newcommand{\CQCA}{T}
\newcommand{\varia}{u}
\newcommand{\sca}{{\textbf{\emph{t}}}} 
\newcommand{\scb}{{\textbf{\emph{b}}}}
\def\weyl#1{{\bf w}(#1)}
\newcommand{\Id}{\mathds{1}}
\newcommand{\ZZ}{\mathbb{Z}}
\newcommand{\NN}{\mathbb{N}}
\newcommand{\FA}{\mathfrak{A}}
\newcommand{\CN}{\mathcal{N}}
\newcommand{\sollgleich}{\stackrel{!}{=}}
\newcommand{\trans}[1]{\tau_{#1}}
\newcommand{\phase}[1]{\lambda({#1})}
\def\SL#1#2{{\rm SL}(#1,#2)}
\def\subfring{{\mathcal R}}
\def\fring{{\mathcal P}}
\def\isosub{{\mathcal I}}
\def\ftra{{\left(\begin{array}{cc}0&1\\1&0\end{array}\right)}}
\def\htra{{\left(\begin{array}{cc}1&0\\1&1\end{array}\right)}}
\def\shear#1{{\left(\begin{array}{cc}1&0\\{#1}&1\end{array}\right)}}
\def\genp{P}
\newcommand{\elgen}{G}
\newcommand{\tr}{\mathrm{tr}\,}
\newcommand{\dg}[1]{\mathrm{dg}(#1)}
\newcommand{\glider}{G}
\newcommand{\fractal}{F}
\newcommand{\glidersca}{{\textbf{\emph{g}}}}
\newcommand{\fractalsca}{{\textbf{\emph{f}}}}
\newcommand{\simpleglider}{G_s}
\newcommand{\simpleglidersca}{\textbf{\emph{g}}_s}
\newcommand{\stab}{\mathcal{S}}
\newcommand{\ul}[1]{\underline{#1}}
\newcommand{\ket}[1]{\left|#1\right>}
\begin{document}
\title{Time Asymptotics and Entanglement Generation of Clifford Quantum Cellular Automata}
\author{Johannes G\"utschow}
\email{johannes.guetschow(at)itp.uni-hannover.de}
\affiliation{Institut f\"ur
Mathematische Physik, Technische Universit\"at Braunschweig,
Mendelssohnstra{\ss}e~3, 38106 Braunschweig, Germany}
\affiliation{Institut f\"ur Theoretische Physik, Universit\"at Hannover, Appelstra{\ss}e 2, 30167 Hannover}
\author{Sonja Uphoff}
\email{s.uphoff(at)tu-bs.de}
\affiliation{Institut f\"ur
Mathematische Physik, Technische Universit\"at Braunschweig,
Mendelssohnstra{\ss}e~3, 38106 Braunschweig, Germany}
\affiliation{Institut f\"ur Rechnergest\"utzte Modellierung im Bauingenieurwesen, Technische Universit\"at Braunschweig, Pockelsstra{\ss}e 3, 38106 Braunschweig, Germany}
\author{Reinhard F. Werner}
\email{reinhard.werner(at)itp.uni-hannover.de}
\affiliation{Institut f\"ur
Mathematische Physik, Technische Universit\"at Braunschweig,
Mendelssohnstra{\ss}e~3, 38106 Braunschweig, Germany}
\affiliation{Institut f\"ur Theoretische Physik, Universit\"at Hannover, Appelstra{\ss}e 2, 30167 Hannover}
\author{Zolt\'an Zimbor\'as}
\email{zimboras(at)lusi.uni-sb.de}
\affiliation{Institut f\"ur Theoretische Physik Universit\"at des Saarlandes, 66041 Saarbr\"ucken, Germany}
\affiliation{ISI Foundation, Quantum Information Theory Unit, Viale S. Severo 65, 10133 Torino, Italy}
\date{\today}
\pacs{03.65.Ud, 03.67.Lx, 02.60.-x, 05.45.Df}
\begin{abstract}
We consider Clifford Quantum Cellular Automata (CQCAs) and their time evolution. CQCAs are an especially simple type of Quantum Cellular Automata, yet they show complex asymptotics and can even be a basic ingredient for universal quantum computation. In this work we study the time evolution of different classes of CQCAs. We distinguish between periodic CQCAs, fractal CQCAs and CQCAs with gliders. We then identify invariant states and study convergence properties of classes of states, like quasifree and stabilizer states. Finally we consider the generation of entanglement analytically and numerically for stabilizer and quasifree states.
\end{abstract}
\maketitle
\tableofcontents
\pagestyle{fancy}
\section{Introduction}\label{sec:intro}
Quantum cellular automata (QCAs), i.e., reversible quantum systems
which are discrete both in time and in space \cite{Werner2004}, and
exhibit strictly finite propagation, have recently come under study
from different directions. On the one hand, they serve as one of the
computational paradigms for quantum computation, and it has been
shown that a certain one-dimensional QCA with twelve states per cell
can efficiently simulate all quantum computers \cite{Shepherd2006}. On the practical
side, QCAs are a direct axiomatization of the kind of quantum
simulator in optical lattices, which are under construction in
many labs at the moment \cite{Bloch,Meschede}. Related to this,
they can be seen as a paradigm of quantum lattice systems, in which
the consequences of locality, assumed in the idealized pure form of
strictly finite propagation, can be explored directly. Due to the
famous Lieb-Robinson bounds \cite{Lieb1972,Nachtergaele,Hastings2006,Eisert2006} this feature
is also present in continuous time models albeit in an approximate
form.

In all these settings, the time asymptotics for the iteration of the
QCA is of interest, and displays a curious dichotomy between a global
and a local point of view. On the one hand we are assuming
reversibility, so the global evolution is an automorphism taking pure
states to pure states. If we split the system into two subsystems,
e.g., a right half chain and a left half chain, then we expect the
QCA to generate entanglement from any initial product state. There is
a simple upper bound (see Section \ref{sec:bounds}) showing that the entanglement
growth is at most linear, and we find indeed that for the automata
studied in this paper this is the typical behavior. However,
utilizing this entanglement requires the control of larger and larger
regions. So from a local perspective, i.e., when we only consider the
restriction of the state to a finite region, we will not see this
increase. In fact, a typical behavior of the local restrictions is
the convergence to the maximally mixed state. That for large times
the state seems globally pure and locally completely mixed is no
contradiction: it merely reflects the fact that any local system
becomes maximally entangled with its environment.

Stationary states are in some sense the final result of an asymptotic
evolution. Again the local analysis makes it clear that the totally
mixed state is invariant for any reversible cellular automaton. In
general there may be many more invariant states, among them some,
which are not only ergodic (i.e., extremal in the set of all
translation invariant states), but even pure (i.e., extremal in
the set of all states). For a special class we consider here, we
exhibit a rich set of such states.

The very fact that QCAs can serve as a universal computational model
suggests that asymptotic questions cannot be easily answered in full
generality. Therefore, in this paper we look at a subclass of QCAs,
the Clifford Quantum Cellular Automata (CQCAs). In this case much of
the essential information can be obtained by studying a classical
cellular automaton, which even turns out to be of a linear type.
Consequently, we can answer some questions exhaustively. The drawback
is, of course, that universal computation is not possible in this
class. However, we believe that some typical features of QCA asymptotics
can be studied in this theoretical laboratory .

Our paper is organized as follows. We begin with a short introduction to CQCAs and their representation as $2\times 2$-matrices in Section \ref{sec:cqca_intro}. Then we derive a general theory of the time evolution of CQCAs in Section \ref{sec:characterization}. We show that CQCAs can be divided in three major classes by their time evolution: periodic automata (Section \ref{sec:periodic}), automata which act as lattice translations on special observables we call gliders (Section \ref{sec:gliders}), and fractal automata (Section \ref{sec:frac}), whose space-time picture shows self-similarity on large scales. The class of a CQCA is determined by the trace of its matrix. A constant trace means the automaton is periodic, a trace of the form $u^{-n}+u^{n}$ indicates gliders that move $n$ steps on the lattice each time step. All other CQCAs show fractal behavior. We prove that all automata with gliders and $n=1$ are equivalent in the sense that they can be transformed into each other by conjugation with other CQCAs. This turns out not to be true for glider CQCAs with $n>1$. 

Using the results for the observable asymptotics, in Section \ref{sec:stationary} convergence and invariance of translation-invariant states are analyzed. For periodic CQCAs the construction of mixed invariant states is straightforward (Section \ref{sec:inv_periodic}). There are also pure stabilizer states that are invariant with respect to periodic CQCAs (Section \ref{sec:stab_inv}). Non-periodic CQCAs can not leave stabilizer states invariant. In fact, for fractal CQCAs the only known invariant state is the tracial state. For glider CQCAs invariant states exist, and are constructed as the limit of the time evolution of initial product states in Section \ref{sec:stat_conv_prod}. Finally we consider $n=1$ glider automata and states which are quasifree on the canonical anticommutation relations (CAR) algebra (Section \ref{sec:stat_conv_qf}). We employ the Araki-Jordan-Wigner transformation (Section \ref{sec:ajw}) to transfer the glider CQCA to the CAR algebra and study its action as a Bogoliubov transformation on quasifree states; we study both convergence (Section \ref{sec:conv_qf}) and invariance (Section \ref{sec:stat_qf}).

In the last Section (Section \ref{sec:entanglement}) the entanglement generation of CQCAs is considered. First we derive a general linear upper bound for entanglement generation of QCAs. We then prove that in the translation-invariant case the bound is more restrictive and can be saturated by CQCAs acting on initially pure stabilizer states. For CQCAs acting on stabilizer states the asymptotic entanglement generation rate is governed by the order of the trace polynomial (Section \ref{sec:ent_stab}). We then examine a set of quasifree states interpolating between a product state and a state which is invariant with respect to the standard glider CQCA (Section \ref{sec:ent_qf}). It is shown numerically, that the entanglement generation is linear also in this case, but with a slope that can be arbitrarily small.
\section{Time evolution of the observable algebra}\label{sec:asymptotics}
  CQCAs show a variety of different time evolutions. In this section we develop criteria to predict the time evolution by characteristics of the matrix of the corresponding symplectic cellular automaton (SCA).
\subsection{A short introduction to CQCAs}
\label{sec:cqca_intro}
Clifford Quantum Cellular Automata are a special class of Quantum Cellular Automata first described in \cite{SchlingemannCQCA}. As the name indicates, they are QCAs that use the Clifford group operations. They can be defined for arbitrary lattice dimensions and prime cell dimensions. Here we will only consider the case of a one dimensional infinite lattice and cell dimension two (qubits). Thus we deal with an ordinary spin-chain. QCAs are translation-invariant operations which we define on a quasi-local observable algebra \cite{Werner2004}. Reversible QCAs are automorphisms. So in our case CQCAs are translation-invariant automorphisms of the spin-$\frac{1}{2}$ chain observable algebra $\FA=\otimes_{i=-\infty}^{\infty}\FA_i$, where $\FA_i\cong M_2$. In explicit this means that a CQCA $\CQCA$ commutes with the lattice translation $\tau$ and leaves the product structure of observables invariant:
\begin{equation*}
  \CQCA(AB)=\CQCA(A)\CQCA(B),\quad\forall A,B\in\FA
\end{equation*}
As shown in \cite{Werner2004} a QCA is fully specified by its local transition rule $\CQCA_0$, i.e., the picture of the one-site observable algebra. In our case the Pauli matrices form a basis of this algebra, so their pictures specify the whole CQCA. There is one important restriction on the set of possible images. They have to fulfill the commutation relations of the original Pauli matrices on the same site and also on all other sites. As a QCA in general has a propagation on the lattice, one cell observables are mapped to observables on a neighborhood $\CN$ of the site. Thus neighboring one-site observables may overlap after one time step but still have to commute. This imposes conditions on the local rule. CQCAs are defined as follows:
\begin{defi}
  A Clifford Quantum Cellular Automaton $\CQCA$ is an automorphism of the quasi-local observable algebra of the infinite spin chain that maps tensor products of Pauli matrices to multiples of tensor products of Pauli matrices and commutes with the lattice translation $\tau$. 
\end{defi}

We now want to find a classical description of the CQCA. It is well known that Clifford operations can be simulated efficiently by a classical computer. Therefore it is not surprising that an efficient classical description of CQCAs exists. This description was introduced in \cite{SchlingemannCQCA}. We will only give a short overview of the topic, for proofs and details we refer to the literature. 

We use the (finite) tensor products of Pauli matrices as a basis of the observable algebra. Every CQCA $\CQCA$ maps those tensor products to multiples of tensor products of Pauli matrices. The factor can only be a complex phase, that can be fixed uniquely by the phase for single cell observables (single Pauli matrices). We can thus describe the action of the CQCA $\CQCA$ on Pauli matrices by a classical cellular automaton $\sca$ acting on their labels ``$1,2,3$''. We could keep track of the phase separately, but for our analysis this is unnecessary.

Mathematically we describe this correspondence as follows: The Pauli matrices correspond to a Weyl system over a discrete phase space where all operations are carried out modulus two. For one site we have
\begin{equation*}
  \sigma_1=X=\weyl{1,0},\quad \sigma_2=Y=i\weyl{1,1},\quad \sigma_3=Z=\weyl{0,1},\quad \sigma_0=\Id=\weyl{0,0}.
\end{equation*}
Tensor products of these operators are constructed via
\begin{equation*}
  \weyl{\xi}=\bigotimes_{x\in\ZZ}\weyl{\xi(x)}
\end{equation*}
where $\xi=(\xi_+,\xi_-)$ is a tuple of two binary strings which differ from $0$ on only finitely many places and $\xi(x)$ is its value at position $x$, e.g.\ $\tbinom{1}{0}$. Thus the tensor product is well defined. 
Before we continue with the mathematical definition, we want to illustrate the classical description by a simple example: 

\begin{example}\rm
  \label{exam:cqca}
  We define our CQCA on the observable algebra $\FA$ of a spin chain by the rule
  \begin{equation*}
    \begin{array}{rcc}
      &\CQCA_i:& \FA_i\to \FA_{\CN+i},\qquad\qquad\\
      \CQCA\sigma_1^i &=& \sigma_3^i,\\
      \CQCA\sigma_3^i &=& \sigma_3^{i-1}\otimes\sigma_1^i\otimes\sigma_3^{i+1}.
    \end{array}
  \end{equation*}
  The image of $\sigma_2$ follows from the product of the images of $\sigma_1$ and $\sigma_3$:
  \begin{equation*}
    \CQCA\sigma_2^i= -\sigma_3^{i-1}\otimes\sigma_2^i\otimes\sigma_3^{i+1}.
  \end{equation*}
  $\CQCA$ has to be an automorphism to be a CQCA. To verify this we check if the commutation relations are preserved: 
  \begin{eqnarray*}
    \phantom{b}[\CQCA\sigma_1^i,\CQCA\sigma_1^j]&=&[\sigma_3^i,\sigma_3^j]=0,\\
    \phantom{b}[\CQCA\sigma_3^i,\CQCA\sigma_3^j]&=&[\sigma_3^{i-1}\otimes\sigma_1^i\otimes\sigma_3^{i+1},\sigma_3^{j-1}\otimes\sigma_1^j\otimes\sigma_3^{j+1}]=0,        
  \end{eqnarray*}
  and
  \begin{eqnarray*}
    &[\CQCA\sigma_3^i,\CQCA\sigma_1^j]=[\sigma_3^{i-1}\otimes\sigma_1^i\otimes\sigma_3^{i+1},\sigma_3^j]=
      0&i\ne j,\\
    &\{\CQCA\sigma_3^i,T\sigma_1^j\}=
      \{\sigma_3^{i-1}\otimes\sigma_1^i\otimes\sigma_3^{i+1},\sigma_3^j\}=
    0&i= j\; .  
  \end{eqnarray*}
  This automaton will be used extensively in the following parts of the paper, so we give it the name $\simpleglider$.
  If we think of the CQCA $\simpleglider$ as a classical automaton acting on the labels of the Pauli matrices we can illustrate the evolution (for one time step) of the observable $\sigma_3^{-1}\otimes\sigma_2^0\otimes\sigma_1^1$ as follows (the underlined labels are situated at the origin):
  \begin{equation*}
    \simpleglidersca(3\,\ul{2}\,1)=\simpleglidersca(3\,\ul{0}\,0)\cdot \simpleglidersca(0\,\ul{2}\,0)\cdot \simpleglidersca(0\,\ul{0}\,1)=
    \begin{array}{ccccc}
      &3&1&\ul{3}&\\
      \odot&&3&\ul{-2}&3\\
      \odot&&&&3\\
      \hline
      =&3&-i2&\ul{i1}&0
    \end{array}
    =(3\,2\,\ul{1}).
  \end{equation*}
  We observe, that the observable only moves on the lattice under the action of the CQCA $\CQCA$. We call observables with this property gliders. Their existence can be observed easily, when we consider the space-time images of one-cell observables. $\sigma_1$ and $\sigma_3$ generate ``checkerboards'' of $\sigma_1$ and $\sigma_3$ matrices. As $\sigma_1$ is mapped to $\sigma_3$ in the first step, the $\sigma_1$-checkerboard is the same as the $\sigma_3$-checkerboard shifted one step in time. If we additionally shift in the space direction by one cell, the checkerboards are exactly the same up to two diagonals and thus cancel out as shown in Figure \ref{fig:glider}. We thus produced a very simple observable on which the automaton acts as a translation, a basic glider. 
  \begin{figure}[htbp]
    \begin{center}
      \input{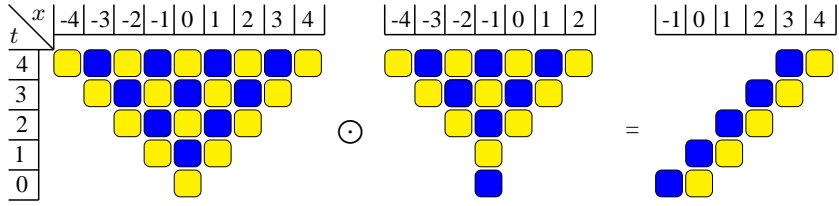}
      \caption{Glider of the example CQCA (\ref{eq:simpleglider}). The blue boxes represent $\sigma_1$, the yellow ones $\sigma_3$.}
      \label{fig:glider}
    \end{center}
  \end{figure}
  
  Another interesting property of this automaton is the fact that it maps the ``all spins up'' product state to a one-dimensional cluster-state, which is a one-dimensional version of the two-dimensional resource-state for the ``One Way Quantum Computing'' scheme by Raussendorf and Briegel \cite{Raussendorf2001}. It is also the basic ingredient of a scheme of ``quantum computation via translation-invariant operations on a chain of qubits'' by Raussendorf \cite{Raussendorf2005a}. In a similar way, the update rule $G_s$ (but with $\sigma_1$ and $\sigma_3$ exchanged) has appeared as time-evolution of spin chains implemented by a Hamiltonian that is subjected to periodic quenches \cite{fitzsimons2006, eisler}, and it has even been realized
experimentally in an NMR-System \cite{fitzsimons2007}. 
   
  In the phase space picture we can describe the automaton by a $2\times 2$-matrix with polynomial entries. In phase space our CQCA-rule reads
  \begin{eqnarray*}
    \simpleglidersca\binom{1}{0}&=&\left(\begin{array}{ccc}0&0&0\\0&1&0\end{array}\right),\\
    \simpleglidersca\binom{0}{1}&=&\left(\begin{array}{ccc}0&1&0\\1&0&1\end{array}\right),\\
    \simpleglidersca\binom{1}{1}&=&\left(\begin{array}{ccc}0&1&0\\1&1&1\end{array}\right).
  \end{eqnarray*}
  Now we transform the binary strings to Laurent polynomials by indicating the position by a multiplication with a variable $\varia$ and add all terms from the different positions to a Laurent polynomial:\footnote{More formally, we perform an algebraic Fourier transformation \cite{SchlingemannCQCA} mapping the binary string $\xi(x)$ to the Laurent polynomial $\hat{\xi}(u)=\sum_{x\in\mathbb{Z}} \xi(x)u^x$.}
  \begin{eqnarray*}
    \simpleglidersca\binom{1}{0}&=&\binom{0}{1},\\
    \simpleglidersca\binom{0}{1}&=&\binom{1}{u^{-1}+u},\\
    \simpleglidersca\binom{1}{1}&=&\binom{1}{u^{-1}+1+u}.
  \end{eqnarray*}
  We arrange the images of $\tbinom{1}{0}$ and $\tbinom{0}{1}$ in a $2\times 2$-matrix
  \begin{equation}
    \label{eq:simpleglider}
    \simpleglidersca=\left(\begin{array}{cc}
        0&1\\
        1&\varia^{-1}+\varia
      \end{array}
    \right).
  \end{equation}
  The image under $\simpleglider$ of an arbitrary tensor product of Pauli matrices is now determined by the multiplication of the corresponding vector of polynomials by the matrix representation $\simpleglidersca$ of $\simpleglider$. We will later argue that this works for all CQCA. 
  \fin
\end{example}

Now we come back to the mathematical definition of CQCAs: 
The Weyl operators fulfill the relation 
\begin{equation*}
  \weyl{\xi + \eta}=(-1)^{\eta_+\xi_-}\weyl{\xi}\weyl{\eta}
\end{equation*}
and therefore the commutation relation 
\begin{equation*}
  \weyl{\xi}\weyl{\eta}=(-1)^{\xi_+\eta_--\xi_-\eta_+}\weyl{\eta}\weyl{\xi}.
\end{equation*}
In both cases terms of the type $\xi_+\eta_-$ are scalar products where the addition is carried out modulo~$2$.
The arguments of the Weyl operators are elements of a vector space over the finite field $\ZZ_2$ which we call phase space and thus commute. Of course the corresponding Weyl operators do not necessarily commute, but they always commute or anticommute. Their commutation relations are encoded in the symplectic form $\sigma(\xi,\eta)=\beta(\xi,\eta)-\beta(\eta,\xi)=\xi_+\eta_--\xi_-\eta_+\in\ZZ_2$. As an automorphism the CQCA leaves the commutation relations invariant. A representation of the CQCA on phase space therefore has to leave the symplectic form invariant. Such a translation-invariant symplectic map is called symplectic cellular automaton (SCA). We can find a SCA and an appropriate phase function $\phase{\xi}$ for every CQCA.
\begin{prop}[\cite{SchlingemannCQCA}]
  \label{prop:phaseCQCA}
  Let $\CQCA$ be a CQCA. Then we can write 
  \begin{equation}
    \CQCA[\weyl{\xi}]=\phase{\xi}\weyl{\sca\xi}
  \end{equation}
  with a symplectic cellular automaton $\sca$ and a translation invariant phase function $\phase{\xi}$ which fulfills
  \begin{equation*}
    \phase{\xi+\eta}=\phase{\xi}\phase{\eta}(-1)^{\beta(\xi,\eta)-\beta(\sca\xi,\sca\eta)}
  \end{equation*}
  as well as $|\phase{\xi}|=1\;\forall\xi$. Furthermore, $\lambda(\xi)$ is uniquely determined for all $\xi$ by the choice of $\lambda$ on one site. 
\end{prop}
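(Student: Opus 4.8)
The plan is to read off the pair $(\sca,\lambda)$ directly from the definition of a CQCA and then to verify each of the three asserted properties—symplecticity of $\sca$, the cocycle relation for $\phase{\cdot}$, and uniqueness—from the Weyl calculus, using only that $\CQCA$ is a translation-commuting automorphism with finite propagation. First I would extract the data: by definition $\CQCA$ sends every Weyl operator $\weyl{\xi}$ to a multiple of a Weyl operator, so one may write $\CQCA[\weyl{\xi}]=\phase{\xi}\,\weyl{\sca\xi}$, which defines a map $\sca$ on phase space and a scalar $\phase{\xi}\in\mathbb{C}$. Since a $C^*$-algebra automorphism is isometric and Weyl operators are unitary, $|\phase{\xi}|=\|\CQCA[\weyl{\xi}]\|=1$; and $\CQCA[\Id]=\Id$ forces $\phase{0}=1$ and $\sca 0=0$. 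Because $\CQCA$ is a QCA it has a finite neighbourhood $\CN$, so $\sca$ carries finitely supported configurations to finitely supported ones with support growth bounded by $\CN$; together with the translation invariance established below this is exactly the statement that $\sca$ is an SCA, represented by a $2\times2$ matrix of Laurent polynomials.

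Next I would exploit the two structural identities for Weyl operators. Applying $\CQCA$ to the commutation relation $\weyl{\xi}\weyl{\eta}=(-1)^{\sigma(\xi,\eta)}\weyl{\eta}\weyl{\xi}$, the central scalars $\phase{\xi},\phase{\eta}$ cancel, and comparing the resulting relation for $\weyl{\sca\xi},\weyl{\sca\eta}$ with their intrinsic Weyl relation forces $\sigma(\sca\xi,\sca\eta)=\sigma(\xi,\eta)$ in $\ZZ_2$, so $\sca$ preserves the symplectic form. Applying $\CQCA$ to the composition law $\weyl{\xi+\eta}=(-1)^{\beta(\eta,\xi)}\weyl{\xi}\weyl{\eta}$ and using that Weyl operators with distinct arguments are linearly independent, one reads off simultaneously that $\sca(\xi+\eta)=\sca\xi+\sca\eta$, i.e.\ $\sca$ is $\ZZ_2$-linear, and an identity for the scalars; collecting the sign contributions, using $\sigma(\sca\xi,\sca\eta)=\sigma(\xi,\eta)$ and that the sign of a $\ZZ_2$-exponent is immaterial, this identity is precisely $\phase{\xi+\eta}=\phase{\xi}\phase{\eta}(-1)^{\beta(\xi,\eta)-\beta(\sca\xi,\sca\eta)}$. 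Translation invariance follows by applying $\CQCA\tau=\tau\CQCA$ to $\weyl{\xi}$: the translation acts on phase space by the shift and trivially on scalars, whence $\phase{\tau\xi}=\phase{\xi}$ and $\sca\tau\xi=\tau\sca\xi$; running the same construction for the CQCA $\CQCA^{-1}$ produces an inverse SCA, so $\sca$ is an invertible SCA as claimed.

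For uniqueness, suppose $\phase{\cdot}$ and $\phase{\cdot}'$ both satisfy the cocycle relation with the \emph{same} $\sca$ and agree on the one-site configurations. Their ratio $\mu(\xi)=\phase{\xi}/\phase{\xi}'$ then satisfies $\mu(\xi+\eta)=\mu(\xi)\mu(\eta)$, since the sign prefactors depend only on $\xi,\eta,\sca$ and cancel; thus $\mu$ is a $U(1)$-valued character of the additive group of finitely supported configurations. It is translation invariant (a ratio of translation-invariant functions) and equals $1$ on the two generators $\binom{1}{0},\binom{0}{1}$ at the origin, hence on all their translates, hence on the whole group they generate. Therefore $\mu\equiv1$, so $\phase{\cdot}$ is completely determined by its values on one site.

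I expect the only real friction to be the sign bookkeeping in the second step: one must keep the two bilinear forms $\beta$ and $\sigma$ distinct and track the order of arguments, all of which collapses once one works in $\ZZ_2$ and invokes $\sigma(\sca\xi,\sca\eta)=\sigma(\xi,\eta)$. The conceptual point—that pushing the Weyl calculus through the automorphism $\CQCA$ turns it into an affine symplectic map equipped with a cocycle—is immediate, and the details are those of \cite{SchlingemannCQCA}.
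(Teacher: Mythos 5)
The paper does not actually prove this proposition: it is imported verbatim from \cite{SchlingemannCQCA}, and the surrounding text explicitly defers all proofs of the phase-space formalism to that reference, so there is no in-paper argument to compare against. Your reconstruction is correct and is essentially the standard derivation. The definition of a CQCA used here (Pauli products go to scalar multiples of Pauli products) makes the pair $(\sca,\lambda)$ well defined by linear independence of the Weyl operators; unitarity of automorphic images gives $|\phase{\xi}|=1$; and pushing the two Weyl identities through the automorphism yields additivity and symplecticity of $\sca$ together with the cocycle identity for $\lambda$. Your sign bookkeeping does close: the composition law produces the exponent $\beta(\eta,\xi)+\beta(\sca\eta,\sca\xi)$, and since $\beta(\eta,\xi)=\beta(\xi,\eta)-\sigma(\xi,\eta)$ and $\sigma(\sca\xi,\sca\eta)=\sigma(\xi,\eta)$, this is congruent mod $2$ to $\beta(\xi,\eta)-\beta(\sca\xi,\sca\eta)$, which is exactly the stated form. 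The uniqueness argument via the character $\mu=\lambda/\lambda'$ is also sound, because the translates of the one-site vectors generate the additive group of finitely supported configurations. The one point you pass over quickly is that $\CQCA^{-1}$ again has finite propagation (needed for $\sca^{-1}$ to be an SCA and hence for $\sca$ to be invertible as such); this is not immediate from the definition but is the standard structure result for reversible QCAs in \cite{Werner2004}, so the appeal is legitimate.
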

In the following analysis of CQCAs we neglect the global phase and consider the symplectic cellular automata only. As we can always find appropriate phase functions all results for SCAs translate to the world of CQCAs directly. 
We have already seen in Example \ref{exam:cqca}, that there exists a very convenient representation of CQCAs as $2\times 2$-matrices with polynomial entries.
\begin{defi}
  $\fring$ is the ring of Laurent polynomials over $\ZZ_2$. $\subfring$ is the subring of $\fring$, which consists of all polynomials, which are reflection invariant with center $u=0$. 
\end{defi}
\begin{thm}
  \label{thm:cqca_prop}
  Every CQCA $\CQCA$ is represented up to a phase by a unique $2\times 2$-matrix $\sca$ with entries from $\fring$. Such a matrix represents a CQCA if and only if
  \begin{itemize}
    \item $\det(\sca)=\varia^{2a},\,a\in\ZZ$; 
    \item all entries are symmetric polynomials centered around the same (but arbitrary) lattice point $a$;
    \item the entries of both column vectors, which are the pictures of $(1,0)$ and $(0,1)$, are coprime.
  \end{itemize}
  \begin{proof}
    We will only give a sketch of the proof. For details see \cite{SchlingemannCQCA}. The connection between CQCAs and SCAs was already established in Proposition \ref{prop:phaseCQCA}. What remains to show is that SCAs are linear transformations over $\fring^2$. The application of a SCA to a vector in phase space can be described as the multiplication of this vector with a matrix representing the SCA from the left. The product is the convolution of binary strings. By the algebraic Fourier transform $\hat f(\varia)=\sum_{x\in\ZZ}f(x)u^x$, which maps the vectors of binary strings onto vectors with entries from the ring $\fring$ of Laurent polynomials over the finite field $\ZZ_2$, the convolution of strings translates into the multiplication of polynomials. Thus in this picture the application of the SCA to a phase space vector is just a common matrix multiplication, where all operations are carried out modulo $2$. If we translate the symplectic form and the condition that it has to be left invariant to the polynomial picture we retrieve the above conditions on the matrix $\sca$.
  \end{proof}  
\end{thm}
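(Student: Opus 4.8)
The plan is to peel off the phase via Proposition~\ref{prop:phaseCQCA} and argue entirely about the underlying symplectic cellular automaton (SCA) $\sca$, reducing the statement to linear algebra over $\fring$: one must show that an SCA is the same thing as a $2\times 2$ matrix over $\fring$ preserving the symplectic form, and that the latter condition is equivalent to the three listed items. \textbf{Step 1 (SCAs are matrices over $\fring$).} Apply the automorphism $\CQCA$ to the Weyl relation $\weyl{\xi+\eta}=(-1)^{\eta_+\xi_-}\weyl{\xi}\weyl{\eta}$, substitute $\CQCA\weyl{\xi}=\phase{\xi}\weyl{\sca\xi}$ and the rearranged relation $\weyl{\sca\xi}\weyl{\sca\eta}=(-1)^{(\sca\eta)_+(\sca\xi)_-}\weyl{\sca\xi+\sca\eta}$; since distinct Weyl operators are linearly independent and $|\phase{\xi}|=1$, comparing arguments forces $\sca(\xi+\eta)=\sca\xi+\sca\eta$, so $\sca$ is $\ZZ_2$-linear, and translation invariance means it commutes with the shift. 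Under the algebraic Fourier transform $\hat f(\varia)=\sum_x f(x)\varia^x$ (finitely supported binary strings become elements of $\fring$, the shift becomes multiplication by $\varia$), $\sca$ becomes an element of $\mathrm{End}_{\fring}(\fring^2)$, hence a unique $2\times 2$ matrix over $\fring$ whose columns are the images of $\binom10$ and $\binom01$, i.e.\ the pictures of $X$ and $Z$; this also gives uniqueness of the matrix up to the phase.

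\textbf{Step 2 (symplectic invariance as a matrix identity).} In the polynomial picture $\sigma(\xi,\eta)$ is the coefficient of $\varia^0$ in $\xi^*J\eta$, where $J=\ftra$, $\overline{p}(\varia):=p(\varia^{-1})$, and $\xi^*:=\overline{\xi}^{\,T}$ (so $\sca^*:=\overline{\sca}^{\,T}$, with $\overline{\cdot}$ taken entrywise). Then $\sigma(\sca\xi,\sca\eta)=\sigma(\xi,\eta)$ for all $\xi,\eta$ is equivalent to saying $\xi^*(\sca^*J\sca-J)\eta$ has vanishing $\varia^0$-coefficient for all $\xi,\eta$; testing on $\xi=\varia^k e_i$, $\eta=\varia^l e_j$ shows this is equivalent to the single identity $\sca^*J\sca=J$.

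\textbf{Step 3 (unwinding $\sca^*J\sca=J$).} Taking determinants and using that the units of $\fring$ are exactly the monomials, one gets $\det\sca=\varia^k$ for some $k\in\ZZ$; hence $\sca$ is invertible with $\sca^{-1}=J\sca^*J$ (as $J^2=\Id$). Comparing this with $\sca^{-1}=\varia^{-k}\,\mathrm{adj}(\sca)$ entry by entry yields $\overline{p}=\varia^{-k}p$ for every entry $p$, i.e.\ the coefficients of each entry satisfy $c_j=c_{k-j}$, so each entry is reflection symmetric with the common center $k/2$. The off-diagonal entries of $\sca^*J\sca=J$ read $\overline{\alpha}\delta+\overline{\gamma}\beta=1$ and its reflection, which forces the two entries of each column to be coprime (a common non-unit divisor would, after applying $\overline{\cdot}$, divide $1$). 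Finally $k$ is even: if $k$ were odd, symmetry about $k/2$ pairs the monomials of a nonzero entry into pairs $\varia^j\leftrightarrow\varia^{k-j}$ with $j\neq k-j$ always, so such an entry has an even number of terms and is divisible by $1+\varia$, contradicting column coprimality (and a vanishing column entry would force its partner to be a monomial, which is not symmetric for odd $k$). Writing $k=2a$ gives (i)--(iii). Conversely, (i)--(ii) give $\overline{\sca}=\varia^{-2a}\sca$, hence $\sca^*=\varia^{-2a}\sca^T$, and the elementary $2\times2$ identity $\sca^TJ\sca=\det(\sca)\,J$ (valid over any commutative ring with $-1=1$, since there $J=\left(\begin{smallmatrix}0&1\\-1&0\end{smallmatrix}\right)$ is the standard symplectic form) gives $\sca^*J\sca=\varia^{-2a}\varia^{2a}J=J$; since $\det\sca$ is a unit, $\sca$ and $\sca^{-1}$ are both SCAs, and by Proposition~\ref{prop:phaseCQCA} a compatible phase function exists, so the matrix is realized by a CQCA.

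\textbf{Main obstacle.} Steps~1 and the converse computation are routine; the two delicate points are getting the involution $\varia\mapsto\varia^{-1}$ correctly into the symplectic identity of Step~2 (a bare transpose without reflection would be wrong), and the parity observation in Step~3. The latter is really the heart of the matter: it is precisely the incompatibility of column coprimality with a half-integer reflection center (the ubiquitous factor $1+\varia$) that both pins the common center to an honest lattice site and forces $\det\sca$ to be an \emph{even} power of $\varia$. It is worth noting that item (iii) is in fact already implied by (i)—a monomial has no non-unit divisors—so the coprimality clause is a practically convenient criterion rather than an independent constraint.
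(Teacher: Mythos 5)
Your proposal is correct and follows the same route the paper sketches (strip the phase via Proposition~\ref{prop:phaseCQCA}, pass to the polynomial picture by the algebraic Fourier transform, and translate invariance of the symplectic form into the matrix identity $\sca^*J\sca=J$); the difference is only that you actually carry out the computation the paper defers to \cite{SchlingemannCQCA}, including the two genuinely delicate points — the involution $\varia\mapsto\varia^{-1}$ in the adjoint and the parity argument pinning the reflection center to a lattice site — and you correctly observe that coprimality of the columns is already forced by $\det\sca$ being a unit. The one ingredient you still import without proof is the converse of Proposition~\ref{prop:phaseCQCA}, namely that every symplectic cellular automaton admits a compatible phase function and hence lifts to a CQCA; this is exactly the part the paper also takes from \cite{SchlingemannCQCA}, so your argument is at least as complete as the original.
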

We can further simplify these statements by only considering automata centered around $0$. The lattice translation $\tau$ is a SCA which by definition commutes with all other SCAs. Its determinant is given by $\det(\tau)=\varia^2$. Therefore every SCA can be written as the product of a lattice translation and an automaton centered around $0$ which has determinant one. We call these automata centered symplectic cellular automata (CSCAs) and in the following sections we point our focus to those. 

CSCAs and CQCAs each form a group. This group is generated by a countably infinite set of basis automata. 
The CSCA form the group $\Gamma=\SL{2}{\subfring}$, which is the group of all $2\times 2$-matrices with determinant $1$ over the ring of centered Laurent polynomials with binary coefficients. The group $\Gamma_0$ is the group of local automata. Their generators are 
\begin{equation*}
  \label{eq:localfourier}
  H=\htra,\quad \genp=\ftra.
\end{equation*}
Additionally we have the shear transformations
\begin{equation*}
  \elgen_n:=\shear{\varia^n+\varia^{-n}},\,n\in\NN.
\end{equation*}
which complete the set of generators of $\Gamma$. For proofs see \cite{SchlingemannCQCA}.
\subsection{Classification of Clifford quantum cellular automata}
\label{sec:characterization}
The time evolution of a CQCA is determined up to a phase by the powers of the matrix of the corresponding CSCA. We will only consider the evolution of single cell observables, as any other observable can be represented by products and sums of these. This product structure is invariant under the action of the automaton, because it is an automorphism of the observable algebra. This means that when we discuss the time evolution of a CQCA $\CQCA$, we will consider the action of powers $\sca^n$ of the matrix of the CSCA on phase space vectors which only contain constants. For example the image of $\sigma_1=\weyl{1,0}$ after $n$ time steps of $\CQCA$ is given by the first column vector of $\sca^n$ (and a global phase). The matrix $\sca$ does not always have an eigenvalue, because it is a matrix over a ring without multiplicative inverses for all elements. But for some of the automata eigenvalues do exist. These automata are called glider-automata, because on a special set of observables, the gliders, they act as lattice translations. 

We will prove that if the trace is a polynomial consisting of only two summands, i.e., it is of the form $\tr \sca=\varia^{-n}+\varia^{n}$, two eigenvectors exist and the automaton has gliders. If this does not hold, we can distinguish two cases. The trace can be either a constant, or an arbitrary symmetric polynomial. In the first case the automaton is periodic, in the second case it generates a time evolution which has fractal properties. As the case of periodic automata is not very interesting and the case of fractal automata will be covered in another paper \cite{Nesme}, we focus on automata with gliders. We prove that all automata with gliders which move one step on the lattice at every timestep are equivalent. We also give an example to show that this is not the case for gliders that move more than one step.
\subsubsection{Automata with gliders}
\label{sec:gliders}
We will first define our notion of a glider. Here we consider the case of qubits only, but with minor alterations all results of this section also hold for qudits with prime dimension. This extension and also all proofs which are omitted here can be found in \cite{Guetschow2008a,Uphoff2008}.
\begin{defi}
  A glider is an observable, on which the CQCA acts as a lattice translation. In the Laurent polynomial picture a translation is a multiplication by $\varia^k,\,k\in\ZZ$.
\end{defi}
We have already seen this behavior in Example \ref{exam:cqca}.
Now we determine the conditions a CQCA has to fulfill to have gliders. In general we can not diagonalize the matrices of the corresponding CSCA, because all our calculations are over a finite field and the entries are only polynomials in $\varia$ which are centered palindromes. The polynomials are usually not invertible, so the equations which occur in the diagonalization cannot be solved mechanically. Furthermore, the diagonal matrix would not correspond to a CSCA as $\varia^{n}$ and $\varia^{-n}$ are not centered palindromes. Hence we take a different approach.
First, let us introduce some terms: We call a glider $\xi=(\xi_+,\xi_-)$ a \textit{minimal glider} iff its two entries in phase space $\xi_+$, $\xi_-$ have no common non-invertible divisor\footnote{The phase space vector of a minimal glider is maximal with respect to the notation introduced in \cite{SchlingemannCQCA}.}. The \textit{wedge-product} of two phase space vectors shall be defined as $\xi \wedge \eta=\xi_+ \eta_- -\eta_+ \xi_-$. As we deal with qubits here, addition and multiplication of polynomials are carried out modulo two and $\xi \wedge \eta=\xi_+ \eta_- +\eta_+ \xi_-$.
We define the \textit{involution} of a polynomial $p$ as the substitution of $u$ by $u^{-1}$ and denote it by $\bar{p}$.
Finally we will also need the following proposition:

\begin{prop}[\cite{Guetschow2008a,Uphoff2008}]\label{mono}
In the ring $\fring$ of Laurent polynomials over the finite field $\ZZ_2$, the only invertible elements are monomials. 
\end{prop}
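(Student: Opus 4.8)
The plan is to argue by a valuation/degree estimate, exploiting only that $\ZZ_2$ is a field and hence has no zero divisors. For a nonzero $p\in\fring$, write $p=\sum_i c_i\varia^i$ (a finite sum, $c_i\in\ZZ_2$) and set $\deg_+(p)=\max\{i:c_i\neq 0\}$ and $\deg_-(p)=\min\{i:c_i\neq 0\}$; define the \emph{width} $w(p)=\deg_+(p)-\deg_-(p)\geq 0$. The key observation I would establish first is that $w$ is additive under multiplication: for nonzero $p,q$, the coefficient of $\varia^{\deg_+(p)+\deg_+(q)}$ in the product $pq$ equals the product of the two leading coefficients, and a product of two nonzero elements of the field $\ZZ_2$ is nonzero; the same reasoning applies to the lowest terms. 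Hence $\deg_\pm(pq)=\deg_\pm(p)+\deg_\pm(q)$, and therefore $w(pq)=w(p)+w(q)$ (and in particular $pq\neq 0$, so $\fring$ is a domain).

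Then I would conclude: if $p$ is invertible, say $pq=1$, then $w(p)+w(q)=w(1)=0$, and since both widths are non-negative, $w(p)=0$. A Laurent polynomial of width $0$ has exactly one nonzero term, i.e.\ $p=c\,\varia^k$ with $c\in\ZZ_2\setminus\{0\}$, so $c=1$ and $p=\varia^k$ is a monomial. The converse is immediate, since $\varia^k\varia^{-k}=1$, which establishes the equivalence.

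I do not expect any real obstacle here: the only point requiring care is invoking the absence of zero divisors in $\ZZ_2$ precisely where one claims the top and bottom terms of $pq$ do not cancel — everything else is bookkeeping with $\deg_\pm$. As an alternative one could note that $\fring$ is the localization of the integral domain $\ZZ_2[\varia]$ at the multiplicative set $\{\varia^n\}_{n\geq 0}$, whose units are the units of $\ZZ_2[\varia]$ (namely $\{1\}$) multiplied by the $\varia^n$; but the self-contained width argument above avoids citing localization machinery and is probably the cleanest route for this short lemma.
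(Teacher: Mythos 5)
Your argument is correct: the additivity of $\deg_+$ and $\deg_-$ (hence of the width) under multiplication follows exactly as you say from the absence of zero divisors in $\ZZ_2$, and width zero plus invertibility of the single coefficient forces $p=\varia^k$. The paper itself gives no proof of this proposition — it is cited from the references \cite{Guetschow2008a,Uphoff2008} — so there is nothing internal to compare against; your degree/width argument is the standard one and fills the gap cleanly (over $\ZZ_2$ it even simplifies, since the only nonzero coefficient is $1$, so the leading and trailing coefficients of a product are automatically $1$).
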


Now we have all we need to state following theorems:
\begin{prop}
  \label{prop:glider_sca_props}
  Given a CSCA $\sca$ and a non-zero phase space vector $\xi$ with $\sca\xi=\varia^n\xi$, $n\in\NN$ the following is true:
  \begin{enumerate}
    \item $\bar\xi$ fulfills $\sca\bar\xi=\varia^{-n}\bar\xi$, thus it is a glider with the same speed but different direction as $\xi$.
    \item $\sca$ is uniquely given by
    \begin{eqnarray}
      \sca_{11} & = & \frac{\varia^n\xi_+\bar\xi_-+\varia^{-n}\bar\xi_+\xi_-}{\xi\wedge\bar\xi},\label{eq:auto_aus_laeuf_1}\\
      \sca_{12} & = & \frac{\left(\varia^n+\varia^{-n}\right)\,\xi_+\bar\xi_+}{\xi\wedge\bar\xi},\\
      \sca_{21} & = & \frac{\left(\varia^n+\varia^{-n}\right)\,\xi_-\bar\xi_-}{\xi\wedge\bar\xi},\\
      \sca_{22} & = & \frac{\varia^{-n}\xi_+\bar\xi_-+\varia^n\bar\xi_+\xi_-}{\xi\wedge\bar\xi}.\label{eq:auto_aus_laeuf_4}
    \end{eqnarray}
    \item $tr (\sca)=\varia^{-n}+\varia^{n}$
    \item All gliders are multiples of
    \begin{equation}
      \label{eq:min_laeuf}
      \xi=\binom{\xi_+}{\xi_-}=\binom{\frac{\sca_{12}}{\mathrm{gcd}\left(\varia^{n}+\sca_{11},\sca_{12}\right)}}{\frac{\varia^{n}+\sca_{11}}{\mathrm{gcd}\left(\varia^{n}+\sca_{11},\sca_{12}\right)}},
    \end{equation}
    or
    \begin{equation}
      \label{eq:min_laeuf_bar}
      \bar\xi=\binom{\bar\xi_+}{\bar\xi_-}=\binom{\frac{\sca_{12}}{\mathrm{gcd}\left(\varia^{-n}+\sca_{11},\sca_{12}\right)}}{\frac{\varia^{-n}+\sca_{11}}{\mathrm{gcd}\left(\varia^{-n}+\sca_{11},\sca_{12}\right)}}.
    \end{equation}
  \end{enumerate}
  \begin{proof}
    \begin{enumerate}
      \item We use $\sca\xi=\varia^n\xi$ and take the involution on both sides. $\sca$ consists of palindromes, thus $\bar\sca=\sca$ and we get $\sca\bar\xi=\varia^{-n}\bar\xi$.
      \item We write $\sca\xi=\varia^n\xi$ and $\sca\bar\xi=\varia^{-n}\bar\xi$ component wise yielding the four equations
      \begin{displaymath}
        \begin{array}{crcl}
          (\mathrm{I}) & \sca_{11}\xi_++\sca_{12}\xi_- & = & \varia^n\xi_+, \\
          (\mathrm{II}) & \sca_{21}\xi_++\sca_{22}\xi_- & = & \varia^n\xi_-, \\
          (\bar{\mathrm{I}}) & \sca_{11}\bar\xi_++\sca_{12}\bar\xi_- & = & \varia^{-n}\bar\xi_+, \\
          (\bar{\mathrm{II}}) & \sca_{21}\bar\xi_++\sca_{22}\bar\xi_- & = & \varia^{-n}\bar\xi_-. 
        \end{array}
      \end{displaymath}
      Combining them in the right way we get
      \begin{eqnarray}
        \sca_{11}\left(\xi\wedge\bar\xi\right) & = & \varia^n\xi_+\bar\xi_-+\varia^{-n}\bar\xi_+\xi_-,\label{eq:a11}\\
        \sca_{12}\left(\xi\wedge\bar\xi\right) & = & \left(\varia^n+\varia^{-n}\right)\xi_+\bar\xi_+,\label{eq:a12}\\
        \sca_{21}\left(\xi\wedge\bar\xi\right) & = & \left(\varia^n+\varia^{-n}\right)\xi_-\bar\xi_-,\label{eq:a21}\\
        \sca_{22}\left(\xi\wedge\bar\xi\right) & = & \varia^{-n}\xi_+\bar\xi_-+\varia^n\bar\xi_+\xi_-. \label{eq:a22}
      \end{eqnarray}
      By assumption $\sca$ is a CSCA, so the division by $\xi\wedge\bar\xi$ gives a polynomial result and we have Equations (\ref{eq:auto_aus_laeuf_1}) to (\ref{eq:auto_aus_laeuf_4}). In Proposition \ref{gliderCQCA} we will show which conditions $\xi$ has to fulfill for the division to be valid and therefore $\xi$ to be a glider.
      \item 
      \begin{displaymath}
        \tr\sca=\sca_{11}+\sca_{22}=\frac{\left(\xi\wedge\bar\xi\right)\varia^{-n}+\left(\xi\wedge\bar\xi\right)\varia^{n}}{\left(\xi\wedge\bar\xi\right)}=\varia^{-n}+\varia^{n}
      \end{displaymath}
      \item We now use $(\mathrm{I})$ and $(\mathrm{II})$ together with $\det \sca=1$ and $\tr \sca=\varia^{-n}+\varia^{n}$ to derive the form of $\xi$. We get the equation 
      \begin{displaymath}
        \xi_+\left(\varia^{n}-\sca_{11}\right) = \xi_-\sca_{12}.
      \end{displaymath}
      This equation for $\xi_+$ and $\xi_-$, has still one free parameter. One particular solution for the equation is $\xi_+^{(part)}=\sca_{12}$, $\xi_{-}^{(part)}=\varia^n-\sca_{11}$. To obtain the minimal glider we have to divide these components of the particular solution by their greatest common divisor, and thus obtain (\ref{eq:min_laeuf}).
      For (\ref{eq:min_laeuf_bar}) we do the same with $(\bar{\mathrm{I}})$ and $(\bar{\mathrm{II}})$.
      An arbitrary glider can be written as the glider defined by either (\ref{eq:min_laeuf}) or (\ref{eq:min_laeuf_bar}) times a Laurent polynomial in $\varia$.     
    \end{enumerate}
  \end{proof}
\end{prop}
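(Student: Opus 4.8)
The plan is to dispatch the four claims essentially in order, since (2) and (3) both drop out of a single elimination and (4) is a gcd argument of the same flavour. For (1) I would apply the involution $p\mapsto\bar p$ — the $\ZZ_2$-algebra automorphism of $\fring$ sending $\varia\mapsto\varia^{-1}$ — entrywise to $\sca\xi=\varia^n\xi$. The entries of a CSCA are palindromes centred at the origin (Theorem \ref{thm:cqca_prop}), so $\bar\sca=\sca$, and since the involution is a ring homomorphism it commutes with matrix--vector multiplication; hence $\sca\bar\xi=\overline{\varia^n}\,\bar\xi=\varia^{-n}\bar\xi$, with $\bar\xi\neq 0$ by injectivity, a glider of the opposite direction.

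For (2) and (3) I would write out $\sca\xi=\varia^n\xi$ and $\sca\bar\xi=\varia^{-n}\bar\xi$ as the four scalar identities displayed in the statement and eliminate two unknowns at a time, cross-multiplying a barred and an unbarred equation and adding (all arithmetic mod $2$). This yields exactly (\ref{eq:a11})--(\ref{eq:a22}): each $\sca_{ij}$ times the wedge product $\xi\wedge\bar\xi$ equals an explicit polynomial. Summing (\ref{eq:a11}) and (\ref{eq:a22}) the right-hand side factors as $(\varia^n+\varia^{-n})(\xi\wedge\bar\xi)$, giving (3) after cancellation. For (2) the key point is $\xi\wedge\bar\xi\neq 0$: were it zero, then since $\fring$ is an integral domain and $\varia^n+\varia^{-n}=\varia^{-n}(\varia^n+1)^2\neq 0$ for $n\geq 1$, relations (\ref{eq:a12}) and (\ref{eq:a21}) would force $\xi_+=\xi_-=0$, contradicting $\xi\neq 0$. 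Hence the division by $\xi\wedge\bar\xi$ is legitimate; since a CSCA has genuine Laurent-polynomial entries, (\ref{eq:auto_aus_laeuf_1})--(\ref{eq:auto_aus_laeuf_4}) follow, and they determine $\sca$ uniquely.

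For (4) I would start from the first-component equation, rewritten as $\xi_+(\varia^n+\sca_{11})=\xi_-\sca_{12}$, a single $\fring$-linear relation between the components. Here $\varia^n+\sca_{11}\neq 0$ (otherwise $\sca_{11}=\varia^n$, which is not a centred palindrome for $n\geq 1$), so $g:=\gcd(\varia^n+\sca_{11},\sca_{12})$ is a well-defined nonzero element; cancelling $g$ and using that $\fring$ is a UFD, one reads off that every solution is a Laurent-polynomial multiple of $\bigl(\sca_{12}/g,\,(\varia^n+\sca_{11})/g\bigr)$, which is (\ref{eq:min_laeuf}). One then verifies, using $\det\sca=1$ together with (3), that this vector really is a $\varia^n$-eigenvector (not merely a solution of the first-component equation), so the $\varia^n$-gliders are exactly its $\fring$-multiples; the same computation on the barred identity produces (\ref{eq:min_laeuf_bar}); and applying (3) to any hypothetical glider of eigenvalue $\varia^m$ forces $\varia^m+\varia^{-m}=\varia^n+\varia^{-n}$, hence $m=\pm n$, so no further gliders exist.

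The main obstacle is the recurring ``divide and stay polynomial'' step: non-vanishing of $\xi\wedge\bar\xi$ in (2), of $g$ in (4), and the passage from an identity over the fraction field to one over $\fring$ itself. All of this rests on $\fring=\ZZ_2[\varia,\varia^{-1}]$ being a UFD whose units are precisely the monomials (Proposition \ref{mono}), and on the centred-palindrome normalisation of CSCAs ruling out the degenerate triangular cases; the remaining steps are bookkeeping modulo $2$.
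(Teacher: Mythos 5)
Your proposal is correct and follows essentially the same route as the paper: the involution argument for (1), the four component equations combined pairwise to isolate each $\sca_{ij}(\xi\wedge\bar\xi)$ for (2) and (3), and the gcd/coprimality argument on $\xi_+(\varia^n+\sca_{11})=\xi_-\sca_{12}$ for (4). The only difference is that you explicitly justify the division steps (non-vanishing of $\xi\wedge\bar\xi$ via the integral-domain argument, and of $\varia^n+\sca_{11}$ via the palindrome condition), which the paper leaves implicit or defers to Proposition \ref{gliderCQCA}; these additions are sound.
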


\begin{remark}\rm
  \label{rem:glider}
  We could extend our definition of gliders to observables with polynomial eigenvalues $\lambda$. These observables would be mapped to products of translates of themselves. We can show that this extension would not yield any new gliders: A CSCA $\sca$ has to fulfill $\det\sca=1$. With $(I)$ to $(\bar{II})$ we get
  \begin{eqnarray*}
    \det \sca \cdot\left(\xi\wedge\bar\xi\right)^2 & = & \left(\sca_{11}\sca_{22}+\sca_{12}\sca_{21}\right)\cdot\left(\xi\wedge\bar\xi\right)^2\\
    & = & \left(\lambda\xi_+\bar\xi_-+\bar\lambda\bar\xi_+\xi_-\right) \cdot\left(\bar\lambda\xi_+\bar\xi_-+\lambda\bar\xi_+\xi_-\right)\\
    && - \left(\left(\bar\lambda+\lambda\right)\xi_+\bar\xi_+\right) \cdot \left(\left(\lambda+\bar\lambda\right)\xi_-\bar\xi_-\right)\\
    & = & \lambda\bar\lambda\left(\xi_+^2\bar\xi_-^2+2\xi_+\xi_-\bar\xi_+\bar\xi_-+\bar\xi_+^2\xi_-^2\right)\\
    & = & \lambda\bar\lambda\left(\xi\wedge\bar\xi\right)^2\\
    \Leftrightarrow\quad \det \sca & = & \lambda\bar\lambda \sollgleich 1\\
    \Leftrightarrow\qquad\:\: \lambda & = &\varia^n.
  \end{eqnarray*}
  The only possible solutions are $\lambda=\varia^{\pm n}$, $n \in \NN$, because by Proposition \ref{mono} in $\fring$ only monomials have inverse elements. We know that $\bar\xi$ is always a glider to $-n$ so we only look at positive $n$. $\varia^0$ is excluded, because there is no propagation then.
\end{remark}

\begin{prop}\label{gliderCQCA}
   A minimal $\xi \in \fring^2$ is a glider for a CSCA $\sca$ with eigenvalue $\varia^{n}$, $n\in \mathbb{Z}\backslash 0$ if and only if is a divisor of $\varia^{-n}+\varia^{n}$. 
   \begin{proof}
    First let us assume, that $\xi$ is minimal, and $\xi\wedge\bar\xi$ is divisor of $\varia^{-n}+\varia^{n}$. For $\xi$ to be a glider with eigenvalue $\varia^n$, $\sca \xi=\lambda \xi$ has to hold. Therefore the division in Equations (\ref{eq:a11}) to (\ref{eq:a22}) has to be valid. For the Equations (\ref{eq:a12}) and (\ref{eq:a21}) this is obviously true. For the other two equations we use a simple trick:
    \begin{eqnarray*}
      \sca_{11}\left(\xi\wedge\bar\xi\right) & = & \varia^n\xi_+\bar\xi_-+\varia^{-n}\bar\xi_+\xi_-\\
      &=&\varia^n\xi_+\bar\xi_-+\varia^{-n}\bar\xi_+\xi_-+\underbrace{(\varia^{-n}\xi_+\bar\xi_-+\varia^{-n}\xi_+\bar\xi_-)}_{=0}\\
      &=&(\varia^{-n}+\varia^n)\xi_+\bar\xi_-+\varia^{-n}\left(\xi\wedge\bar\xi\right)
    \end{eqnarray*}
    Now it is apparent, that $\xi\wedge\bar\xi$ also divides the right hand side of (\ref{eq:a11}) if it is a divisor of $\varia^{-n}+\varia^n$. For (\ref{eq:a22}) an analogous argument holds.
    
    Now let us show the converse. The Laurent polynomials form an Euclidean ring, which implies that the extended Euclidean algorithm is applicable \cite{simonreed}. Since $\xi$ is minimal, the greatest common divisor of $\xi_+$ and $\xi_-$ is  1, according to the extended Euclidean algorithm we can chose $\eta_+$ and $\eta_{-}$ such that
    \begin{equation*}
      \xi_+ \eta_{-} + \xi_{-}\eta_{}+=\gcd(\xi_+,\xi_{-})=1.
    \end{equation*}
    Then we have:
    \begin{eqnarray*}
      \left(\varia^{-n}+\varia^n\right) & = & \left(\varia^{-n}+\varia^n\right)\cdot\underbrace{\left(\xi_+\eta_-+\xi_-\eta_+\right)}_{\mathrm{=1,\,as\,\xi\,min.}}\cdot \underbrace{\left(\bar\xi_+\bar\eta_-+\bar\xi_-\bar\eta_+\right)}_{\mathrm{=1,\,as\,\bar\xi\,min.}}\\
      & = & \left(\varia^{-n}+\varia^n\right)\cdot\left(\xi_+\bar\xi_+\eta_-\bar\eta_-+\xi_-\bar\xi_+\eta_+\bar\eta_-+ \xi_+\bar\xi_-\eta_-\bar\eta_++\xi_-\bar\xi_-\eta_+\bar\eta_+\right)\\
      & = &\left(\varia^{-n}+\varia^n\right)\xi_+\bar\xi_+\eta_-\bar\eta_- + \left(\varia^{-n}+\varia^n\right)\cdot\xi_-\bar\xi_-\eta_+\bar\eta_+\\
      & & + \varia^{-n}\xi_-\bar\xi_+\eta_+\bar\eta_- + \varia^n\xi_-\bar\xi_+\eta_+\bar\eta_- \underbrace{+\varia^{-n}\xi_+\bar\xi_-\eta_+\bar\eta_- + \varia^{-n}\xi_+\bar\xi_-\eta_+\bar\eta_-}_{=0} \\
      & & + \varia^{-n}\xi_+\bar\xi_-\eta_-\bar\eta_+ + \varia^n\xi_+\bar\xi_-\eta_-\bar\eta_+ \underbrace{+\varia^{-n}\bar\xi_+\xi_-\eta_-\bar\eta_+ + \varia^{-n}\bar\xi_+\xi_-\eta_-\bar\eta_+}_{=0}\\
      & = & \underbrace{\left(\varia^{-n}+\varia^n\right)\xi_+\bar\xi_+}_{=\sca_{12}\left(\xi\wedge\bar\xi\right)} \eta_-\bar\eta_- + \underbrace{\left(\varia^{-n}+\varia^n\right)\xi_-\bar\xi_-}_{=\sca_{21}\left(\xi\wedge\bar\xi\right)} \eta_+\bar\eta_+\\ 
      & & +\underbrace{\left(\varia^{-n}\xi_+\bar\xi_-+\varia^n\xi_-\bar\xi_+\right)}_{=\sca_{22}\left(\xi\wedge\bar\xi\right)} \eta_+\bar\eta_- + \underbrace{\left(\varia^n\xi_+\bar\xi_-+\varia^{-n}\xi_-\bar\xi_+\right)}_{=\sca_{11}\left(\xi\wedge\bar\xi\right)} \eta_-\bar\eta_+ \\
      & & + \varia^{-n}\eta_+\bar\eta_-\left(\xi\wedge\bar\xi\right) +\varia^{-n}\eta_-\bar\eta_+\left(\xi\wedge\bar\xi\right)
    \end{eqnarray*}
    which implies that $\xi \wedge \bar\xi$ divides $(u^{-n} +u^{n})$.
  \end{proof}
\end{prop}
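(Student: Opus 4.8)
This is an equivalence, and I would prove the two implications in turn, keeping in mind throughout that all computations are in characteristic two, so that $\xi\wedge\bar\xi=\xi_+\bar\xi_-+\bar\xi_+\xi_-$ is symmetric and $p+p=0$ for every $p\in\fring$. I read the condition as ``$\xi\wedge\bar\xi$ is a divisor of $\varia^{-n}+\varia^n$''; note that this already forces $\xi\wedge\bar\xi\neq0$, since $\varia^{-n}+\varia^n\neq0$ for $n\neq0$, and that everything in sight is symmetric under $n\mapsto -n$ together with $\xi\leftrightarrow\bar\xi$, so one may as well take $n>0$.

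For the ``if'' direction, assume $\xi$ is minimal and $\xi\wedge\bar\xi$ divides $\varia^{-n}+\varia^n$, and define $\sca$ by the four expressions (\ref{eq:auto_aus_laeuf_1})--(\ref{eq:auto_aus_laeuf_4}) of Proposition \ref{prop:glider_sca_props}. First I would check that these lie in $\fring$: for $\sca_{12}$ and $\sca_{21}$ this is immediate from the divisibility hypothesis, and for $\sca_{11}$ one uses the rewriting
\begin{equation*}
 \varia^n\xi_+\bar\xi_-+\varia^{-n}\bar\xi_+\xi_-=(\varia^n+\varia^{-n})\xi_+\bar\xi_-+\varia^{-n}(\xi\wedge\bar\xi),
\end{equation*}
which displays the numerator as a sum of two terms each divisible by $\xi\wedge\bar\xi$ (and analogously for $\sca_{22}$). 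Next, $\sca$ must be verified to be a CSCA. Palindromicity $\bar\sca_{ij}=\sca_{ij}$ is read off the formulas, since the involution exchanges $\varia^n\leftrightarrow\varia^{-n}$ and $\xi\leftrightarrow\bar\xi$ while fixing $\xi\wedge\bar\xi$. That $\det\sca=1$ follows from the identity of Remark \ref{rem:glider} with $\lambda=\varia^n$, which gives $\det\sca\cdot(\xi\wedge\bar\xi)^2=\lambda\bar\lambda\,(\xi\wedge\bar\xi)^2=(\xi\wedge\bar\xi)^2$, whence $\det\sca=1$ after cancelling the nonzero square. Coprimality of the two columns is then automatic: any common divisor of $\sca_{11}$ and $\sca_{21}$ divides $\sca_{11}\sca_{22}+\sca_{12}\sca_{21}=\det\sca=1$ and is therefore a monomial by Proposition \ref{mono}; likewise for the other column. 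Finally $\sca\xi=\varia^n\xi$ is checked by multiplying each component of $\sca\xi$ by $\xi\wedge\bar\xi$, inserting the formulas, and using the characteristic-two cancellations: $(\xi\wedge\bar\xi)(\sca_{11}\xi_++\sca_{12}\xi_-)$ collapses to $\varia^n\xi_+(\xi\wedge\bar\xi)$, and analogously for the second component. Hence $\xi$ is a glider with eigenvalue $\varia^n$.

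For the converse, assume $\xi$ is a minimal glider of $\sca$ with eigenvalue $\varia^n$. By Proposition \ref{prop:glider_sca_props} the entries $\sca_{ij}$ are given by (\ref{eq:auto_aus_laeuf_1})--(\ref{eq:auto_aus_laeuf_4}) and lie in $\fring$, which says precisely that $\xi\wedge\bar\xi$ divides each of the four numerators. From $\sca_{12},\sca_{21}\in\fring$ we get $\xi\wedge\bar\xi\mid(\varia^{-n}+\varia^n)\xi_+\bar\xi_+$ and $\xi\wedge\bar\xi\mid(\varia^{-n}+\varia^n)\xi_-\bar\xi_-$; from $\sca_{11},\sca_{22}\in\fring$, after the same ``add a zero'' rewriting as above, $\xi\wedge\bar\xi\mid(\varia^{-n}+\varia^n)\xi_+\bar\xi_-$ and $\xi\wedge\bar\xi\mid(\varia^{-n}+\varia^n)\bar\xi_+\xi_-$. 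Therefore $\xi\wedge\bar\xi$ divides the greatest common divisor of these four elements, which in the Euclidean ring $\fring$ equals $(\varia^{-n}+\varia^n)\cdot\gcd(\xi_+\bar\xi_+,\xi_+\bar\xi_-,\xi_-\bar\xi_+,\xi_-\bar\xi_-)$; since $\xi$ and $\bar\xi$ are minimal this last gcd is a unit, and hence $\xi\wedge\bar\xi\mid\varia^{-n}+\varia^n$. (Alternatively, in the style of the present paper, one may choose Bézout coefficients with $\xi_+\eta_-+\xi_-\eta_+=1$, multiply $\varia^{-n}+\varia^n$ by this relation and its involution, and regroup the resulting expansion so that every term carries either an explicit factor $\xi\wedge\bar\xi$ or a factor $\sca_{ij}(\xi\wedge\bar\xi)$.)

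The one place that needs care is the characteristic-two bookkeeping --- the repeated ``add a zero'' steps and the collection of terms in the verification $\sca\xi=\varia^n\xi$ --- together with the reminder that having entries in $\fring$ is not enough on its own: the constructed $\sca$ must also be confirmed to have determinant $1$ and coprime columns. As indicated, both of these reduce cleanly to Remark \ref{rem:glider} and Proposition \ref{mono}, so no genuine difficulty remains beyond a careful computation.
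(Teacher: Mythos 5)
Your proof is correct, and it diverges from the paper's in a way worth noting. The forward direction is the same ``add a zero'' rewriting of the numerators of $\sca_{11}$ and $\sca_{22}$ that the paper uses, but you go further than the paper does: you also verify palindromicity, $\det\sca=1$ (via the identity of Remark \ref{rem:glider}), coprimality of the columns (a common divisor of a column divides $\det\sca=1$, hence is a monomial by Proposition \ref{mono}), and the eigenvalue equation $\sca\xi=\varia^n\xi$ itself. The paper's proof stops at showing the four divisions are valid and leaves these checks implicit, so your version actually closes a small logical gap --- polynomiality of the entries alone does not yet make the formulas (\ref{eq:auto_aus_laeuf_1})--(\ref{eq:auto_aus_laeuf_4}) a CSCA in the sense of Theorem \ref{thm:cqca_prop}. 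For the converse the routes genuinely differ: the paper picks B\'ezout coefficients $\eta_\pm$ with $\xi_+\eta_-+\xi_-\eta_+=1$, multiplies $\varia^{-n}+\varia^n$ by this relation and its involution, and regroups the expansion term by term until every summand visibly carries a factor $\xi\wedge\bar\xi$; you instead observe that polynomiality of the four entries means $\xi\wedge\bar\xi$ divides all of $(\varia^{-n}+\varia^n)\xi_\pm\bar\xi_\pm$, and that minimality of $\xi$ (and of $\bar\xi$, since the involution is a ring automorphism) makes $\gcd(\xi_+\bar\xi_+,\xi_+\bar\xi_-,\xi_-\bar\xi_+,\xi_-\bar\xi_-)$ a unit in the PID $\fring$, so $\xi\wedge\bar\xi$ divides $(\varia^{-n}+\varia^n)$ directly. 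Your gcd argument is shorter and less error-prone than the paper's explicit characteristic-two bookkeeping, at the cost of invoking standard gcd arithmetic in a Euclidean domain rather than exhibiting the multiple explicitly; the two are equivalent in content, and you correctly flag the paper's computation as the constructive alternative.
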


We have shown in Proposition \ref{prop:glider_sca_props} that $\tr\sca =\varia^{-n}+\varia^{n}$ is a necessary condition for $\sca$ to have gliders. The following proposition shows that this condition is also sufficient.
\begin{prop}
  A CSCA possesses gliders with eigenvalues $\lambda_+=\varia^n$ and $\lambda_-=\bar\lambda_+=\varia^{-n}$ if and only if $\tr \sca=\varia^{-n}+\varia^{n}$. 
  \begin{proof}
    The ``only if'' part was already shown in Proposition \ref{prop:glider_sca_props}.
    
    We now assume that $\tr \sca=\varia^{-n}+\varia^{n}$ and use this to evaluate the characteristic polynomial of $\sca$. Using $\det\sca=1$ we get
    \begin{equation*}
      \lambda^2 + \lambda\cdot(\varia^{-n}+\varia^{n})+ 1 = 0,
    \end{equation*}
    which is solved by $\lambda_{\pm}=\varia^{\pm n},\,n\in\NN$. Thus the CSCA possesses gliders.
  \end{proof}
\end{prop}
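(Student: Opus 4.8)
The ``only if'' implication needs no new work: Proposition \ref{prop:glider_sca_props}(3) shows that any CSCA carrying a glider with eigenvalue $\varia^{\pm n}$ has $\tr\sca=\varia^{-n}+\varia^{n}$. So the plan is to prove the converse, that the trace condition forces the existence of gliders with these eigenvalues.

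Assume $\tr\sca=\varia^{-n}+\varia^{n}$ with $n\ge 1$. Since $\sca$ is a CSCA, $\det\sca=1$, so the characteristic polynomial of $\sca$ over $\fring$ is $\chi(\lambda)=\lambda^{2}+(\varia^{-n}+\varia^{n})\lambda+1$. Working modulo $2$ one checks that this factors as $\chi(\lambda)=(\lambda+\varia^{n})(\lambda+\varia^{-n})$, since the product of these factors has constant term $\varia^{n}\varia^{-n}=1$ and linear coefficient $\varia^{n}+\varia^{-n}$. It is tempting to declare $\varia^{\pm n}$ to be eigenvalues and stop here, but $\fring$ is only a ring, so a root of the characteristic polynomial need not come equipped with an eigenvector in $\fring^{2}$; supplying that eigenvector is the one step requiring genuine care.

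To close that gap I would appeal to the Cayley--Hamilton theorem, which is valid over any commutative ring: $\chi(\sca)=0$, hence $(\sca+\varia^{n}\Id)(\sca+\varia^{-n}\Id)=0$. Now $\sca\ne\varia^{-n}\Id$, for otherwise $\det\sca=\varia^{-2n}$, and by Proposition \ref{mono} a monomial of nonzero degree cannot equal $1$, contradicting $\det\sca=1$ when $n\ge 1$. Therefore $\sca+\varia^{-n}\Id$ has a nonzero column $\xi\in\fring^{2}$, and the factorization of $\chi(\sca)$ forces $(\sca+\varia^{n}\Id)\xi=0$, i.e.\ $\sca\xi=\varia^{n}\xi$ since $-1=1$ in $\ZZ_{2}$. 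Thus $\xi$ is a glider (not necessarily minimal) with eigenvalue $\varia^{n}$, and by Proposition \ref{prop:glider_sca_props}(1) its involution $\bar\xi$ is a glider with eigenvalue $\varia^{-n}$, completing the proof. A second route would be to write down the candidate minimal glider via (\ref{eq:min_laeuf}) and verify through Proposition \ref{gliderCQCA} that $\xi\wedge\bar\xi$ divides $\varia^{-n}+\varia^{n}$, but the Cayley--Hamilton line is shorter and avoids that divisibility computation, so I would present it as the main argument. The only real obstacle is precisely the ring-versus-field subtlety in extracting the eigenvector; the rest is bookkeeping.
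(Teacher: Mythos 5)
Your proposal is correct, and it is in fact more complete than the argument the paper itself gives. The paper's proof of the ``if'' direction stops after observing that $\lambda_{\pm}=\varia^{\pm n}$ solve the characteristic equation $\lambda^2+\lambda(\varia^{-n}+\varia^{n})+1=0$; it never addresses the point you single out, namely that over the ring $\fring$ a root of the characteristic polynomial does not automatically come equipped with an eigenvector in $\fring^2$. Your Cayley--Hamilton step closes exactly that gap: from $(\sca+\varia^{n}\Id)(\sca+\varia^{-n}\Id)=0$ and $\sca\neq\varia^{-n}\Id$ (which also follows directly from the fact that the diagonal entries of a CSCA are centered palindromes while $\varia^{-n}$ is not for $n\ge 1$, though your determinant argument is equally valid and does not really need Proposition \ref{mono}), any nonzero column $\xi$ of $\sca+\varia^{-n}\Id$ satisfies $\sca\xi=\varia^{n}\xi$, and Proposition \ref{prop:glider_sca_props} then supplies the companion glider $\bar\xi$ with eigenvalue $\varia^{-n}$. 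So the two arguments share the same starting point --- the characteristic polynomial --- but yours actually produces the eigenvector, whereas the paper implicitly leaves that to the reader (who could alternatively obtain it from the explicit formula (\ref{eq:min_laeuf}) together with the divisibility check of Proposition \ref{gliderCQCA}, the second route you mention). Your version is the one I would keep.
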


Now that we know the conditions for the existence of gliders, we want to know how and when they can be connected. Consider an arbitrary CSCA $\sca$ with gliders and a second CSCA $\scb$. If we transform $\sca$ by conjugating with $\scb$ we get $\widetilde\sca=\scb^{-1} \sca \scb$ which has by
\begin{equation*}
  \tr\widetilde\sca=\tr(\scb^{-1} \sca \scb)=\tr(\Id\sca)=\tr\sca
\end{equation*}
the same trace as $\sca$ and thus is a glider automorphism, too.
What is maybe more surprising is that the converse is also true for gliders of propagation speed one: any CSCA with one-step gliders is equivalent to the standard-glider CSCA $\simpleglidersca$ (\ref{eq:simpleglider}) by the equivalence relation  $\widetilde\sca=\scb^{-1}\simpleglidersca \scb$.
\begin{thm}\label{thm:glider_equiv}
  Let $\xi=\left(\xi_+,\,\xi_-\right)\,\in\,\fring^2$ be minimal. Then the following three statements are equivalent:
  \begin{enumerate}
    \item There is a CSCA $\sca$ with $\sca\xi=\varia\xi$.
    \item There is a CSCA $\scb$ with $\scb\xi=\tbinom{1}{\varia}$.
    \item $\xi\wedge\bar\xi=\left(\varia^{-1}+\varia\right)$.
  \end{enumerate}
  \begin{proof}
    $3 \Leftrightarrow 1$ has already been shown in Proposition \ref{gliderCQCA}.
    
    $1\Rightarrow 2$: We assume that 1 (and therefore also 3) is true and analyze the conditions this imposes on $\scb$: We know that $\scb=\bar \scb$. We start with the assumption $\scb\xi=\tbinom{1}{\varia}$ and obtain
    \begin{eqnarray}
      \scb_{11}\left(\xi\wedge\bar\xi\right) & = & \xi_-+\bar\xi_-, \label{eq:b11}\\
      \scb_{12}\left(\xi\wedge\bar\xi\right) & = & \xi_++\bar\xi_+,\\
      \scb_{21}\left(\xi\wedge\bar\xi\right) & = & \left(\varia^{-1}\xi_-+\varia\bar\xi_-\right),\\
      \scb_{22}\left(\xi\wedge\bar\xi\right) & = & \left(\varia^{-1}\xi_++\varia\bar\xi_+\right). \label{eq:b22}
    \end{eqnarray}
    First we need to show that the matrix $\scb$ actually exists, i.e., that all the right sides of the equations (\ref{eq:b11}) to (\ref{eq:b22}) can be divided by $\xi \wedge \bar \xi$. This follows from the same argument as in Proposition \ref{gliderCQCA} if $\tr\sca=\varia^{-1}+\varia$, which is given by $1\Leftrightarrow 3$. For the matrix to be a CSCA the determinant has to be one. This is also true and can be shown by direct computation: 
    \begin{eqnarray*}
      \det \scb & = & \frac{1}{\left(\xi\wedge\bar\xi\right)^2}\cdot\left(\left(\xi_-+\bar\xi_-\right)\cdot\left(\varia^{-1}\xi_++\varia\bar\xi_+\right)
      +\left(\varia^{-1}\xi_-+\varia\bar\xi_-\right)\cdot \left(\xi_++\bar\xi_+\right)\right)\\
      &=&\frac{1}{\left(\xi\wedge\bar\xi\right)^2}\cdot\left(\varia^{-1}+\varia\right)\cdot\left(\xi\wedge\bar\xi\right)\\
      &=&\frac{\left(\varia^{-1}+\varia\right)}{\xi\wedge\bar\xi}= 1
    \end{eqnarray*}
    This step only works for $\xi\wedge\bar\xi=(\varia^{-1}+\varia)$ which corresponds to one step gliders.
    Later on we will consider gliders with higher propagation speed and give counterexamples to similar notions of equivalence for their automata.
       
    $2 \Rightarrow 3$:
    \begin{displaymath}
      \xi\wedge\bar\xi=\underbrace{\det \scb}_{=1}\cdot(\xi\wedge\bar\xi)=\scb\xi\wedge \scb\bar\xi=\tbinom{1}{\varia}\wedge\tbinom{1}{\varia^{-1}} = \varia^{-1}+\varia
    \end{displaymath}
    This completes our proof.
  \end{proof}
\end{thm}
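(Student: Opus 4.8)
The plan is to close the cycle $3\Rightarrow 1\Rightarrow 2\Rightarrow 3$, with $3\Leftrightarrow 1$ available almost for free. For $1\Leftrightarrow 3$ I would invoke Proposition \ref{gliderCQCA}: for a minimal $\xi$, the existence of a CSCA $\sca$ with $\sca\xi=\varia\xi$ is equivalent to $\xi\wedge\bar\xi$ being a divisor of $\varia^{-1}+\varia$. To be precise one should upgrade ``divides'' to ``equals'': $\xi\wedge\bar\xi$ is fixed by the involution, hence a centered palindrome, and its constant coefficient vanishes (over $\ZZ_2$ it equals $\sum_i a_ib_i+\sum_i a_ib_i=0$, where $a_i$, $b_i$ are the coefficients of $\xi_+$, $\xi_-$). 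Since $\varia^{-1}+\varia=\varia^{-1}(1+\varia)^2$, the only divisor of $\varia^{-1}+\varia$ that is simultaneously a centered palindrome and has vanishing constant term is $\varia^{-1}+\varia$ itself; the degenerate case $\xi\wedge\bar\xi=0$, where $\xi$ and $\bar\xi$ are proportional, supports no speed-one glider and is ruled out alongside. This gives statement 3, and conversely $3$ feeds directly into Proposition \ref{gliderCQCA} to give $1$.

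For $1\Rightarrow 2$ I would imitate the explicit reconstruction used in Propositions \ref{prop:glider_sca_props} and \ref{gliderCQCA}. Assuming 1 (hence also 3), I \emph{define} $\scb$ by $\scb\xi=\binom{1}{\varia}$; because every CSCA satisfies $\bar\scb=\scb$, this entails the companion relation $\scb\bar\xi=\overline{\scb\xi}=\binom{1}{\varia^{-1}}$. Writing both relations componentwise and eliminating yields the candidate entries
\begin{align*}
  \scb_{11}\,(\xi\wedge\bar\xi) &= \xi_-+\bar\xi_-, &
  \scb_{12}\,(\xi\wedge\bar\xi) &= \xi_++\bar\xi_+,\\
  \scb_{21}\,(\xi\wedge\bar\xi) &= \varia^{-1}\xi_-+\varia\,\bar\xi_-, &
  \scb_{22}\,(\xi\wedge\bar\xi) &= \varia^{-1}\xi_++\varia\,\bar\xi_+.
\end{align*}
Three points then need checking: each right-hand side must be divisible by $\xi\wedge\bar\xi=\varia^{-1}+\varia$ so that $\scb$ has entries in $\subfring$; the resulting entries must be centered palindromes; and $\det\scb$ must equal $1$. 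The divisibility is the analogue of the ``simple trick'' in Proposition \ref{gliderCQCA}: picking Bézout cofactors $\eta_+,\eta_-$ with $\xi_+\eta_-+\xi_-\eta_+=1$ (possible since $\xi$ is minimal) and expanding $\varia^{-1}+\varia=(\varia^{-1}+\varia)(\xi_+\eta_-+\xi_-\eta_+)(\bar\xi_+\bar\eta_-+\bar\xi_-\bar\eta_+)$, one regroups the terms so that each numerator above appears multiplied by $\xi\wedge\bar\xi$. Palindromy is read off the symmetry of the four formulas, and the determinant is the short computation $\det\scb=(\varia^{-1}+\varia)/(\xi\wedge\bar\xi)=1$, which is precisely where the speed-one hypothesis $\xi\wedge\bar\xi=\varia^{-1}+\varia$ enters.

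The step $2\Rightarrow 3$ is then immediate and the cleanest of the three: as $\scb$ is a CSCA we have $\det\scb=1$ and $\scb\bar\xi=\overline{\scb\xi}=\binom{1}{\varia^{-1}}$, so
\begin{equation*}
  \xi\wedge\bar\xi=(\det\scb)\,(\xi\wedge\bar\xi)=(\scb\xi)\wedge(\scb\bar\xi)=\binom{1}{\varia}\wedge\binom{1}{\varia^{-1}}=\varia^{-1}+\varia,
\end{equation*}
using that the wedge product scales by the determinant under a linear map. I expect the main obstacle to be the divisibility claim inside $1\Rightarrow 2$: one must exhibit the correct Bézout combination making $\varia^{-1}+\varia$ visibly divide all four numerators, whereas the determinant and palindromy checks should be routine. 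It is worth stressing that the whole construction hinges on $\xi\wedge\bar\xi$ being the \emph{full} polynomial $\varia^{-1}+\varia$; for gliders of speed $n>1$ the corresponding wedge need not equal $\varia^{-n}+\varia^{n}$, the determinant identity collapses, and — as the paper notes via later counterexamples — the analogous equivalence fails.
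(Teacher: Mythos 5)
Your proposal is correct and follows essentially the same route as the paper: $3\Leftrightarrow 1$ via Proposition \ref{gliderCQCA}, the explicit reconstruction of $\scb$ from $\scb\xi=\tbinom{1}{\varia}$ with the same four equations, divisibility via the B\'ezout trick, the determinant computation, and the wedge-product identity for $2\Rightarrow 3$. Your extra observation that Proposition \ref{gliderCQCA} only yields ``$\xi\wedge\bar\xi$ divides $\varia^{-1}+\varia$'' and that this must be upgraded to equality (using that $\xi\wedge\bar\xi$ is a centered palindrome with vanishing constant term) is a point the paper's proof passes over silently, and it is a welcome tightening rather than a deviation.
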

  
We will now show, that for automata with higher propagation speed the gliders are not equivalent in the above sense. It is apparent from proposition \ref{gliderCQCA}, that for a fixed $n>1$ there are gliders with different wedge products. These can not be transformed into each other, because the wedge product is invariant under transformation with CSCAs (see Theorem \ref{thm:glider_equiv}, part 2). Another way to see that there are different types of $n$-step glider automata is the fact, that we always have automata which are powers of one-step automata and also automata, whose roots are not CSCAs. These can not be transformed into each other. But even automata for gliders with the same wedge product can not always be connected by a third CSCA. To show this, it is sufficient to find two phase space vectors $\xi$ and $\eta$ with $\xi\wedge\bar\xi=\eta\wedge\bar\eta$ and $\xi\wedge\bar\xi$ dividing $(\varia^{-n}+\varia^{n})$ for some $n$ which can not be transformed into each other by a CSCA. We choose $\xi=\tbinom{1}{\varia+\varia^2}$ and $\eta=\tbinom{1+\varia}{\varia^2}$. Their wedge product $\xi\wedge\bar\xi=\eta\wedge\bar\eta=\varia^{-2}+\varia^{-1}+\varia+\varia^2$ divides $(\varia^{-3}+\varia^{3})$. It is  a valid wedge product for a $3$-step glider. If an automaton $\scb$ with $\scb\eta=\xi$ existed, it would have to fulfill the equations
  \begin{eqnarray*}
    \scb_{11}\eta_++\scb_{12}\eta_-&=&\xi_+=1\\
    \scb_{11}\bar\eta_++\scb_{12}\bar\eta_-&=&\bar\xi_+=1.
  \end{eqnarray*}
  From these we get 
  \begin{eqnarray*}
    \scb_{11}(\xi\wedge\bar\xi)&=&\bar\eta_-+\eta_-\\
    \Leftrightarrow \scb_{11}(\varia^{-2}+\varia^{-1}+\varia+\varia^2)&=&\varia^{-2}+\varia^2
  \end{eqnarray*}
  which can not be solved by any $\scb_{11}\in\subfring$.
\subsubsection{Periodic automata}
\label{sec:periodic}
CSCAs whose matrices have a trace independent of $\varia$ show periodic behavior. 

\begin{prop}
  \label{prop:periodic_sca}
  A CSCA $\sca$ is periodic with period $c+2$ if $\tr\sca=c$ for $c\in\ZZ_2$.
  \begin{proof}
    By the Cayley-Hamilton theorem we get $\sca^2=\sca\cdot\tr\sca+\Id=c\sca + \Id$ and thus $\sca^2=\Id$ for $c=0$ and $\sca^3=\sca^2+\sca=2\sca+\Id=\Id$ for $c=1$.
  \end{proof}
\end{prop}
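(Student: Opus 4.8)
The plan is to invoke the Cayley--Hamilton theorem, which holds for $2\times 2$ matrices over any commutative ring and in particular over $\subfring$. For a CSCA we have $\det\sca=1$ (by Theorem~\ref{thm:cqca_prop} together with the reduction to determinant-one automata described immediately afterwards), so the characteristic polynomial identity reads $\sca^2-(\tr\sca)\,\sca+\Id=0$. All coefficients live in $\ZZ_2$, so signs are irrelevant, and writing $\tr\sca=c$ this collapses to the single relation $\sca^2=c\,\sca+\Id$. This is the one computation the whole argument rests on.

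From here I would simply split into the two possible values of $c$. If $c=0$ the relation above is already $\sca^2=\Id$, i.e. the period is $2=c+2$. If $c=1$ I would compute one further power, using the relation twice: $\sca^3=\sca\cdot\sca^2=\sca(\sca+\Id)=\sca^2+\sca=(\sca+\Id)+\sca=\Id$, where the last equality uses $2\,\sca=0$ in characteristic two. Hence the period is $3=c+2$ in this case as well, which is the claim.

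There is no real obstacle here beyond bookkeeping; the only things that genuinely need to be in place are that $\subfring$ is commutative (so that Cayley--Hamilton is available in the form above) and that $\det\sca=1$, and both are already established in the excerpt. If one wants to be precise about the word ``period'', I would add the observation that $c+2\in\{2,3\}$ is prime, so the exhibited exponent is in fact the minimal period whenever $\sca\neq\Id$; otherwise $\sca$ is trivially $1$-periodic. The statement as given, asserting $\sca^{c+2}=\Id$, follows directly from the two cases above.
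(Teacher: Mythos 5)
Your proof is correct and follows essentially the same route as the paper's: Cayley--Hamilton with $\det\sca=1$ gives $\sca^2=c\,\sca+\Id$, and the two cases $c=0$, $c=1$ yield $\sca^2=\Id$ and $\sca^3=\Id$ respectively. Your added remark on the minimality of the period is a harmless refinement the paper does not bother with.
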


\begin{prop}
  \label{prop:ev_one_periodic}
  Let $\sca$ be a CSCA and $\xi\in\fring^2$ a non-zero phase space vector. If $\sca\xi=\xi$ holds, then $\sca$ is periodic with period two. 
  \begin{proof}
    We use $\sca^2=\sca\cdot\tr\sca+\Id$ and $\sca\xi=\xi$:
    \begin{equation*}
      \begin{array}{rrcl}
        &\sca^2\xi&=&\xi\\
        \Leftrightarrow&(\sca\cdot\tr\sca)\xi+\Id\xi&=&\xi\\
        \Leftrightarrow&\xi(\tr\sca)&=&0\\
        \Leftrightarrow&\tr\sca&=&0
      \end{array}
    \end{equation*}
    Thus $\sca$ is of period two by Proposition \ref{prop:periodic_sca}.
  \end{proof}
\end{prop}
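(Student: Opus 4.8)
The plan is to reduce the claim to the trace criterion of Proposition~\ref{prop:periodic_sca}. Since $\sca$ is a $2\times 2$ matrix over the commutative ring $\fring$, the Cayley--Hamilton theorem applies; using $\det\sca=1$ (as $\sca$ is a CSCA) and the fact that all arithmetic is carried out modulo $2$, this gives the identity $\sca^{2}=(\tr\sca)\,\sca+\Id$, the same relation used repeatedly in this section.

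Next I would substitute the fixed-vector relation $\sca\xi=\xi$ into this identity. Applying both sides to $\xi$, the left-hand side becomes $\sca^{2}\xi=\sca(\sca\xi)=\sca\xi=\xi$, while the right-hand side becomes $(\tr\sca)\,\sca\xi+\xi=(\tr\sca)\,\xi+\xi$. Equating the two and cancelling the common summand $\xi$ (legitimate since addition is modulo $2$) leaves $(\tr\sca)\,\xi=0$ in $\fring^{2}$; that is, $\tr\sca$ kills both components of $\xi$.

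The only step needing a word of care is passing from $(\tr\sca)\,\xi=0$ to $\tr\sca=0$. For this I would invoke that $\xi\neq0$, so at least one component $\xi_{\pm}\in\fring$ is a non-zero Laurent polynomial, together with the fact that $\fring$ is an integral domain --- the ring of Laurent polynomials over the field $\ZZ_2$ has no zero divisors --- so multiplication by a non-zero element is injective; hence $\tr\sca=0$. This is the (admittedly mild) crux of the argument; everything else is formal manipulation. With $\tr\sca=0\in\ZZ_2$ in hand, Proposition~\ref{prop:periodic_sca} (case $c=0$) yields $\sca^{2}=\Id$, which is exactly the assertion that $\sca$ is periodic with period two.
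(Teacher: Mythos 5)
Your proposal is correct and follows essentially the same route as the paper: apply Cayley--Hamilton to get $\sca^{2}=(\tr\sca)\sca+\Id$, evaluate on the fixed vector $\xi$ to obtain $(\tr\sca)\xi=0$, conclude $\tr\sca=0$, and invoke Proposition~\ref{prop:periodic_sca}. The only difference is that you spell out the justification for the last cancellation (that $\fring$ is an integral domain and $\xi\neq0$), which the paper leaves implicit.
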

\subsubsection{Fractal automata}
\label{sec:frac}
All CSCAs that are neither periodic nor have gliders show a fractal time evolution. Fractal means, that the graph of the spacetime evolution of one cell observables is self similar in the limit of infinitely many timesteps. We will cover this type of CSCAs in detail in a future publication \cite{Nesme}. Here we only give an example to illustrate the self similarity. The evolution of the automaton
\begin{equation}
  \label{eq:fractal}
  \fractalsca=\left(\begin{array}{cc}
      \varia^{-1}+1+\varia&1\\
      1&0
    \end{array}
  \right)
\end{equation}
is shown in Figure \ref{fig:frac}.
\begin{figure}[htbp]
  \begin{center}
    \includegraphics[width=\textwidth, angle=180]{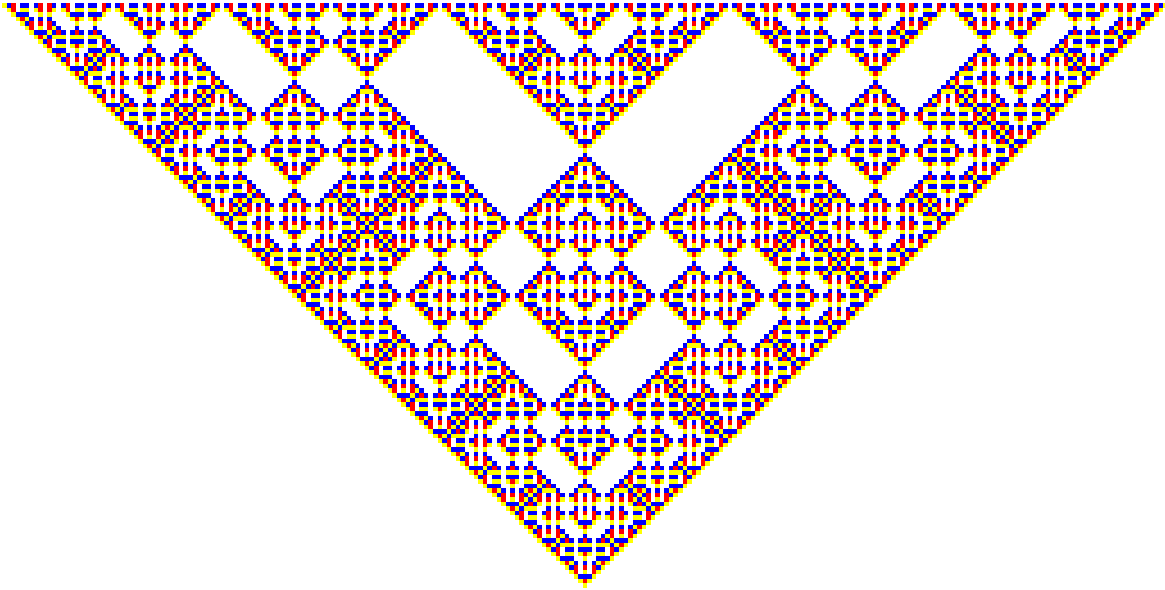}
    \caption{Time evolution of a fractal CSCA (\ref{eq:fractal}). Time increases upwards. The different colors illustrate the different Pauli matrices.}
    \label{fig:frac}
  \end{center}
\end{figure}

Nevertheless, we will state one short lemma, that we will need later on to prove the convergence of product states.

\begin{lem}\label{lem:fractal_power}
  If a CSCA $\sca$ is fractal, so is $\sca^n$ for $n \in \NN$.
  \begin{proof}
    We prove that $\sca^n$ is fractal by showing that it can neither be periodic nor have gliders. Obviously, no power of $\sca$ can be periodic if $\sca$ is not periodic, so only the glider case remains. If $\sca^n$ has minimal gliders $\xi$, $\bar\xi$ then by
    \begin{equation*}
       \sca^n (\sca \xi)=\sca \sca^n \xi=\lambda (\sca \xi)
    \end{equation*}
    $\sca\xi$ is also a glider for $\sca^n$ with the same eigenvalue $\lambda$ and thus a multiple of $\xi$.
    Hence $\sca\xi=\tilde\lambda\xi$ holds for a monomial\footnote{By Remark \ref{rem:glider} the only possible eigenvalues are monomials. } $\tilde\lambda$. If $\tilde\lambda=1$ the automaton would be periodic by Proposition \ref{prop:ev_one_periodic} which is already ruled out. For $\tilde\lambda=\varia^{\pm n}$, which is the only other possibility, $\sca$ has gliders. Thus for $\sca^n$ to have gliders $\sca$ has to have gliders. This is a contradiction to the assumption that $\sca$ is fractal. So any power of a fractal CSCA is always fractal.
  \end{proof}
\end{lem}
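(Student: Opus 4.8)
The plan is to use the trichotomy set up above: every CSCA is exactly one of periodic, glider-type, or fractal, where ``fractal'' simply means ``neither periodic nor equipped with gliders''. Since $\sca^n$ is again a CSCA (it has determinant $1$ and is translation-invariant and centered), it suffices to show that $\sca^n$ is neither periodic nor a glider automaton.

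First I would dispose of the periodic case, which is the easy half. If $\sca^n$ were periodic then by Proposition \ref{prop:periodic_sca} some power of $\sca$ equals $\Id$. Writing $\sca^k=a_k\,\sca+b_k\,\Id$ and iterating Cayley--Hamilton in the form $\sca^{k+1}=(\tr\sca)\,\sca^k+\sca^{k-1}$ (valid since $\det\sca=1$ and we work over $\ZZ_2$), one finds $\tr(\sca^k)=a_k(\tr\sca)\cdot\tr\sca$, a fixed polynomial expression in $t:=\tr\sca$ whose top degree as a centered Laurent polynomial equals $k\cdot\deg t$ and hence grows with $k$ as soon as $t$ is non-constant. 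So $\tr(\sca^n)$ is non-constant, i.e.\ $\sca^n$ is not periodic. (Equivalently: a finite-order element would make $t$ algebraic over $\ZZ_2$ inside the integral domain $\subfring$, forcing $t$ to be a constant, contradicting that $\sca$ is fractal.)

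The substantive part is ruling out gliders for $\sca^n$. Assume for contradiction that $\sca^n$ has a glider. By Proposition \ref{prop:glider_sca_props} applied to $\sca^n$ there is a minimal glider $\xi$ with $\sca^n\xi=\varia^m\xi$, where $m\neq 0$ because the eigenvalue must be a non-trivial monomial (Remark \ref{rem:glider}), and every glider of $\sca^n$ is a Laurent-polynomial multiple of $\xi$ or of $\bar\xi$. The key step is that $\sca$ commutes with $\sca^n$, so
\begin{equation*}
  \sca^n(\sca\xi)=\sca(\sca^n\xi)=\varia^m(\sca\xi);
\end{equation*}
thus $\sca\xi$ is itself a glider of $\sca^n$ with eigenvalue $\varia^m$. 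Since multiples of $\bar\xi$ carry the eigenvalue $\varia^{-m}\neq\varia^m$, $\sca\xi$ must be a multiple of $\xi$, say $\sca\xi=p\,\xi$ with $p\in\fring$ (and $p\neq 0$ since $\sca$ is invertible). Applying $\sca^n$ gives $p^n\xi=\varia^m\xi$, hence $p^n=\varia^m$ in the integral domain $\fring$; so $p$ is a unit, and by Proposition \ref{mono} a monomial $p=\varia^k$ with $kn=m\neq 0$, forcing $k\neq 0$. But then $\sca\xi=\varia^k\xi$ exhibits $\xi$ as a glider for $\sca$ itself, contradicting that $\sca$ is fractal. (Had one only obtained $p=\varia^k$ with $k$ possibly $0$, the case $k=0$ would instead be excluded directly by Proposition \ref{prop:ev_one_periodic}.) Hence $\sca^n$ has no gliders, and being neither periodic nor a glider automaton it is fractal.

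I expect the main obstacle to be exactly this ``pull-back'' of a glider of $\sca^n$ to a glider of $\sca$: it relies on three separate structural inputs --- that glider eigenvalues of a CSCA are monomials (Remark \ref{rem:glider}), that all gliders of a CSCA are multiples of a single minimal one or its involution (Proposition \ref{prop:glider_sca_props}), and that the only units of $\fring$ are monomials (Proposition \ref{mono}) --- without which the commutation trick does not close. Everything else (ruling out periodicity, and the bookkeeping with the eigenvalues $\varia^{\pm m}$) is routine.
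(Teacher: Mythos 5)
Your proof is correct and follows essentially the same route as the paper's: the heart of both arguments is that $\sca$ commutes with $\sca^n$, so $\sca\xi$ is again a glider of $\sca^n$ with the same eigenvalue and hence a multiple of the minimal glider $\xi$, which pulls the glider back to $\sca$ itself. You merely make explicit two details the paper leaves terse --- the degree growth of $\tr(\sca^k)$ ruling out periodicity of powers, and the unit argument $p^n=\varia^m$ showing the eigenvalue of $\sca$ on $\xi$ is a nonconstant monomial --- where the paper instead appeals to ``obviously'', Remark \ref{rem:glider}, and Proposition \ref{prop:ev_one_periodic}.
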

\section{Invariant states and convergence}\label{sec:stationary} 
In this section we will consider different types of states on the spin chain and their evolution under CQCA action. Our focus lies on the search for invariant states and the convergence of other states towards invariant states. We consider special sets of states: product states, stabilizer states and quasifree states. Because CQCAs act in a translation-invariant manner, it is natural to look at translation-invariant states. We will therefore only consider those.

The only state that we know to be invariant for all CQCAs (and all QCAs) is the tracial state, which vanishes on all finite Pauli products except the identity and is defined as the limit of states that have the density operator $\frac{1}{2} \Id$ for each tensor factor. We strongly suspect this state to be the only invariant state for fractal CQCAs, but have no complete proof yet.
\subsection{Invariant states for periodic CQCAs}
  \label{sec:inv_periodic}
  It is obvious, that all states are periodic under the action of periodic CQCAs. Therefore a state is either invariant or does not converge at all. Finding invariant states for periodic automata is in general very easy. For example, if a CQCA satisfies $T^p=\mathds{1}$ for some finite $p\in\mathds{N}^{+}$, then all states of the type $\frac{1}{p}\sum_{n=1}^{p} \omega\circ T^n$ are invariant. Moreover, any invariant state is of this form, because for $\omega$ invariant $\omega=\frac{1}{p}\sum_{n=1}^{p} \omega\circ T^n$. Finding pure invariant states is more complicated. In Section \ref{sec:stab_inv} we show, that for some period-two CQCAs pure invariant stabilizer states exist. For CQCAs without propagation there also exist invariant product states. Namely, if one Pauli matrix is left invariant by such a CQCA, then the state that gives expectation value one on this matrix and vanishes on the others is left invariant by this CQCA. A state with the same expectation value for all Pauli matrices is always left invariant by a non-propagating CQCA.
\subsection{Invariance and convergence of product states for non-periodic automata}
\label{sec:stat_conv_prod}
In this section we will only consider states which are translation-invariant product states with respect to single cell systems. They are of the form $\omega(\weyl{\xi})=\prod_i\Phi(\weyl{\xi_i})$, where $\Phi$ is a state on a single cell. 

\begin{prop}
  \label{prop:conv_product_glider}
  For a glider CQCA $\CQCA$ there exist no translation-invariant product states that are T-invariant except the tracial state. All other translation-invariant product states weakly converge to $\CQCA$-invariant states 
  \begin{equation}
    \omega_\infty=\lim_{n\to\infty}\omega\circ\CQCA^n.
  \end{equation}
  with the following properties:
  \begin{itemize}
    \item $\omega_\infty(\weyl{\xi})=1$, if $\weyl{\xi}=\Id$,
    \item $|\omega_\infty(\weyl{\xi})|<1$, if $\weyl{\xi}$ is a product of gliders,
    \item $\omega_\infty(\weyl{\xi})=0$ otherwise.
  \end{itemize}
  If $\Phi(\sigma_j)=1$ for some $j$, then $\omega_\infty$ is the tracial state. 
  \begin{proof}
    First we prove the non-existence of invariant product states. A state is invariant if $\omega\circ\CQCA=\omega$, i.e., $\omega(\CQCA A)=\omega(A),\,\forall A\in\FA$ has to be fulfilled. We require the automaton to be non-periodic which implies a finite (non-zero) propagation, so at least two\footnote{The image of the third Pauli matrix is always determined by the product of the other two.} of the Pauli matrices are mapped to tensor products of at least three Pauli matrices\footnote{The image has to be a tensor product of at least tree Pauli matrices, because an identity in the middle is not allowed.}. Thus for these two Pauli matrices $\weyl{\xi}$
    \begin{eqnarray*}
      \Phi(\weyl{\xi})&=&\omega(\weyl{\xi})\\
      &=&\omega(\CQCA\weyl{\xi})\\
      &=&\prod_{i\in\CN}\Phi(\weyl{(\sca\xi)_i})
    \end{eqnarray*}
    holds. For all Pauli matrices $\sigma_i$, $|\Phi(\sigma_i)|\le1$. If $|\Phi(\sigma_i)|=1$, then $\Phi(\sigma_j)=0,\,\forall j\ne i$. Now lets assume that $\Phi(\sigma_i)\ne0$ for some $i$. The image of a single cell observable has to include at least two different types of Pauli matrices (different from the identity, e.g. $\sigma_1$ and $\sigma_2$). Else $\xi_+=0$ or $\xi_-=0$ or $\xi_+=\xi_-$, each case implying common divisors.
    
    Let us consider the case when there exists a Pauli matrix which is not mapped to a tensor product. It can not be mapped to itself, because then by Proposition \ref{prop:ev_one_periodic} the automaton would be periodic. So it has to be mapped to another Pauli matrix which has to expand in the next step. Thus we only need to consider the case of expanding Pauli matrices. The image has to consist of more than one kind of Pauli matrices, so $|\Phi(\sigma_i)|=1$ is ruled out as an invariant state. Moreover, the image can not contain the original Pauli matrix even once, because $\Phi(\sigma_{i_k})=(\Pi_{j\in\CN\setminus k}\Phi(\sigma_{i_j}))\Phi(\sigma_{i_k})$ implies $|\Phi(\sigma_{i_j})|=1,\,\forall i_j$ which is already ruled out. So no Pauli matrix may occur in its own image, particularly not in the central position. If we only consider those central positions, we get a local automaton. It is an easy calculation to show, that all of these automata, which map no Pauli matrix onto itself have trace one. So the trace of our CQCA contains a constant which is a contradiction to the condition that it has gliders. So $\Phi(\sigma_i)=0,\,i=1,2,3$ and the only invariant state is the one with the density matrix $\rho_\Phi=\frac{1}{2}\Id$, i.e., the tracial state. 
   
    In the following we consider the convergence properties of product states. Clearly, it suffices to establish the convergence of Pauli products. First, suppose that some $|\Phi(\sigma_j)|=1$. Then, according to Lemma \ref{lem:finite_pauli}, for all Pauli products different from $\Id$, and all times except at most one, the evolved product will contain a Pauli matrix different from $\sigma_j$, and hence will have zero expectation in $\omega\circ \CQCA^n$. Hence $\omega_\infty$ is the tracial state. 

    To treat the remaining cases, we hence assume from now on that $|\Phi(\sigma_j)|\leq\lambda < 1$ for all three $j$. Hence if some Pauli product has $k$ factors different from $\Id$, its expectation in $\omega$ is at most $\lambda^k$. So let $k(t)$ be the number of non-identity factors in the $t^{\rm th}$ iterate of some Pauli product. If $k(t)$ diverges, we have nothing to prove. So we may assume from now on that there is a constant $k_{max}<\infty$ such that $k(t)<k_{max}$ infinitely often. We focus on the subsequence for which this is the case.  
    
    If the overall length of the Pauli product (largest-smallest degree) remains finite, and since we have assumed the absence of periodic finite configurations, then we must have a glider for some power of $\CQCA$. As argued in the proof of Lemma \ref{lem:fractal_power}, this is also a glider for $\CQCA$. In fact, for any product of glider elements, the left going and the right going gliders will eventually be separated, and from that point onwards the $\omega$-expectation will not change anymore. Hence the limit does exist and will be some number of modulus $<1$.

    Hence we need only consider the case that $k_{max}$ is finite, but the positions of non-identity Pauli factors get more and more spread out. This is only possible, if some Pauli products near the edges of the given product have a similar property: no sub-product which gets widely separated from the rest is allowed to develop configurations with unbounded $k_{max}$, because then the overall bound could not hold. Thus we again find bounded configurations with smaller $k_{max}$, which may again be smaller gliders, or split up even further. By downwards induction we just get a complete decomposition into gliders for any configuration with non-divergent $k_{max}$.  This completes our proof.
  \end{proof}
\end{prop}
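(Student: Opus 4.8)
The plan is to prove the two assertions separately — non-existence of a nontrivial invariant product state, and convergence of all product states — both built on the elementary factorization
\[
  (\omega\circ\CQCA^{t})(\weyl{\xi})=\prod_{x\in\ZZ}\Phi\bigl(\weyl{(\sca^{t}\xi)(x)}\bigr),
\]
valid for a product state since $\CQCA$ sends a Pauli product to a phased Pauli product, together with the uniform bound $|\Phi(\sigma_{j})|\le 1$. For the invariance statement I would first use the classification: a glider CQCA has $\tr\sca=\varia^{-n}+\varia^{n}$, hence is non-periodic, so by Theorem \ref{thm:cqca_prop} at least one column of $\sca$ fails to consist of monomials and two of the three single-cell Paulis are mapped to Pauli products spreading over several cells. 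I would also note that such an image necessarily mixes at least two distinct non-identity Pauli types: otherwise one of $\xi_{+}=0$, $\xi_{-}=0$, $\xi_{+}=\xi_{-}$ holds, each forcing a common non-monomial divisor of the two components, against minimality.

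Granting this, suppose $\omega$ is an invariant product state with $\Phi(\sigma_{i})\neq 0$ for some $i$. Applying the factorization to an expanding Pauli $\weyl{\xi}$ whose image already contains two Pauli types rules out $|\Phi(\sigma_{i})|=1$ (that would force $\Phi(\sigma_{j})=0$ for $j\ne i$, incompatible with a mixed image), and if $0<|\Phi(\sigma_{i})|<1$ the product over the neighborhood can equal $\Phi(\weyl{\xi})$ only if every factor has modulus $1$ — again impossible. So no Pauli may occur in its own image, in particular not in the central cell; restricting to central cells then gives a \emph{local} automaton mapping no Pauli to itself, and a direct inspection of the finitely many such local automata shows they all have $\tr=1$. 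Thus $\tr\sca$ would have a nonzero constant term, contradicting $\tr\sca=\varia^{-n}+\varia^{n}$; hence $\Phi(\sigma_{i})=0$ for all $i$ and $\omega$ is the tracial state.

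For convergence put $\lambda=\max_{j}|\Phi(\sigma_{j})|$. If $\lambda=1$, so $\Phi$ is concentrated on a single $\sigma_{j}$, I would invoke Lemma \ref{lem:finite_pauli}: for every nontrivial finite Pauli product, at all times but at most one the evolved product contains a Pauli $\ne\sigma_{j}$ and hence has zero expectation, so $\omega_{\infty}$ is tracial. If $\lambda<1$ then $|(\omega\circ\CQCA^{t})(\weyl{\xi})|\le\lambda^{k(t)}$, where $k(t)$ counts the non-identity factors of $\sca^{t}\xi$. If $k(t)\to\infty$ the limit is $0$. Otherwise $k(t)$ stays below some $k_{\max}$ along a subsequence; since $\sca$ is non-periodic there are no nontrivial finite periodic configurations, so a subsequence of configurations of bounded support and bounded weight must be a glider for some power $\sca^{m}$, hence — as in the proof of Lemma \ref{lem:fractal_power} — a glider for $\sca$. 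For a product of gliders the left- and right-moving components separate beyond the neighborhood radius after finitely many steps, after which finite propagation makes the expectation constant, of modulus $<1$ because each glider is a nontrivial Pauli product and $\lambda<1$. The last case — $k(t)$ bounded but the support spreading without bound — I would settle by downward induction: a sub-cluster drifting arbitrarily far from the rest evolves independently by finite propagation and must keep bounded weight, hence is either a smaller glider or splits again, and iterating decomposes every non-divergent configuration into gliders.

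The main obstacle is this final step: rigorously splitting a bounded-weight, unbounded-extent configuration into independently evolving glider pieces. This requires a careful finite-propagation (Lieb-Robinson-type) argument that two clusters separated by more than the neighborhood size never interact again, bookkeeping that the weight bound is inherited by each piece, and the reduction ``bounded support plus no finite periodic configuration $\Rightarrow$ glider'' via Propositions \ref{prop:ev_one_periodic} and \ref{gliderCQCA}. One must also check that the same limit arises along every subsequence — which follows once the expectation is known to be eventually constant on glider products and to tend to $0$ on everything else — so that $\omega_{\infty}=\lim_{n}\omega\circ\CQCA^{n}$ genuinely exists and is $\CQCA$-invariant.
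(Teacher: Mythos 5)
Your proposal follows essentially the same route as the paper's own proof: the same factorization over cells, the same exclusion of invariant product states via the central-cell/local-automaton trace argument, and the same convergence case analysis ($\lambda=1$ via Lemma \ref{lem:finite_pauli}, divergent $k(t)$ giving $0$, bounded configurations reducing to gliders, and the downward-induction decomposition for spreading configurations). The obstacle you flag at the end — making the splitting into independently evolving glider pieces rigorous — is indeed the step the paper itself treats only sketchily, so your assessment is accurate.
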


\begin{prop}
  \label{prop:conv_product_fractal}
  Under the action of fractal CQCAs all product states converge to the tracial state. 
  \begin{proof}
    To show convergence for the fractal case, we need the results from Section \ref{sec:frac} and Appendix~\ref{app:frac}. They state, that a tensor product with only one kind of Pauli matrices can occur only once in the history of a fractal CQCA and that the number of non-identity tensor factors grows unbounded. With the arguments used in the proof of Theorem III.1 this means that for fractal CQCAs any given product state converges to the state which gives zero expectation value for any non-trivial tensor products of Pauli operators, i.e., to the tracial state.
  \end{proof}
\end{prop}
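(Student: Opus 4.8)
\emph{Proof idea.} The plan is to imitate the structure of the proof of Proposition~\ref{prop:conv_product_glider}, feeding in the two structural facts about fractal automata established in Section~\ref{sec:frac} and Appendix~\ref{app:frac}. Since translation-invariant states are determined by their values on the finite Pauli products $\weyl{\xi}$, which span a norm-dense subalgebra, and since all states have norm one, the weak convergence $\omega\circ\CQCA^n\to\omega_{\mathrm{tr}}$ is equivalent to showing $\omega(\CQCA^n\weyl{\xi})\to 0$ for every fixed Pauli product $\weyl{\xi}\neq\Id$, where $\omega_{\mathrm{tr}}$ denotes the tracial state, characterised by vanishing on all non-identity Pauli products. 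Because $\CQCA^n\weyl{\xi}$ equals $\weyl{\sca^n\xi}$ up to a phase of modulus one, where $\sca$ is the CSCA matrix of $\CQCA$, and because $\omega$ is a single-cell product state $\omega(\weyl{\eta})=\prod_i\Phi(\weyl{\eta_i})$, one has $|\omega(\CQCA^n\weyl{\xi})|\le\prod_{i}|\Phi(\weyl{(\sca^n\xi)_i})|$, a product ranging over the non-identity tensor factors of $\sca^n\xi$.

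First I would record the two inputs: (a) by the analysis of Section~\ref{sec:frac} and Appendix~\ref{app:frac}, if $\CQCA$ is fractal then for every fixed $\weyl{\xi}\neq\Id$ the number $k(n)$ of non-identity tensor factors of $\sca^n\xi$ tends to infinity; (b) a configuration involving only a single type of Pauli matrix (together with identities) can occur at most once along the whole orbit $\{\sca^n\xi : n\in\NN\}$. Then I would split into two cases according to the single-cell state $\Phi$, exactly as in the glider case. Writing the Bloch vector of $\Phi$ as $\vec r$ with $|\vec r|\le 1$: if $|\Phi(\sigma_j)|=1$ for some $j$ then $\vec r=\pm e_j$, so $\Phi(\sigma_k)=0$ for $k\ne j$; by (b) the iterate $\sca^n\xi$ fails to be of single type $\sigma_j$ for all but at most one $n$, so for the remaining $n$ it contains a factor $\sigma_k$ with $k\ne j$, forcing $\omega(\CQCA^n\weyl{\xi})=0$, and hence the limit is $\omega_{\mathrm{tr}}$. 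Otherwise $|\Phi(\sigma_j)|\le\lambda<1$ for all three $j$, and then $|\omega(\CQCA^n\weyl{\xi})|\le\lambda^{k(n)}\to 0$ by (a); again the limit is $\omega_{\mathrm{tr}}$.

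The main obstacle is not in this argument but in establishing the two structural inputs (a) and (b), which genuinely use the self-similar geometry of fractal space-time diagrams and are deferred to~\cite{Nesme} and the appendix. A secondary subtlety is that (a) must deliver honest divergence $k(n)\to\infty$ rather than mere unboundedness along a subsequence, since only the former yields convergence of the expectation values; should only a subsequence statement be available, one would have to supplement it with the downward-induction decomposition-into-gliders argument from the proof of Proposition~\ref{prop:conv_product_glider} to rule out persistent bounded sub-configurations, but in the fractal setting such bounded configurations are excluded outright by the absence both of gliders (for every power of $\sca$, via Lemma~\ref{lem:fractal_power}) and of periodic finite configurations.
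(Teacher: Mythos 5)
Your proposal is correct and follows essentially the same route as the paper: the paper's proof is just a terse pointer to Lemma~\ref{lem:finite_pauli}, Lemma~\ref{Pauli_viele}, and the case analysis from the proof of Proposition~\ref{prop:conv_product_glider}, which is exactly what you spell out. Your observation that Lemma~\ref{Pauli_viele} only gives unboundedness along a subsequence, and that genuine divergence of $k(n)$ must be extracted by excluding bounded sub-configurations via the absence of gliders for all powers (Lemma~\ref{lem:fractal_power}) and of periodic finite configurations, is a point the paper leaves implicit, and you resolve it the same way the glider proof does.
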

\subsection{Invariance and convergence of stabilizer states}
\label{sec:stab_inv_conv}
In this chapter we consider pure translation-invariant stabilizer states. A stabilizer state is the common eigenstate of an abelian group of operators (usually tensor products of Pauli matrices) called the stabilizer group $\stab$. For those not familiar with the stabilizer formalism we give a short example.
\begin{example}\rm
  \label{exam:bell}
  The stabilizer group $\stab=\langle \sigma_1\otimes \sigma_1,\sigma_3\otimes \sigma_3\rangle$ stabilizes the Bell state $\psi=1/\sqrt{2}(\ket{1,1}+\ket{0,0})$. We check this by applying the stabilizer generators to the state (normalization is omitted):
  \begin{eqnarray*}
    (\sigma_1\otimes \sigma_1)\psi&=&(\sigma_1\otimes \sigma_1)(\ket{1}\otimes\ket{1})+(\sigma_1\otimes \sigma_1)(\ket{0}\otimes\ket{0})\\
    &=&\ket{0,0}+\ket{1,1}=\psi\\
    (\sigma_3\otimes \sigma_3)\psi&=&(\sigma_3\otimes \sigma_3)(\ket{1}\otimes\ket{1})+(\sigma_3\otimes \sigma_3)(\ket{0}\otimes\ket{0})\\
    &=&(-\ket{1})\otimes(-\ket{1})+\ket{0}\otimes\ket{0}\\
    &=&(-1)^2\ket{1,1}+\ket{0,0}=\psi
  \end{eqnarray*}
  \fin
\end{example}
Now we move on to translation-invariant stabilizer states. Consider the stabilizer group $\stab=\langle \weyl{\hat\tau^x\xi}, x \in \ZZ \rangle$ generated by the translates of a Pauli tensor product $\weyl{\xi}$, where $\hat\tau^x$ denotes the $x$-th power of the phase space translation. The unique state $\omega $ satisfying $\omega(A)=1$ for all $A \in \stab$ is called the pure translation-invariant stabilizer state corresponding to the stabilizer $\stab$ \cite{SchlingemannCQCA}. In the following we will refer to the stabilizer group simply as ``stabilizer''.
\subsubsection{Invariance}
\label{sec:stab_inv}
Let us first search which CQCAs leave pure translation-invariant stabilizer states invariant. 
\begin{prop}
  Only periodic CQCAs can leave pure translation-invariant stabilizer states invariant. For each such stabilizer state with stabilizer $\stab=\langle\weyl{\hat\tau^x\xi},\,x\in\ZZ\rangle$ the CQCAs that leave the state invariant are periodic and form a group. The corresponding CSCAs are
  \begin{equation}
    \sca_\xi(a)=\left(\begin{array}{cc}
                   1+a\xi_+\xi_-&a\xi_+^2\\
                   a\xi_-^2&1+a\xi_+\xi_-
                 \end{array}
           \right),
  \end{equation}
  for a centered palindrome $a$.
  \begin{proof}
    The invariance condition for a stabilizer state with stabilizer $\stab=\langle\weyl{\hat\tau^x\xi},\,x\in\ZZ\rangle$ is 
    \begin{equation*}
      \sca\xi=\alpha\xi,\,\alpha\in\fring.
    \end{equation*}
    In \cite{SchlingemannCQCA} it was proved that for every translation-invariant stabilizer state $\omega$ with stabilizer $\stab=\langle\weyl{\hat\tau^x\xi},\,x\in\ZZ\rangle$ there exists a CQCA $B$, which maps $\weyl{0,1}$ to $\weyl{\xi}$. Thus the condition becomes
    \begin{equation*}
      \sca{\bf b}\tbinom{0}{1}=\alpha{\bf b}\tbinom{0}{1}.
    \end{equation*}
    If we know automata that leave the ``all spins up'' state invariant, we can construct automata for arbitrary translation-invariant pure stabilizer states via
    \begin{equation*}
      \begin{array}{crcl}
        &\phantom{\dbinom{0}{0}}\sca\tbinom{0}{1}&=&\alpha\tbinom{0}{1}\\
        \Leftrightarrow&\underbrace{{\bf b}\sca{\bf b}^{-1}}_{\sca_\xi}\underbrace{{\bf b}\tbinom{0}{1}}_{\xi}& = & \alpha{\bf b}\tbinom{0}{1}\\
      \Leftrightarrow&\sca_\xi\xi&=&\alpha\xi.
      \end{array}
    \end{equation*}
    The only type of CQCAs that leave the ``all spins up'' state invariant are the shear transformations, which are represented by matrices 
    \begin{equation*}
      \sca=\shear{a}
    \end{equation*}
    with some palindrome $a$.
    A direct computation shows that the factor $\alpha$ has to be invertible. As we are in characteristic two and deal with centered automata, only $\alpha=1$ is possible. For a general translation-invariant stabilizer state the CQCAs that leave it invariant are represented by the CSCAs with matrices
    \begin{equation*}
      \sca_\xi(a)=\left(\begin{array}{cc}
                   1+a\xi_+\xi_-&a\xi_+^2\\
                   a\xi_-^2&1+a\xi_+\xi_-
                 \end{array}
           \right).
    \end{equation*}
    The product of two such CQCAs also leaves the state invariant with $\sca_\xi(a)\sca_\xi(b)=\sca_\xi(a+b)$. The inverse of each periodic CQCA is one of its powers, so it is also included in this set. We therefore have a group of period two CQCAs for each translation-invariant pure stabilizer state that leave this state invariant. 
  \end{proof}
\end{prop}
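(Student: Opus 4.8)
The plan is to translate invariance into a linear condition on the symplectic matrix $\sca$, reduce the general stabilizer state to the ``all spins up'' state by conjugation, classify the CSCAs that fix that state, and transport the answer back.

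First I would record, following \cite{SchlingemannCQCA}, the polynomial description of the state: with $\xi$ taken minimal, $|\omega(\weyl{\eta})|=1$ exactly when $\eta$ lies in the $\fring$-submodule $\fring\,\xi\subseteq\fring^2$ generated by $\xi$, and $\omega(\weyl{\eta})=0$ otherwise. Since $\CQCA[\weyl{\eta}]=\phase{\eta}\weyl{\sca\eta}$ with $|\phase{\eta}|=1$, and $\sca$ is $\fring$-linear and invertible ($\det\sca=1$), the state $\omega\circ\CQCA$ is again of this type, now with submodule $\fring\,(\sca^{-1}\xi)$. Hence $\omega$ is $\CQCA$-invariant iff $\fring\,(\sca^{-1}\xi)=\fring\,\xi$, which is equivalent to $\sca\xi=\alpha\xi$ for some $\alpha\in\fring$. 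Applying $\sca^{-1}$ gives $\xi=\alpha\,(\sca^{-1}\xi)$ with $\sca^{-1}\xi$ a vector of polynomials, so $\alpha$ divides $\gcd(\xi_+,\xi_-)=1$; thus $\alpha$ is a unit and, by Proposition~\ref{mono}, a monomial, and since $\sca=\bar\sca$ consists of centered palindromes the same holds for $\alpha=\bar\alpha$, which forces $\alpha=1$. In particular $\sca\xi=\xi$, so by Proposition~\ref{prop:ev_one_periodic} every such $\sca$ is periodic of period two; this already proves the first assertion.

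Next I would invoke the intertwiner of \cite{SchlingemannCQCA}: there is a CSCA $\scb$ with $\scb\tbinom{0}{1}=\xi$, so that conjugation with $\scb$ carries the ``all spins up'' state (stabilizer $\fring\tbinom{0}{1}$) to $\omega$. Then $\sca$ leaves $\omega$ invariant iff $\scb^{-1}\sca\scb$ leaves the all-spins-up state invariant; by the previous paragraph applied to $\xi=\tbinom{0}{1}$ this means $\scb^{-1}\sca\scb\,\tbinom{0}{1}=\tbinom{0}{1}$, which forces the second column of $\scb^{-1}\sca\scb$ to be $\tbinom{0}{1}$ and then, via $\det=1$, its top-left entry to be $1$; hence $\scb^{-1}\sca\scb=\shear{a}$ with $a$ its lower-left entry, a centered palindrome. (This is precisely the statement, used in the text, that the only CSCAs fixing the all-spins-up state are the shears.) Conjugating back, every invariant $\sca$ equals $\sca_\xi(a):=\scb\,\shear{a}\,\scb^{-1}$. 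A direct computation of this conjugation --- using $\det\scb=1$, characteristic two, and $\scb\tbinom{0}{1}=\tbinom{\xi_+}{\xi_-}$ --- produces exactly the matrix displayed in the statement, whose trace is $0$. Closure under the group operations then follows from $\sca_\xi(a)\,\sca_\xi(b)=\scb\,\shear{a}\,\shear{b}\,\scb^{-1}=\scb\,\shear{a+b}\,\scb^{-1}=\sca_\xi(a+b)$, with unit $\sca_\xi(0)=\Id$ and each $\sca_\xi(a)$ its own inverse by Proposition~\ref{prop:periodic_sca}; so these CSCAs form a group, isomorphic to the additive group of centered palindromes.

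The only non-routine ingredients are those imported from \cite{SchlingemannCQCA}: that a pure translation-invariant stabilizer state is determined by the submodule generated by a minimal $\xi$, that two such states coincide iff these submodules coincide, and the existence of the intertwiner $\scb$ with $\scb\tbinom{0}{1}=\xi$. Granting these, the argument is the linear bookkeeping above plus the single matrix computation of $\scb\,\shear{a}\,\scb^{-1}$; the one place to be careful is the module step in the first paragraph --- in particular deducing that $\alpha$ is forced to be a unit and hence $1$ --- but I do not expect a genuine obstacle there.
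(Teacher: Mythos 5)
Your proof is correct and follows essentially the same route as the paper: reduce to the all-spins-up state via the intertwiner $\scb$ with $\scb\tbinom{0}{1}=\xi$, identify the stabilizing CSCAs there as the shears, and conjugate back to obtain $\sca_\xi(a)=\scb\,\shear{a}\,\scb^{-1}$. You additionally spell out two steps the paper leaves as assertions --- that $\alpha$ must equal $1$ (via minimality of $\xi$ forcing $\alpha$ to be a unit, and $\bar\xi=\xi$ forcing $\alpha=\bar\alpha$) and that fixing $\tbinom{0}{1}$ together with $\det=1$ forces the shear form --- which is a welcome tightening but not a different argument.
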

\subsubsection{Convergence}
We now consider an arbitrary translation-invariant stabilizer state $\omega_{\weyl{\xi}}$ and the respective stabilizer $\stab=\langle\weyl{\hat\tau^x\xi},\,x\in\ZZ\rangle$.
There exists a CQCA $B$ satisfying $B(\weyl{\xi})=\sigma_3$. We rewrite our state in terms of the $\sigma_3$-state
\begin{equation}
 \omega_{\weyl{\xi}}=\omega_{\sigma_3}\circ B^{-1}
\end{equation}
and use
\begin{eqnarray*}
&&\omega_{\weyl{\xi}}(T^n (\bigotimes \sigma_i))\\
&=&\omega_{\sigma_3}(B^{-1}\CQCA^n (\bigotimes \sigma_i))\\
&=&\omega_{\sigma_3}((B^{-1}\CQCA B)^{n} B^{-1} (\bigotimes \sigma_i)).
\end{eqnarray*}
$B^{-1}\CQCA B$ is also a CQCA of the same type (same trace) as $\CQCA$ and $B^{-1} (\bigotimes \sigma_i)$ is a tensor product of Pauli matrices. $\omega_{\sigma_3}$ is a product state and thus converges according to Propositions \ref{prop:conv_product_glider} and \ref{prop:conv_product_fractal} for glider and fractal CQCAs.
\subsection{Stationary quasifree states and convergence of quasifree states for one-step glider CQCAs}
\label{sec:stat_conv_qf}
In the previous sections we discussed 
stationary states and convergence of states
under general CQCA actions.
In this section we consider the particular
glider CQCA $\glider$ which is represented by the CSCA
\begin{equation}
  \label{eq:glider}
  \glidersca=\left(\begin{array}{cc}
      1&\varia^{-1}+\varia\\
      1&\varia^{-1}+1+\varia
    \end{array}
  \right),
\end{equation}
because for this CQCA we can obtain new types of 
stationary states and new convergence results by 
employing the Araki-Jordan-Wigner transformation.
Using the Theorem for glider equivalence \ref{thm:glider_equiv} we can construct invariant states for all automata with gliders that move one step in space every timestep. According to this theorem for any speed-one glider CQCA $B$, there is a CQCA $A$ such that $B=AGA^{-1}$, and  if $\omega$ is a $\glider$-invariant state, then $\omega \circ A^{-1}$
will be $B$-invariant.
\subsubsection{Araki-Jordan-Wigner transformation}\label{sec:AJW}
\label{sec:ajw}
The Jordan-Wigner transformation is a way to map 
a finite spin-chain algebra $\mathcal{M}_2^{\otimes N}$ to the 
algebra of a finite fermion chain. 
This method has been extensively used
in solid state physics \cite{LSM,BMD,BM}. 
However, the method cannot be carried over directly
to two-sided infinite chains.
One has to introduce an additional infinite ``tail-element'' 
for the transformation to work. 
This extended transformation was introduced by Araki
in his study of the two-sided infinite XY-chain \cite{araki-ajw},
and it is sometimes referred to as the Araki-Jordan-Wigner
construction.

The C$^*$-algebra describing a two-sided infinite fermion chain is the algebra
$\mathcal{F}$=CAR($\ell^{2}(\mathbb{Z})$), i.e., it is 
the C$^{*}$-algebra generated by $\mathbbm{1}$ and and the annihilation and creation operators
$\{c^{\phantom*}_x \}_{x \in \mathbb{Z}}$ and $\{c^{*}_{x}\}_{x \in \mathbb{Z}}$, 
satisfying the canonical anticommutation relations:
\begin{eqnarray*}
      c^{\phantom*}_{x}c^*_{y}+c^*_{y}c^{\phantom*}_{x}=
        \delta_{x,y} \mathbbm{1}, \quad \quad
      c^{\phantom*}_{x}c^{\phantom*}_{y}+c^{\phantom*}_{y}c^{\phantom*}_{x}=0.
\end{eqnarray*}
The translation automorphism $\trans{\mathcal{F}}$ on this algebra is defined by
$\trans{\mathcal{F}}(c_{x})= c_{x+1}$. $\mathcal{F}$ is isomorphic to the
observable algebra $\FA$ of the spin chain, but there 
exist no isomorphism $\iota \colon \FA \to \mathcal{F}$ 
that satisfies the property $\iota \circ \trans{} = \trans{\mathcal{F}} \circ \iota $.
This intertwining property would be needed 
to derive the translation invariance of a state $\omega \circ \iota$ on 
$\FA$ from that of $\omega$ on $\mathcal{F}$. 
This problem can be circumvented by the Araki-Jordan-Wigner construction.

The ordinary Jordan-Wigner isomorphism between the
$N$-site spin-chain algebra $M_2^{\otimes N}$
(generated by the finite number of 
Pauli matrices $\{\sigma_1^x,\sigma_2^x,\sigma_3^x \}_{x \in \{0,1, \ldots , N-1\}}$ ), 
and the $N$-site fermion-chain algebra
(generated by $\{c_x, \; c_x^* \}_{x \in \{0,1, \ldots , N-1\}}$), 
is given by
\begin{eqnarray*}
\iota^N_{JW}(\sigma_1^x)&:=&\prod_{y=0}^{x-1} (2c_y^{\phantom*} c_y^{*} -
\mathbbm{1})
(c_x^{*}+c_x^{\phantom*}),\\
\iota^N_{JW}(\sigma_2^x)&:=&\prod_{y=0}^{x-1} (2c_y^{\phantom*} 
c_y^{*} -
\mathbbm{1})
i(c_x^{*}-c_x^{\phantom*}), \\
\iota^N_{JW}(\sigma_3^x)&:=&2c_x^{\phantom*} c_x^{*} - \mathbbm{1}.
\end{eqnarray*} 
However, as we have mentioned, the Jordan-Wigner transformation cannot be
generalized to be a translation-intertwining isomorphism between
the two-sided infinite spin and fermion chains.
In an informal way, one could say that an element of the form
``$\prod_{y=-\infty}^{-1}(2c_y^{\phantom*} c_y^{*} -
\mathbbm{1})$'' would be needed
in the definition of a ``two-sided infinite chain Jordan-Wigner
transformation''. However, $\mathcal{F}$ doesn't  
contain such an element.
The basic idea of the Araki-Jordan-Wigner construction \cite{araki-ajw} 
is to extend the algebra $\mathcal{F}$ with such an 
infinite tail-element. More concretely, one 
defines the C$^*$-algebra $\widetilde{\mathcal{F}}$ to be the extension of
$\mathcal{F}$ by a self-adjoint unitary element $U$ 
satisfying:\footnote{In the literature the symbol $T$ is used
almost exclusively for denoting this unitary element. However, we 
chose to denote it by $U$ to avoid confusion with 
the time-evolution automorphism, which is
denoted by $T$ in this paper.}
\begin{equation} 
    Uc_xU=\begin{cases} 
           \text{ } c_x & \text {if} \; \;  x \ge 0\\
           -c_x &  \text {if} \; \;  x<0
          \end{cases} . \nonumber
  \end{equation}
Clearly, every element of $\widetilde{\mathcal{F}}$ 
can be uniquely written in the form $a+Ub$ with $a,b\in\mathcal{F}$, i.e.
$\widetilde{\mathcal{F}}=\mathcal{F}+U\mathcal{F}$. One can 
extend the translation automorphism 
$\trans{\mathcal{F}}$ to $\widetilde{\mathcal{F}}$ 
through the formula $\tilde{\trans{}}_{\mathcal{F}}(a+Ub):=
\trans{\mathcal{F}}(a)+U(2c_0c_0^*-\mathbbm{1}) \trans{\mathcal{F}}(b)$.
Let us define the elements
\begin{equation*}
  A_x := \begin{cases} \prod_{y=0}^{x-1} 
(2c_{y}^{\phantom*}c_{y}^{*}-\mathbbm{1}) 
&\text{if } x>0 \\
       \mathbbm{1} & \text{if } x=0 \\ 
\prod_{y=x}^{-1} (2c_{y}^{\phantom*}c_{y}^{*}-\mathbbm{1})
       &\text{if } x<0 \end{cases} ,
 \end{equation*}
 and let us introduce the elements
 \begin{equation}
 \widehat{\sigma_1}^x:= UA_x(c_x^{*}+c_x^{\phantom*})   , \quad 
 \widehat{\sigma_2}^x:= UA_x i(c_x^{*}-c_x^{\phantom*}) , \quad
 \widehat{\sigma_3}^x:=(2c_x^{\phantom*} c_x^{*}-\mathbbm{1}). \label{NewPauli}
 \end{equation}
Denote by $\widehat{\mathcal{F}}$ the C$^*$-subalgebra of $\widetilde{\mathcal{F}}$ which is
generated by the elements of
$\{ \widehat{\sigma_1}^x, \widehat{\sigma_2}^x, \widehat{\sigma_3}^x 
\, | \, x \in \mathbb{Z} \}$. 
A direct computation shows that  
the elements defined in (\ref{NewPauli}) having the same 
``spatial index'' $x$ satisfy the Pauli-relations, while any two of 
these elements having two
different spatial indices commute.
Hence $\widehat{\mathcal{F}}$ and $\FA$ are isomorphic,
and an isomorphism is given by the map 
$\Pi: \widehat{\mathcal{F}} \to \FA$, defined as 
\begin{equation*}
\Pi(\widehat{\sigma_1}^x)= \sigma_1^x \, , \; 
\Pi(\widehat{\sigma_2}^x)= \sigma_2^x \, , \;
\Pi(\widehat{\sigma_3}^x)= \sigma_3^x \, , \; \; \; \forall x \in \mathbb{Z}.
\end{equation*}
Moreover, if we denote by $\widehat{\trans{}}_{\mathcal{F}}$ the restriction of
$\widetilde{\trans{}}_{\mathcal{F}}$ to $\widehat{\mathcal{F}}$ then
\begin{equation}
\trans{} \circ \Pi=\Pi \circ \widehat{\trans{}}_{\mathcal{F}}, \nonumber
\end{equation}
i.e., $\Pi$ intertwines the translations of the two algebras.

Let $\omega$ be a translation-invariant state on the
fermion-chain algebra $\mathcal{F}$.
By defining $\widetilde{\omega}(a+Ub):=\omega(a)$ we get a
translation-invariant state on 
$\widetilde{\mathcal{F}}$.\footnote{It is clear that 
by this definition $\widetilde{\omega}$ will be a 
normalized functional on $\widetilde{F}$. The
positivity of $\widetilde{\omega}$ follows from
$\widetilde{\omega}((a+Ub)^*(a+Ub))=\omega(a^*a+b^*b)\ge 0$.}
Restricting this state to $\widehat{\mathcal{F}}$ we get 
a $\widehat{\trans{}}_{\mathcal{F}}$-invariant state 
$\widehat{\omega}$, and 
$\omega^{JW}=\widehat{\omega} \circ \Pi^{-1} $ 
will be a translation-invariant state on the quantum spin-chain.
In this way we can transfer translation-invariant states 
from the fermion-chain to the spin-chain.

Any CQCA automorphism $\CQCA$ can naturally be transferred
to an automorphism on $\widehat{\mathcal{F}}$ commuting with
$\widehat{\trans{}}_{\mathcal{F}}$ by the definition 
$\widehat{\CQCA}:=\Pi \circ \CQCA \circ \Pi^{-1}$.
In the case of the glider CQCA $\glider$ (\ref{eq:glider}) we can do even more. The
transferred automorphism $\widehat{\glider}$ 
(characterized by 
$\widehat{\glider}(\widehat{\sigma_1}^x)= \widehat{\sigma_2}^x $, 
$\widehat{\glider}(\widehat{\sigma_2}^x)=\widehat{\sigma_2}^{x-1}\widehat{\sigma_1}^x\widehat{\sigma_2}^{x+1}$)
can also be extended to an automorphism $\widetilde{\glider} : \widetilde{F} \to \widetilde{F}$ in a translation-invariant way 
($\widetilde{\glider} \circ \tilde{\trans{}}_F= 
\tilde{\trans{}}_F \circ \widetilde{\glider}$) with the following
definition:\footnote{We only have to define the image of $U$ under
$\widetilde{\glider}$, since any element
$\widetilde{f} \in \widetilde{\mathcal{F}}$ 
can be uniquely written as
a linear combination $\widetilde{f}=\widehat{f}_1+U\widehat{f}_2$, where
$\widehat{f}_1, \widehat{f}_2 \in \ \widehat{\mathcal{F}}$.
}
\begin{equation*}
\widetilde{\glider}(U)=iU(c^{\phantom*}_{-1}+c^{*}_{-1})(c^{\phantom*}_0-c^*_0).
\end{equation*}
It is exactly this type of ``translation-invariant extension property'' 
that allows us to find stationary translation-invariant states 
of the glider CQCA by the Araki-Jordan-Wigner method.

Restricting the $\widetilde{\glider}$ 
automorphism to the fermion-chain 
subalgebra $\mathcal{F} \subset \widetilde{\mathcal{F}}$, 
we obtain the automorphism $\glider_{\mathcal{F}}$, 
which acts in the following way:
\begin{equation*}
\glider_{\mathcal{F}}(c_x)=\frac{1}{2}(c^{*}_{x-1}+c^{\phantom*}_{x-1}+c^{*}_{x+1}-
c^{\phantom*}_{x+1}).
\end{equation*}

The automorphism $\glider_{\mathcal{F}}: \mathcal{F} \to \mathcal{F}$
takes an especially simple form in terms of 
majorana operators. These operators are defined as
\begin{equation}
m_{2x}:=i(c^{\phantom*}_x-c_x^*), \quad \quad 
m_{2x+1}:=c^{\phantom*}_x+c_x^*, \label{majorana} 
\end{equation}
for any $x \in \mathbb{Z}$, and they generate $\mathcal{F}$.
The action of $\glider_{\mathcal{F}}$ on these operators is
\begin{equation*}
\glider_{\mathcal{F}}(m_{2x})=m_{2x-2}, \quad \quad \quad 
\glider_{\mathcal{F}}(m_{2x+1})=m_{2x+3}.
\end{equation*}

Clearly, if we find a state $\omega$ on $\mathcal{F}$ that
is both $\trans{\mathcal{F}}\,$- and $\glider_{\mathcal{F}}$-invariant, 
then the Araki-Jordan-Wigner 
transformed state $\omega^{JW}$ will be a $\trans{}$- and $\glider$-invariant 
state on the quantum spin-chain. 
In the next section, we will recall the definition 
of quasifree states on fermion-chains and then determine 
the translation- and $\glider_{\mathcal{F}}$-invariant quasifree states.
In this way, using the Araki-Jordan-Wigner construction,
we can obtain a whole class of translation- and $\glider$-invariant states on 
the spin-chain.  
\subsubsection{Stationary quasifree states}
\label{sec:stat_qf}
A state $\omega: \mathcal{F} \to \mathbb{C}$ is called
{\it quasifree} if it vanishes on odd monomials of majorana operators
\begin{equation*}
\omega(m_{x_{1}} \dots m_{x_{2n-1}})= 0,
\end{equation*}
while on even monomials of majorana operators 
it factorizes in the following form:
\begin{eqnarray*}
\omega(m_{x_{1}} \dots m_{x_{2n}}) = 
\sum\limits_{\pi}{\text{sgn}}(\pi) 
\prod\limits_{l=1}^{n}
\omega(m_{x_{\pi(2l-1)}}m_{x_{\pi(2l)}}), 
\end{eqnarray*}
where the sum runs over all pairings of the set
$\{1,2,\ldots , 2n \}$, i.e., over all
the $\pi$ permutations of the $2n$ elements
which satisfy $\pi(2l-1)< \pi(2l)$ and $\pi(2l-1) < \pi(2l+1)$.
Hence, if we assume that $x_i \ne x_j$ when $i \ne j$, then
$\omega(m_{x_1}\ldots m_{x_2n})$ is  simply the 
{\it Pfaffian} of the $2n \times 2n$ 
antisymmetric matrix $A_{i,j}:=\omega(m_{x_{i}}m_{x_{j}})$.

An automorphism $\alpha: \mathcal{F} \to \mathcal{F}$ that maps any  
majorana operator onto a linear combination of 
majorana operators is called a {\it Bogoliubov automorphism} \cite{araki}.
For any quasifree state $\omega$, the Bogoliubov transformed
state $\omega':=\omega \circ \alpha$ will 
again be quasifree.

A quasifree state $\omega$ is translation-invariant, i.e. 
$\omega \circ \trans{\mathcal{F}}=\omega $, iff
$\omega(m_{x}m_{y})=\omega(m_{x+2}m_{y+2})$ for all $x,y \in \mathbb{Z}$.
Translation-invariant quasifree states are characterized by 
a majorana two-point matrix which is a $2 \times 2$-block Toeplitz matrix
of the form (for a proof, see e.g. \cite{AB} )
\begin{equation}
\begin{bmatrix}
                      \omega(m_{2x}m_{2y}) & \omega(m_{2x}m_{2y+1})   \\ 
                      \omega(m_{2x+1}m_{2y}) & \omega(m_{2x+1}m_{2y+1})
  \end{bmatrix} 
= 
\frac{1}{2\pi} \int_{-\pi}^{\pi}
\begin{bmatrix}
                      q_{1,1}^{(\omega)}(p) & q_{1,2}^{(\omega)}(p)   \\ 
                      q_{2,1}^{(\omega)}(p) & q_{2,2}^{(\omega)}(p)
  \end{bmatrix}e^{-ip(x-y)} {\rm d}p, \label{TrInv}  
\end{equation}
where $q_{i,j}^{(\omega)}$ are $L^{\infty}([-\pi,\pi])$ functions,
and the matrix function 
\begin{equation*}
Q^{(\omega)}(p)= 
\begin{bmatrix}
q_{1,1}^{(\omega)}(p) & q_{1,2}^{(\omega)}(p)\\
q_{2,1}^{(\omega)}(p) & q_{2,2}^{(\omega)}(p)
\end{bmatrix}
\in L^{\infty}_{2 \times 2}([-\pi,\pi])
\end{equation*}
satisfies
\begin{equation}
(Q^{(\omega)}(p))^{\dagger}=Q^{(\omega)}(p), 
\; \; Q^{(\omega)}(-p)=2 \cdot \mathbbm{1} -(Q^{(\omega)}(p))^T, \; \;
 0 \le Q^{(\omega)}(p) \le 2 \cdot \mathbbm{1} \label{QProp}
\end{equation}
almost everywhere (here $(Q^{(\omega)}(p))^{T}$ denotes the transpose of 
$Q^{(\omega)}(p)$). A translation-invariant 
quasifree state $\omega$ is pure iff 
for almost every $p$ the eigenvalues of
$Q^{(\omega)}(p)$ are either $0$ or $2$. 
$Q^{(\omega)}(p)$ is called the {\it symbol} 
of the majorana two-point matrix $M^{\omega}_{x,y}=\omega(m_{x}m_{y})$.

Now we are ready to characterize the 
translation-invariant quasifree states that are stationary with 
respect to the time-evolution 
 $\glider_{\mathcal{F}}$
(defined in the previous subsection).

\begin{prop}
  \label{prop:qfinv}
A translation-invariant quasifree state $\omega$ is 
invariant under the $\glider_{\mathcal{F}}$ automorphism iff the symbol of its 
majorana two-point matrix 
has the following form:
  \begin{equation*}
   Q^{(\omega)}(p)=\begin{bmatrix}
                      q_{1,1}^{(\omega)}(p) & 0   \\ 
                      0 & q_{2,2}^{(\omega)}(p)
  \end{bmatrix},
  \end{equation*}
where $q_{1,1}^{(\omega)}$ and $q_{2,2}^{(\omega)}$ 
are real $L^{\infty}([-\pi,\pi])$  
functions that take values between $0$ and $2$ (almost everywhere).
\begin{proof}
The $\glider_{\mathcal{F}}$ automorphism acts on the majorana fermions as
\begin{equation*}
\glider_{\mathcal{F}}(m_{2x})=m_{2x-2} \, , \; \; \; 
\glider_{\mathcal{F}}(m_{2x+1})=m_{2x+3} \;,
\end{equation*}
hence it is a Bogoliubov automorphism. Moreover, 
$\glider_{\mathcal{F}}$ commutes 
with the translations. Thus 
$\omega'=\omega \circ \glider_{\mathcal{F}}$
will again be a translation-invariant quasifree state,
and $\omega'$ is equal to $\omega$ iff its 
majorana two-point matrix is the same as that of 
$\omega$. The majorana two-point function of $\omega'$ is
\begin{eqnarray*}
\begin{bmatrix}
                      \omega'(m_{2x}m_{2y}) & \omega'(m_{2x}m_{2y+1})   \\ 
                      \omega'(m_{2x+1}m_{2y}) & \omega'(m_{2x+1}m_{2y+1})
  \end{bmatrix} 
&=&
\begin{bmatrix}
\omega(\glider_{\mathcal{F}}(m_{2x}m_{2y})) & 
\omega(\glider_{\mathcal{F}}(m_{2x}m_{2y+1}))   \\ 
\omega(\glider_{\mathcal{F}}(m_{2x+1}m_{2y})) & 
\omega(\glider_{\mathcal{F}}(m_{2x+1}m_{2y+1}))
  \end{bmatrix}\\
&=&
\begin{bmatrix}
                      \omega(m_{2x-2}m_{2y-2}) & \omega(m_{2x-2}m_{2y+3})   \\ 
                      \omega(m_{2x+3}m_{2y-2}) & \omega(m_{2x+3}m_{2y+3})
  \end{bmatrix}\\
&=&
\begin{bmatrix}
                      \omega(m_{2x}m_{2y}) & \omega(m_{2x-2}m_{2y+3})   \\ 
                      \omega(m_{2x+3}m_{2y-2}) & \omega(m_{2x+1}m_{2y+1})
  \end{bmatrix},
\end{eqnarray*} 
where we have used that $\omega$ is $\trans{\mathcal{F}}$-invariant,
and that $\trans{\mathcal{F}}(m_x)=m_{x+2}$, which follows from
the definition of the majorana operators (\ref{majorana}). 
Comparing the majorana two-point functions,
one can see that $\omega'=\omega$ iff 
$\omega(m_{2x}m_{2y+1})=\omega(m_{2x-2}m_{2y+3})$ for any 
$x,y \in \mathbb{Z}$. From form (\ref{TrInv}) of 
the majorana two-point functions of translation-invariant
quasifree states we know that there exists a $q^{(\omega)}_{1,2}$ 
$L^{\infty}([-\pi,\pi])$ function such that
\begin{equation*}
\omega(m_{2x}m_{2y+1})=\frac{1}{2\pi}\int\limits_{-\pi}^{\pi}
q_{1,2}^{(\omega)}(p) e^{-ip(x-y)} {\rm d}p. 
\end{equation*}
The condition $\omega(m_{2x}m_{2y+1})=\omega(m_{2x-2}m_{2y+3})$, which should hold
for all $x,y \in \mathbb{Z}$, means that 
\begin{eqnarray*}
&&\frac{1}{2\pi}\int\limits_{-\pi}^{\pi}
q_{1,2}^{(\omega)}(p)e^{-ip(x-y)}{\rm d}p -
\frac{1}{2\pi} \int\limits_{-\pi}^{\pi} q_{1,2}^{(\omega)}(p)
e^{-ip((x-1)-(y+1))}{\rm d}p\\
&=&
\frac{1}{2 \pi}\int\limits_{-\pi}^{\pi}q_{1,2}^{(\omega)}(p)(1-e^{2ip})
 e^{-ip(x-y)} {\rm d}p =0
\end{eqnarray*}
must be satisfied. 
It is a well known theorem in functional analysis, that 
an $L^{\infty}([-\pi,\pi])$
function for which the Fourier transformation vanishes  
must be almost everywhere zero, 
hence $q_{1,2}^{(\omega)}(p) \cdot (1-e^{2ip})=0$, from which one concludes 
that $q_{1,2}^{(\omega)}(p)=0$. This means that for a $\trans{\mathcal{F}}$-
and $\glider_{\mathcal{F}}$-invariant state $\omega$ the symbol of the two-point 
majorana matrix has to be of the form
\begin{equation*}
\begin{bmatrix}
                      q_{1,1}^{(\omega)}(p) & 0   \\ 
                      0 & q_{2,2}^{(\omega)}(p)
  \end{bmatrix},
\end{equation*}
where $q^{(\omega)}_{1,1}$ and $q_{2,2}^{(\omega)}$ are $L^{\infty}([-\pi, \pi])$
functions, and according to (\ref{QProp}) 
they also have to satisfy the inequalities 
$0 \le q^{(\omega)}_{1,1}(p) \le 2$ and $0 \le q^{(\omega)}_{2,2}(p) \le 2$
almost everywhere. Thus we have arrived at our proposition.
\end{proof}
\end{prop}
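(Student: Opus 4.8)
The plan is to exploit that $\glider_{\mathcal{F}}$ is a Bogoliubov automorphism which, because $\glider_{\mathcal{F}}(m_{2x})=m_{2x-2}$ and $\glider_{\mathcal{F}}(m_{2x+1})=m_{2x+3}$ together with $\trans{\mathcal{F}}(m_x)=m_{x+2}$ (from (\ref{majorana})), commutes with the translation $\trans{\mathcal{F}}$. Hence $\omega':=\omega\circ\glider_{\mathcal{F}}$ is again a translation-invariant quasifree state, and since such states are determined by their majorana two-point matrix, $\omega$ is $\glider_{\mathcal{F}}$-invariant iff the two-point matrix of $\omega'$ equals that of $\omega$. So the whole proof reduces to comparing two-point functions.

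First I would compute the four blocks of $\omega'(m_im_j)$ using the explicit action of $\glider_{\mathcal{F}}$ and the $\trans{\mathcal{F}}$-invariance of $\omega$: the even-even block gives $\omega(m_{2x-2}m_{2y-2})=\omega(m_{2x}m_{2y})$ and the odd-odd block gives $\omega(m_{2x+3}m_{2y+3})=\omega(m_{2x+1}m_{2y+1})$, so the diagonal blocks are automatically unchanged. Therefore invariance is equivalent to the single family of identities $\omega(m_{2x}m_{2y+1})=\omega(m_{2x-2}m_{2y+3})$ for all $x,y\in\ZZ$, the $(2,1)$-block condition being then forced by the Hermiticity $Q^{(\omega)\dagger}=Q^{(\omega)}$ in (\ref{QProp}).

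Next I would pass to the symbol. By (\ref{TrInv}) there is $q^{(\omega)}_{1,2}\in L^{\infty}([-\pi,\pi])$ with $\omega(m_{2x}m_{2y+1})=\tfrac{1}{2\pi}\int_{-\pi}^{\pi}q^{(\omega)}_{1,2}(p)e^{-ip(x-y)}\,\mathrm{d}p$, and the invariance condition becomes $\tfrac{1}{2\pi}\int_{-\pi}^{\pi}q^{(\omega)}_{1,2}(p)\bigl(1-e^{2ip}\bigr)e^{-ip(x-y)}\,\mathrm{d}p=0$ for every integer $x-y$; i.e.\ all Fourier coefficients of the $L^{\infty}$ function $q^{(\omega)}_{1,2}(p)(1-e^{2ip})$ vanish, so it is zero almost everywhere. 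Since $1-e^{2ip}$ vanishes only on the null set $\{0,\pm\pi\}$, this forces $q^{(\omega)}_{1,2}=0$ a.e., and then $q^{(\omega)}_{2,1}=\overline{q^{(\omega)}_{1,2}}=0$ a.e. Conversely, if $Q^{(\omega)}$ is diagonal the off-diagonal blocks agree trivially, so $\omega'=\omega$; the remaining assertions (that the surviving entries are real and lie in $[0,2]$) are just the specialization of (\ref{QProp}) to diagonal symbols.

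The only ingredients that are not pure bookkeeping are the structural observations at the outset — that $\glider_{\mathcal{F}}$ is Bogoliubov and $\trans{\mathcal{F}}$-commuting, which legitimizes working with two-point symbols at all — and the functional-analytic fact that an $L^{\infty}([-\pi,\pi])$ function with vanishing Fourier coefficients is null almost everywhere. With those granted, I expect the main place to be careful is tracking the even/odd index shifts so that precisely the off-diagonal symbol entry (and not the diagonal ones) picks up the killing factor $1-e^{2ip}$; everything else is mechanical.
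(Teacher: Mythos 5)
Your proposal is correct and follows essentially the same route as the paper's proof: reduce invariance to equality of majorana two-point matrices, observe that translation invariance makes the diagonal blocks automatically stable so that only the condition $\omega(m_{2x}m_{2y+1})=\omega(m_{2x-2}m_{2y+3})$ survives, pass to the symbol to obtain $q^{(\omega)}_{1,2}(p)(1-e^{2ip})=0$ a.e., and conclude $q^{(\omega)}_{1,2}=0$. Your explicit remarks that $1-e^{2ip}$ vanishes only on a null set and that $q^{(\omega)}_{2,1}=0$ follows from Hermiticity make two small steps precise that the paper leaves implicit, but the argument is the same.
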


Using the Araki-Jordan-Wigner transformation we can transfer 
such a $\trans{\mathcal{F}}$- and $\glider_{\mathcal{F}}$-invariant quasifree 
state $\omega$ on $\mathcal{F}$
to a state $\omega^{JW}$ on the spin chain which is $\trans{}$-invariant 
and stationary with respect to the glider CQCA $\glider$. 
\subsubsection{Convergence of quasifree states}
\label{sec:conv_qf}
In this section we will show that under the
repeated action of the  $\glider_{\mathcal{F}}$ automorphism,
any translation-invariant quasifree state will 
converge to one of the 
$\glider_{\mathcal{F}}$-invariant states
specified in Proposition \ref{prop:qfinv}.

\begin{prop} \label{prop:qfconv}
Let $\omega: \mathcal{F} \to \mathbb{C}$ be a translation-invariant 
quasifree state with a majorana two-point matrix that belongs to the symbol
\begin{equation*}
Q^{(\omega)}(p)=
\begin{bmatrix}
q_{1,1}^{(\omega)}(p) & q_{1,2}^{(\omega)}(p)\\
q_{2,1}^{(\omega)}(p) & q_{2,2}^{(\omega)}(p)
\end{bmatrix},
\end{equation*} 
where $Q^{(\omega)} \in L^{\infty}_{2 \times 2}([-\pi,\pi])$ 
satisfies the relations (\ref{QProp}).
The
$n$ time-step evolved state is denoted by
$\omega_n:=\omega \circ \glider_{\mathcal{F}}^n$. The 
pointwise limit
$\omega_{\infty}(A):=\lim_{n \to \infty} \omega_n(A)$ exists 
for all $A \in \mathcal{F}$, and the function
$\omega_{\infty}: \mathcal{F} \to \mathbb{C}$ 
defined in this way will be a translation-invariant quasifree state with a
majorana two-point matrix that belongs to the following symbol
\begin{equation*}
Q^{\left(\omega_{\infty}\right)}(p)=
\begin{bmatrix}
q_{1,1}^{(\omega)}(p) & 0\\
0 & q_{2,2}^{(\omega)}(p)
\end{bmatrix}.
\end{equation*}
\begin{proof}
In the proof of Proposition \ref{prop:qfinv} we showed that
if $\omega$ is a translation-invariant quasifree state, 
then $\omega_2=\omega \circ \glider_{\mathcal{F}}$ will also be
such a state. By induction it follows that 
$\omega_n$ is also a translation-invariant quasifree state
for any $n \in \mathbb{N}^{+}$. 
Hence for an arbitrary odd monomial of majorana operators
$\omega_n(m_{x_1}m_{x_2}\cdots m_{x_{2N+1}})=0$, and the limit 
$\omega(m_{x_1}m_{x_2}\cdots m_{x_{2N+1}}):=
\lim_{n \to \infty}\omega_n(m_{x_1}m_{x_2}\cdots m_{x_{2N+1}})$
exists and is zero.

Next, we prove the pointwise convergence of the majorana 
two-point matrix. The majorana two-point matrix of $\omega_n$ is
\begin{eqnarray*}
\begin{bmatrix}
                      \omega_n(m_{2x}m_{2y}) & \omega_n(m_{2x}m_{2y+1})   \\ 
                      \omega_n(m_{2x+1}m_{2y}) & \omega_n(m_{2x+1}m_{2y+1})
  \end{bmatrix} 
&=&
\begin{bmatrix}
\omega(\glider_{\mathcal{F}}^n(m_{2x}m_{2y})) & 
\omega(\glider_{\mathcal{F}}^n(m_{2x}m_{2y+1}))   \\ 
\omega(\glider_{\mathcal{F}}^n(m_{2x+1}m_{2y})) & 
\omega(\glider_{\mathcal{F}}^n(m_{2x+1}m_{2y+1}))
  \end{bmatrix} \\
&=&
\begin{bmatrix}
                      \omega(m_{2x-2n}m_{2y-2n}) & 
\omega(m_{2x-2n}m_{2y+1+2n})   \\ 
                      \omega(m_{2x+1+2n}m_{2y-2n}) & 
\omega(m_{2x+1+2n}m_{2y+1+2n})
  \end{bmatrix}  \\
&=&
\begin{bmatrix}
     \omega(m_{2x}m_{2y}) & \omega(m_{2x-2n}m_{2y+1+2n})   \\ 
     \omega(m_{2x+1+2n}m_{2y-2n}) & \omega(m_{2x+1}m_{2y+1})
  \end{bmatrix}, 
\end{eqnarray*} 
this means that the symbol $Q^{\left(\omega_n \right)}$ is given by
\begin{equation*}
Q^{\left(\omega_{n}\right)}(p)=
\begin{bmatrix}
q_{1,1}^{(\omega)}(p) & q_{1,2}^{(\omega)}(p)e^{2inp}\\
q_{2,1}^{(\omega)}(p)e^{-2inp} & q_{2,2}^{(\omega)}(p)
\end{bmatrix}.
\end{equation*}
The limits of elements of the two-point majorana matrix are:
\begin{eqnarray}
&&
\begin{bmatrix}
\omega_{\infty}(m_{2x}m_{2y}) & \omega_{\infty}(m_{2x}m_{2y+1})   \\ 
\omega_{\infty}(m_{2x+1}m_{2y}) & \omega_{\infty}(m_{2x+1}m_{2y+1})
  \end{bmatrix}\nonumber\\
&=&
\begin{bmatrix}
\lim\limits_{n \to \infty} \omega_{n}(m_{2x}m_{2y}) & 
\lim\limits_{n \to \infty} \omega_{n}(m_{2x}m_{2y+1})   \\ 
\lim\limits_{n \to \infty} \omega_{n}(m_{2x+1}m_{2y}) 
&\lim\limits_{n \to \infty} \omega_{n}(m_{2x+1}m_{2y+1})
  \end{bmatrix} \nonumber \\
&=&\frac{1}{2 \pi}
\begin{bmatrix}
\lim\limits_{n \to \infty} \int_{-\pi}^{\pi} q_{1,1}^{(\omega)}(p) e^{-i(x-y)p} {\rm d}p & 
\lim\limits_{n \to \infty} \int_{-\pi}^{\pi} q_{1,2}^{(\omega)}(p)
e^{-i(x-y-2n)p} {\rm d}p \\ 
\lim\limits_{n \to \infty} \int_{-\pi}^{\pi} q_{2,1}^{(\omega)}(p)
e^{-i(x-y+2n)p} {\rm d}p 
&\lim\limits_{n \to \infty} \int_{-\pi}^{\pi} q_{2,2}^{(\omega)}(p)  
e^{-i(x-y)p} {\rm d}p
\end{bmatrix}\nonumber \\
&=&\frac{1}{2\pi} \int_{-\pi}^{\pi}
\begin{bmatrix}
                      q_{1,1}^{(\omega)}(p) & 0   \\ 
                      0 & q_{2,2}^{(\omega)}(p)
  \end{bmatrix}e^{-ip(x-y)} {\rm d}p, \label{QfConvQ} 
\end{eqnarray}
where we have used the Riemann-Lebesgue lemma, which states that for
any integrable function $f$ defined on the interval $[a,b]$:
\begin{equation*}
\lim_{z \to \pm \infty} \int\limits_a^b f(x) e^{izx} dx =0.
\end{equation*}

For an arbitrary even monomial of majorana operators the
convergence can be proved by:
\begin{eqnarray}
\omega_{\infty}(m_{x_1}m_{x_2}\cdots m_{x_{2N+1}})&=&
\lim_{n \to \infty} \omega_n(m_{x_1}m_{x_2}\cdots m_{x_{2N+1}})
\nonumber \\
&=&\lim_{n \to \infty} \sum\limits_{\pi}{\text{sgn}}(\pi) 
\prod\limits_{l=1}^{N}
\omega_n(m_{x_{\pi(2l-1)}}m_{x_{\pi(2l)}})=\nonumber \\
&=& \sum\limits_{\pi}{\text{sgn}}(\pi) 
\prod\limits_{l=1}^{N}
\left[\lim_{n \to \infty} \omega_n(m_{x_{\pi(2l-1)}}m_{x_{\pi(2l)}})\right]=
\nonumber \\
&=&\prod\limits_{l=1}^{N}
\omega_{\infty}(m_{x_{\pi(2l-1)}}m_{x_{\pi(2l)}}). \label{QfConvPfaffian}
\end{eqnarray}

Since the finite linear combinations of majorana operators form a 
norm-dense subset in $\mathcal{F}$ 
the limit $\lim_{n \to \infty} \omega_n(f)$ must exist for any 
$f \in \mathcal{F}$ due to the uniform boundedness of the
states in the sequence, and hence we have obtained
with this pointwise limit a 
linear functional $\omega_{\infty}: \mathcal{F} \to \mathbb{R}$.
Moreover, $\omega_{\infty}$ is uniquely determined by the values it takes
on monomials of majorana operators, and since the
equations (\ref{QfConvQ}), (\ref{QfConvPfaffian}) are satisfied 
$\omega_{\infty}$ can only     
be the quasifree state for which the symbol of the
majorana two-point function is given by (\ref{QfConvQ}) .
\end{proof}
\end{prop}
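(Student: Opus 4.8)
The plan is to reduce everything to the two-point majorana matrix, using that a translation-invariant quasifree state is completely determined by its symbol $Q^{(\omega)}$ and that the Bogoliubov automorphism $\glider_{\mathcal{F}}$ acts on majorana operators by a mere index shift. First I would record that each $\omega_n=\omega\circ\glider_{\mathcal{F}}^n$ is again a translation-invariant quasifree state: in the proof of Proposition \ref{prop:qfinv} it was shown that $\glider_{\mathcal{F}}$ is a Bogoliubov automorphism commuting with $\trans{\mathcal{F}}$, so the claim follows by induction on $n$. Consequently $\omega_n$ vanishes on every odd monomial of majorana operators for all $n$, so the pointwise limit on odd monomials exists trivially and equals $0$.

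Next I would compute the two-point function. Since $\glider_{\mathcal{F}}^n(m_{2x})=m_{2x-2n}$ and $\glider_{\mathcal{F}}^n(m_{2x+1})=m_{2x+1+2n}$, and $\omega$ is $\trans{\mathcal{F}}$-invariant with $\trans{\mathcal{F}}(m_x)=m_{x+2}$, the diagonal blocks $\omega_n(m_{2x}m_{2y})$ and $\omega_n(m_{2x+1}m_{2y+1})$ coincide with those of $\omega$, while the off-diagonal blocks become $\omega(m_{2x-2n}m_{2y+1+2n})$ and $\omega(m_{2x+1+2n}m_{2y-2n})$. In the Fourier representation (\ref{TrInv}) this says the symbol of $\omega_n$ is
\[
Q^{(\omega_n)}(p)=\begin{bmatrix} q_{1,1}^{(\omega)}(p) & q_{1,2}^{(\omega)}(p)e^{2inp}\\ q_{2,1}^{(\omega)}(p)e^{-2inp} & q_{2,2}^{(\omega)}(p)\end{bmatrix}.
\]
Because the symbol functions lie in $L^{\infty}([-\pi,\pi])\subset L^{1}([-\pi,\pi])$, the Riemann--Lebesgue lemma gives $\lim_{n\to\infty}\int_{-\pi}^{\pi} q_{1,2}^{(\omega)}(p)e^{-i(x-y-2n)p}\,\mathrm{d}p=0$ and likewise for the other off-diagonal block, whereas the diagonal blocks do not depend on $n$. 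Hence $\omega_n(m_am_b)$ converges for every pair $a,b$, and the limit is precisely the two-point matrix whose symbol is the diagonal matrix with entries $q_{1,1}^{(\omega)},q_{2,2}^{(\omega)}$.

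To pass to higher monomials I would invoke the Pfaffian/pairing formula of the quasifree property: for each $n$ the quantity $\omega_n(m_{x_1}\cdots m_{x_{2N}})$ is a fixed finite signed sum of products of $N$ two-point terms $\omega_n(m_{x_{\pi(2l-1)}}m_{x_{\pi(2l)}})$, so the limit can be interchanged with the finite sum and the finite products, and the monomial converges to the Pfaffian of the limiting two-point matrix. Since finite linear combinations of products of majorana operators are norm-dense in $\mathcal{F}$ and the $\omega_n$ are uniformly bounded (they are states), the pointwise limit extends to a bounded linear functional $\omega_{\infty}$ on all of $\mathcal{F}$; being a pointwise limit of states, $\omega_{\infty}$ is positive and normalized, hence a state. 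By construction it agrees on majorana monomials with the quasifree state having symbol $\mathrm{diag}(q_{1,1}^{(\omega)},q_{2,2}^{(\omega)})$, and one checks directly from (\ref{QProp}) that this diagonal symbol again satisfies the Hermiticity, the reflection relation $Q(-p)=2\cdot\mathbbm{1}-Q(p)^{T}$, and $0\le Q(p)\le 2\cdot\mathbbm{1}$ (the eigenvalues of the diagonal matrix are the diagonal entries of $Q^{(\omega)}(p)$, which lie in $[0,2]$). Since a translation-invariant quasifree state is uniquely determined by its symbol, $\omega_{\infty}$ is exactly that state.

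I do not expect a genuine obstacle here. The only two points needing a line of care are (i) justifying that the pointwise limit really defines a state — this is the standard uniform-boundedness-plus-density argument together with the fact that positivity and normalization survive pointwise limits — and (ii) checking that the Riemann--Lebesgue step is legitimate, which it is because the symbol entries are $L^{\infty}$ on a bounded interval, hence integrable. Everything else is bookkeeping with the explicit majorana action of $\glider_{\mathcal{F}}$ and the block-Toeplitz form of translation-invariant quasifree states.
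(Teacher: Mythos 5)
Your proposal is correct and follows essentially the same route as the paper's own proof: induction to show each $\omega_n$ is translation-invariant quasifree, the explicit index-shift action of $\glider_{\mathcal{F}}$ to identify the symbol $Q^{(\omega_n)}$ with oscillating off-diagonal entries, the Riemann--Lebesgue lemma for the two-point limit, the Pfaffian expansion to pass to even monomials, and density plus uniform boundedness to extend to all of $\mathcal{F}$. The only difference is that you add a couple of checks the paper leaves implicit (that the limiting diagonal symbol satisfies (\ref{QProp}) and that positivity and normalization survive the pointwise limit), which only strengthens the argument.
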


It is worth mentioning that although a pure quasifree state will, of course, stay pure for any time $t$, taking the weak, i.e. pointwise, limit of the states when $t \to \infty$ one obtains a mixed state (if the original state was not $\glider$-invariant). In quantum many body physics such relaxation from a pure to mixed states has been studied in a quite different setting, namely how certain time averages of pure states that evolve under a Hamiltonian dynamics can be described by mixed states \cite{CDEO}. It would be interesting to look at such weak limits when $t \to \infty$ also for Hamiltonian evolution (for systems with infinite degrees of freedom), and to look at the connection between these two approaches.

Finally we have to point out, that the results of this section using the AJW transformation can generally not be transferred to other CQCAs than the glider CQCA $\glider$, because most CQCAs don't have the form of Bogoliubov transformations on the CAR-Algebra. Automata with neighborhoods larger than nearest neighbors can map creation and annihilation operators to products of these on the CAR-algebra and are thus in general not Bogoliubov transformations. (An obvious exception are powers of $\glider$
.) All automata, that don't leave at least one Pauli matrix locally invariant don't allow for a tail element that is left invariant under the transformed CQCA. These automata are characterized by a constant in their trace polynomial. In conclusion, only nearest neighbor automata without a constant on the trace, i.e. some glider and period two automata and their powers, can be transferred to Bogoliubov transformations on the CAR-algebra. For more details see \cite{Uphoff2008}. 
\section{Entanglement generation}
\label{sec:entanglement}
In this section we investigate the entanglement generation properties of CQCAs. First we derive general bounds for the entanglement generation of arbitrary QCAs on the spin-chain both in a translation-invariant and a non-translation-invariant setting. Then we investigate CQCAs acting on stabilizer and quasifree states. We find that the entanglement generation is linear in time, similarly to the case of Hamiltonian time evolutions \cite{CC}.
\subsection{General bounds on the entanglement generation of QCA}
  \label{sec:bounds}
In this section we derive general bounds on the evolution 
of the entanglement of a finite number of consecutive spins 
with the rest of the chain under the action of a localized automorphism 
(e.g. a QCA). We will only consider the case when the whole chain 
is in a pure state. In this case the 
proper measure of entanglement is given by 
the von Neumann entropy
\begin{equation*}
S= - \rm{Tr} \, \rho_S \log_2 \rho_S,
\end{equation*} 
where $\rho_S$ is the reduced density matrix of the
finite segment of spins.\footnote{Our bounds for the
entropy generation hold also for non-pure states, although in
this case the von Neumann entropy is not directly related to
entanglement.}
 \subsubsection{The non-translation-invariant case}
\begin{thm}
  Consider the observable algebra of an 
  infinite chain of $d$-level systems
  \begin{equation*}
    \mathfrak{A}^{(d)}:= \bigotimes^{+\infty}_{i=-\infty}\mathfrak{A}_{i}^{(d)},
    \; \; \mathfrak{A}_{i}^{(d)} \cong M_d,
  \end{equation*} 
  and an automorphism 
$T: \mathfrak{A}^{(d)} \to  \mathfrak{A}^{(d)}$. 
Let us introduce the notation:
  $\mathfrak{A}_{[m_1,m_2]}^{(d)}:=
\bigotimes^{m_2}_{k=m_1}\mathfrak{A}_{k}^{(d)}$.
Suppose that $T$ satisfies the locality
condition
  \begin{equation*}
    T(\mathfrak{A}_{[k_1,k_2]}^{(d)}) 
\subset \mathfrak{A}_{[l_1,l_2]}^{(d)} \;, \label{local}
  \end{equation*}  
 for some fixed integers $k_1,k_2,l_1,l_2$, with
 $l_1 \le k_1$ and $k_2 \le l_2$.

 Let $\omega$ be a state on the spin-chain $\mathfrak{A}^{(d)}$,
and let us define the $T$-evolved state 
 as $\omega':=\omega \circ T$. The restrictions of these 
states to a subalgebra $\mathfrak{A}_{[m_1,m_2]}^{(d)}$
will be denoted by $\omega_{[m_1,m_2]}$ and $\omega'_{[m_1,m_2]}$,
  respectively. Then the following bounds hold
for the von Neumann entropies of the restricted states:
  \begin{equation}
    S(\omega_{[k_1,k_2]})-2 n \log_2 d 
    \le S(\omega'_{[k_1,k_2]}) 
    \le S(\omega_{[k_1,k_2]})+2n \log_2 d, \label{bound2}
  \end{equation}
where $n=l_2-l_1-k_2+k_1$. Moreover, these bounds are sharp. 
  \begin{proof}
    Restricting the automorphism $T$ to the subsubalgebra  
$\mathfrak{A}_{[k_1,k_2]}^{(d)}$, 
    we get a monomorphism\footnote{An 
      injective but not (necessarily) surjective homomorphism.} 
 $T_{[k_1,k_2]}:\mathfrak{A}_{[k_1,k_2]}^{(d)} \to \mathfrak{A}_{[l_1,l_2]}^{(d)}$.
This monomorphism can be extended 
to an automorphism $\widetilde{T}:
\mathfrak{A}_{[l_1,l_2]}^{(d)} \to 
\mathfrak{A}_{[l_1,l_2]}^{(d)}$.\footnote{This can be 
      simply seen by noting that 
      $\mathfrak{A}_{[l_1,l_2]}^{(d)}=T(\mathfrak{A}_{[k_1,k_2]}^{(d)}) 
      \otimes 
      \mathfrak{B}$,
      where $\mathfrak{B} \cong (\bigotimes_{k=l_1}^{k_1-1}
\mathfrak{A}_{k}^{(d)})\otimes
      (\bigotimes_{j=k_2+1}^{l_2}\mathfrak{A}_{j}^{(d)})$, 
if $Q$ is an isomorphism
      between the latter two algebras, then one can define
      $\widetilde{T}_{[l_1,l_2]}$ as $T_{[k_1,k_2]} \otimes Q$.}
Let us introduce the following state on $\mathfrak{A}_{[l_1,l_2]}$:
    \begin{equation*}
      \widetilde{\omega}'([l_1,l_2]):=\omega_{[l_1,l_2]} 
      \circ \widetilde{T}.
    \end{equation*}
From this construction it immediately follows that
$\widetilde{\omega}'([l_1,l_2])_{[k_1,k_2]}=\omega'_{[k_1,k_2]}$.
Moreover, since 
$\widetilde{\omega'}([l_1,l_2]))$ and $\omega_{[l_1,l_2]}$ are connected by
an automorphism their von Neumann entropies are equal:
$S(\widetilde{\omega}'([l_1,l_2]))=S(\omega_{[l_1,l_2]})$.

    We will prove the bounds (\ref{bound2}) using the subadditivity of the von
    Neumann entropy. The subchain
    $\mathfrak{A}_{[l_1,l_2]}^{(d)}$ can be divided as 
    $\mathfrak{A}_{[l_1,l_2]}^{(d)}=\mathfrak{A}_{[k_1,k_2]}^{(d)} 
\otimes \mathfrak{A}_{rest}^{(d)}$,
    where $\mathfrak{A}_{rest}^{(d)}$ is isomorphic to the algebra of
    $d^{n} \times d^{n}$ matrices,
    hence the maximal entropy of a state defined on $\mathfrak{A}_{rest}^{(d)}$
    is $n \log_2 d$. The triangle inequality and the subadditivity theorem give the following 
    inequalities: 
    \begin{eqnarray*}
      S(\omega_{[k_1,k_2]})-n \log_2 d \le &S(\omega_{[l_1,l_2]})& \le
      S(\omega_{[k_1,k_2]})+n \log_2 d\\
      S(\omega'_{[k_1,k_2]})-n \log_2 d \le &S
      (\widetilde{\omega}'({[l_1,l_2]}))& \le
      S(\omega'_{[k_1,k_2]})+ n \log_2 d
    \end{eqnarray*}
    Now, using that $S(\omega_{[l_1,l_2]})=S(\widetilde{\omega}'({[l_1,l_2]}))$
    we immediately obtain the bounds (\ref{bound2}).
    
    The sharpness of the inequalities follows if we consider
    a state on the spin-chain where the sites at $2i$ are maximally entangled 
    with the sites at $2i+1$ and we consider the translation $\tau$ which just 
    shifts all one-cell algebras by one cell to the right as our 
    time-evolution.
    Then $k_2-k_1=2$ and $l_2-l_1=3$, and we get $n=l_2-l_1-k_2+k_1=1$.
    Now, restricting this state to the subalgebra $\mathfrak{A}_{2i, 2i+3}^{(d)}$
    the entropy of the restriction is zero. However, the entropy of this
    restriction after the time evolution will be $2\log_2 d$, 
     since the two sites at the border will
    be maximally entangled with sites outside the considered region. 
    Let us note, that in some sense we broke the translation-invariance 
    only minimally,
    since the considered state is invariant under the 
    square of the translations $\tau^2$.
        
    In the above example the generated entanglement is destroyed 
    in the next step, so the bound is only saturated for one time step. 
    But a slightly more involved example shows, that the bound can be 
    saturated for arbitrarily many timesteps:
    
    Let again $T$ be the translation automorphism on $\mathfrak{A}^{(d)}$, 
    and let us consider the state  
    on a spin-chain which is defined as the direct 
    product of totally mixed states between the lattice site at $i$ and the 
    lattice site at $-i+1$ for all $i$. So in this state the lattice 
    site at $1$ is fully entangled with the lattice site at $0$, the 
    lattice site at $2$ is fully entangled with the lattice site at $-1$, 
    and so on. Now, we will consider subsystems of arbitrary 
    length $k=k_2-k_1$. If $k$ is even, $k=2j$, then the subsystem 
    we consider is the interval $[-j+1,j]$. Its original entropy is $0$, 
    and the entropy grows linearly during the time-evolution saturating 
    our linear bound until it reaches the maximal entropy it can obtain, 
    namely $k\log_2 d$. After this it stays constant. 
    If $k$ is odd, $k=2j+1$, we consider the interval $[-j,j]$ 
    as our subsystem. The original entropy of the subsystem 
    is $\log_2 d$, and the entropy grows linearly and saturates our bound 
    until it reaches $(2j+1) \log_2 d$, then it stays constant.
  \end{proof}
\end{thm}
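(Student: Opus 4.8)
The plan is to localise the problem to the finite block $\mathfrak{A}_{[l_1,l_2]}^{(d)}$, on which the von Neumann entropy is finite and well behaved, and then invoke only the elementary entropy inequalities (subadditivity and the Araki--Lieb triangle inequality) for it. First I would restrict $T$ to $\mathfrak{A}_{[k_1,k_2]}^{(d)}$. Since $T$ is an automorphism, hence a unital injective $*$-homomorphism, the locality hypothesis makes this restriction a unital $*$-monomorphism $T_{[k_1,k_2]}\colon \mathfrak{A}_{[k_1,k_2]}^{(d)}\to\mathfrak{A}_{[l_1,l_2]}^{(d)}$ whose image is a full matrix subalgebra $\cong M_{d^{\,k_2-k_1+1}}$ of $\mathfrak{A}_{[l_1,l_2]}^{(d)}\cong M_{d^{\,l_2-l_1+1}}$. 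By the structure theory of finite-dimensional $C^*$-algebras its relative commutant $\mathfrak{B}$ is then isomorphic to $M_{d^{\,n}}$ with $n=(k_1-l_1)+(l_2-k_2)=l_2-l_1-k_2+k_1$, and $\mathfrak{A}_{[l_1,l_2]}^{(d)}=T_{[k_1,k_2]}(\mathfrak{A}_{[k_1,k_2]}^{(d)})\otimes\mathfrak{B}$. Fixing any isomorphism between $\mathfrak{B}$ and the spectator algebra $\mathfrak{A}_{rest}^{(d)}:=\bigotimes_{k\in[l_1,l_2]\setminus[k_1,k_2]}\mathfrak{A}_{k}^{(d)}$, I obtain an automorphism $\widetilde{T}\colon\mathfrak{A}_{[l_1,l_2]}^{(d)}\to\mathfrak{A}_{[l_1,l_2]}^{(d)}$ that acts as $T_{[k_1,k_2]}$ on the $[k_1,k_2]$-factor and therefore agrees with $T$ on $\mathfrak{A}_{[k_1,k_2]}^{(d)}$.

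Next I would put $\widetilde{\omega}':=\omega_{[l_1,l_2]}\circ\widetilde{T}$. As $\widetilde{T}$ is an automorphism of a matrix algebra it preserves the spectrum of the density matrix, so $S(\widetilde{\omega}')=S(\omega_{[l_1,l_2]})$; and for $A\in\mathfrak{A}_{[k_1,k_2]}^{(d)}$ one has $\widetilde{\omega}'(A)=\omega_{[l_1,l_2]}(TA)=\omega(TA)=\omega'(A)$ (using $TA\in\mathfrak{A}_{[l_1,l_2]}^{(d)}$), so the restriction of $\widetilde{\omega}'$ to $\mathfrak{A}_{[k_1,k_2]}^{(d)}$ is precisely $\omega'_{[k_1,k_2]}$. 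Now decompose $\mathfrak{A}_{[l_1,l_2]}^{(d)}=\mathfrak{A}_{[k_1,k_2]}^{(d)}\otimes\mathfrak{A}_{rest}^{(d)}$ with $\mathfrak{A}_{rest}^{(d)}\cong M_{d^{\,n}}$. Applying to $\omega_{[l_1,l_2]}$ the inequalities $|S(\rho_{AB})-S(\rho_A)|\le S(\rho_B)$ and $S(\rho_B)\le n\log_2 d$ gives $|S(\omega_{[l_1,l_2]})-S(\omega_{[k_1,k_2]})|\le n\log_2 d$, and the same applied to $\widetilde{\omega}'$ gives $|S(\widetilde{\omega}')-S(\omega'_{[k_1,k_2]})|\le n\log_2 d$. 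Combining these with $S(\widetilde{\omega}')=S(\omega_{[l_1,l_2]})$ via the triangle inequality on $\mathbb{R}$ yields $|S(\omega'_{[k_1,k_2]})-S(\omega_{[k_1,k_2]})|\le 2n\log_2 d$, which is exactly the claimed bound~(\ref{bound2}).

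For sharpness I would take $T=\tau$ the right translation together with two explicit states. For the upper bound over a single step, let $\omega$ be the tensor product over $i$ of a maximally entangled pair on sites $\{2i,2i+1\}$; here $n=1$, and restricting to a four-site window whose two endpoints, after one translation, each become half of a pair reaching outside the window, the restricted entropy jumps from $0$ to $2\log_2 d=2n\log_2 d$. For saturation over arbitrarily many steps, instead take $\omega$ built from maximally mixed pairs linking sites $i$ and $-i+1$, and consider the window $[-j+1,j]$ when its length $k=k_2-k_1$ is even ($k=2j$), or $[-j,j]$ when it is odd ($k=2j+1$); the window entropy starts at $0$ (even $k$) or $\log_2 d$ (odd $k$), grows by one unit of $\log_2 d$ per endpoint at each step — saturating the per-step bound $2\log_2 d$ — and stays constant once it reaches its maximum $k\log_2 d$.

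The step I expect to be the real obstacle is the algebraic extension in the first paragraph: the claim that the injective unital $*$-homomorphism $T_{[k_1,k_2]}$ has image a full matrix subalgebra whose relative commutant is again a full matrix algebra, so that it splits off a genuine tensor factor and extends to an automorphism of $\mathfrak{A}_{[l_1,l_2]}^{(d)}$. This is a standard consequence of the structure theory of finite-dimensional $C^*$-algebras, but it is the one non-bookkeeping ingredient; the two entropy inequalities and the accounting in the sharpness examples are then routine. One should only note that $\widetilde{T}$ need not coincide with $T$ outside $\mathfrak{A}_{[k_1,k_2]}^{(d)}$, which is harmless because only its restriction to that block and its entropy-preservation property are used.
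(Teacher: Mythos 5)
Your proposal is correct and follows essentially the same route as the paper: restrict $T$ to the block, extend the resulting monomorphism to an automorphism of $\mathfrak{A}_{[l_1,l_2]}^{(d)}$ via the tensor-factor decomposition, transport the entropy along that automorphism, and apply the Araki--Lieb triangle inequality together with subadditivity to the splitting $\mathfrak{A}_{[l_1,l_2]}^{(d)}=\mathfrak{A}_{[k_1,k_2]}^{(d)}\otimes\mathfrak{A}_{rest}^{(d)}$. Even your two sharpness examples (entangled pairs on $\{2i,2i+1\}$ under translation, and the nested pairs linking $i$ with $-i+1$) coincide with those in the paper.
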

  \subsubsection{The translation-invariant case}
  In the previous subsection it was shown that the 
bounds (\ref{bound2}) on entanglement generation
are sharp in the general case. However, 
considering translation-invariant states
and QCA automorphism, i.e., automorphisms that
commute with the translations, we can sharpen  
these bounds further.

  \begin{thm}
    Consider the observable algebra of an 
    infinite chain of $d$-level systems
    \begin{equation*}
      \mathfrak{A}^{(d)}:= \bigotimes^{+\infty}_{i=-\infty}\mathfrak{A}_{i}^{(d)},
      \; \; \mathfrak{A}_{i}^{(d)} \cong M_d, 
    \end{equation*}
    and a QCA automorphism $T$ acting on $\mathfrak{A}^{(d)}$ 
    having a neighborhood of $n$ ``extra cells'', i.e., $T$ is an
    automorphism satisfying
    \begin{equation}
      T(\mathfrak{A}_i)^{(d)} \subset \mathfrak{A}_{i-n_1}^{(d)} \otimes 
      \mathfrak{A}_{i-n_1+1}^{(d)} 
      \otimes \dots \otimes \mathfrak{A}_{i+n_2-1}^{(d)} \otimes
      \mathfrak{A}_{i+n_2}^{(d)} \; , \; \; \; \; T \circ \tau =\tau \circ T,    
     \label{loc}
    \end{equation}
    where $i \in \mathbb{Z}$, $\tau$ is the translation automorphism on 
    $\mathfrak{A}^{(d)}$, and $n_1$ and $n_2$ are integers 
    satisfying $n_1+n_2=n \ge 0$.

    Let $\omega$ be a translation-invariant state 
    on the spin-chain, and let us define 
    the time-evolved state (at time $t\in \mathbb{N}$) 
    as $\omega(t):=\omega \circ T^{t}$.
    The von Neumann entropy $S_L(t)$ of the restriction of
    $\omega(t)$ to $L$ consecutive qubits can be bounded in the following
    way:
    \begin{equation}
      S_L(0)-nt \log_2 d \le S_L(t) \le S_L(0)+nt \log_2 d. \label{bound}
    \end{equation}
    Moreover, these bounds are sharp for $d=2$.
    
    \begin{proof}
      Since the state $\omega$ is translation invariant and $T$ commutes 
      with the translations, the ``entropy production'' is the same
      for the automorphisms $T$ and $T \circ \tau^k$ ($k \in \mathbb{Z}$),
      hence we can assume without loss of generality
      that in Eq. (\ref{loc})  $n_1, n_2 \ge 0$.

      Let us denote the
      restriction of a state
      $\varphi:\mathfrak{A}^{(d)} \to \mathbb{C}$ to 
      $\mathfrak{A}_{[m_1,m_2]}^{(d)}=\bigotimes^{m_2}_{k=m_1}
      \mathfrak{A}_{k}^{(d)}$
      by $\varphi_{[m_1,m_2]}$.
      Consider the subalgebra $\mathfrak{A}_{[0,L-1]}^{(d)}$
      of $\mathfrak{A}^{(d)}$, which corresponds to $L$ qubits.
      Restricting the automorphism $T$ 
      to this subalgebra, we get a monomorphism 
      $T_{L}:\mathfrak{A}_{[0,L-1]}^{(d)} \to 
      \mathfrak{A}_{[-n_1,L-1+n_2]}^{(d)}$.
      We will also consider the inverse automorphism
      $T^{-1}$, and restrict $T^{-1}$ to a monomorphism
      $(T^{-1})_L:\mathfrak{A}_{[0,L-1]}^{(d)} \to 
      \mathfrak{A}_{[-n_2,L-1+n_1]}^{(d)}$.\footnote{The fact that the range of $\mathfrak{A}_{[0,L-1]}$ 
          under the the action of $T^{-1}$ is in 
          $\mathfrak{A}_{[-n_2,L-1+n_1]}$ was shown in \cite{Vogts2009}.}.
      The monomorphism $(T^{-1})_L$ can be extended 
      to an automorphism $\widetilde{T}^{-1}_L:
      \mathfrak{A}_{[-n_2,L-1+n_1]}^{(d)} \to 
      \mathfrak{A}_{[-n_2,L-1+n_1]}^{(d)}$.

      Let $\omega$ be a translation-invariant state on $\mathfrak{A}^{(d)}$, 
       then 
      $\omega(1):=\omega \circ T$ will be translation-invariant, too.
      Let us also define the following state on 
      $\mathfrak{A}_{[-n_2,L-1+n_1]}^{(d)}$
      \begin{equation*}
        \widetilde{\omega}([-n_2,L-1+n_1]):=\omega(1)_{[-n_2,L-1+n_1]} 
        \circ \widetilde{T}^{-1}_L.
      \end{equation*} 
      The von Neumann entropy of 
      $\widetilde{\omega}([-n_2,L-1+n_1])$ and 
      $\omega(1)_{[-n_2,L-1+n_1]}$ are the same
      (since they are connected by an automorphism), and it follows from the definition of $\widetilde{T}^{-1}_L$ that
      $\widetilde{\omega}([-n_2,L-1+n_1])_{[0,L-1]}=\omega_{[0,L-1]}$.

      Now, from the strong subadditivity of the 
      von Neumann entropy it follows that for a
      translation-invariant state $\omega$
      $S(\omega_{[m_1,m_2]}) \ge S(\omega_{[k_1,k_2]})$ if  $m_2-m_1 \ge k_2-k_1$
      \cite{fannesbuch}, hence
      \begin{equation}
        S_L(1)=S(\omega(1)_{[0,L-1]}) \le S(\omega(1)_{[-n_2,L-1+n_1]}). \label{FirstIneq}
      \end{equation}
      On the other hand, using the subadditivity of the entropy
      for the state $\widetilde{\omega}([-n_2,L-1+n_1])$ (dividing the
      observable algebra of the subchain $[-n_2,n_1]$ as:
      $\mathfrak{A}_{[-n_2,L-1+n_1]}^{(d)}=\mathfrak{A}_{[0,L-1]}^{(d)}\otimes 
      (\mathfrak{A}_{[-n_2,-1]}^{(d)} \otimes \mathfrak{A}_{[L,L-1+n_1]}^{(d)}$)),
      we get:
      \begin{equation}
        S_L(\widetilde{\omega}([-n_2,L-1+n_1]))\le 
        S_L(\widetilde{\omega}([-n_2,L-1+n_1])_{[0,L-1]})+(n_1+n_2)\log_2 d=
        S_L(0)+n \log_2 d \label{SecondIneq}
      \end{equation}
      Combing the fact that $S(\omega(1)_{[-n_2,L-1+n_1]})=
      S(\widetilde{\omega}([-n_2,L-1+n_1]))$ 
      with the inequalities (\ref{FirstIneq}) and (\ref{SecondIneq})
      we arrive at the $S(1) \le S(0) +n \log_2 d$.
      By simple induction we obtain the desired 
      upper bound:
      \begin{equation*}
        S_L(t)\le S_L(0) +nt \log_2 d.
      \end{equation*}

      The lower bound in (\ref{bound}) can simply be obtained by
      "reversing the time arrow": suppose that for a QCA automorphism $T$
      this lower bound does not hold, this would mean that for the
      QCA $T^{-1}$ the upper bound wouldn't
      hold, which is a contradiction as we proved the upper bound just now.

      The sharpness of the inequalities for $d=2$ follows from the study of
      of Clifford QCAs acting on the ``all spin up state'' in Section \ref{sec:ent_stab}.
    \end{proof}
  \end{thm}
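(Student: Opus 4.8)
The plan is to prove the upper bound $S_L(1)\le S_L(0)+n\log_2 d$ for a single time step, iterate it to obtain $S_L(t)\le S_L(0)+nt\log_2 d$, and then get the matching lower bound by applying the upper bound to the inverse automaton $T^{-1}$ (``reversing the arrow of time''); the sharpness assertion for $d=2$ is postponed to Section~\ref{sec:ent_stab}, where a Clifford QCA acting on the ``all spins up'' product stabilizer state is shown to realize the extreme rate. A preliminary reduction: since $\omega$ is translation-invariant and $T$ commutes with $\tau$, the entropy production of $T$ and of $T\circ\tau^{k}$ agree, so I may assume $n_1,n_2\ge 0$ with $n_1+n_2=n$ in (\ref{loc}).

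For the one-step upper bound I would reuse the ``extend the restricted monomorphism to an automorphism'' device from the non-translation-invariant theorem, but squeeze out the sharper constant $n$ (instead of $2n$) by invoking monotonicity of the block entropy for translation-invariant states. Concretely: restrict $T^{-1}$ to $\mathfrak{A}_{[0,L-1]}^{(d)}$; by finite propagation, together with the statement on neighbourhoods of inverse QCAs cited from \cite{Vogts2009}, its image lies inside $\mathfrak{A}_{[-n_2,L-1+n_1]}^{(d)}$, which is a full tensor factor of that larger block, so the resulting monomorphism $(T^{-1})_L$ extends to an automorphism $\widetilde{T}^{-1}_L$ of $\mathfrak{A}_{[-n_2,L-1+n_1]}^{(d)}$. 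Set $\widetilde{\omega}:=\bigl(\omega\circ T\bigr)_{[-n_2,L-1+n_1]}\circ\widetilde{T}^{-1}_L$. Then $\widetilde{\omega}$ and $(\omega\circ T)_{[-n_2,L-1+n_1]}$ are related by an automorphism, hence have the same von Neumann entropy, while by construction $\widetilde{\omega}_{[0,L-1]}=\omega_{[0,L-1]}$. Now combine two facts: (i) for a translation-invariant state the block entropy is non-decreasing in the block length (a consequence of strong subadditivity, cf.\ \cite{fannesbuch}), giving $S_L(1)=S\bigl((\omega\circ T)_{[0,L-1]}\bigr)\le S\bigl((\omega\circ T)_{[-n_2,L-1+n_1]}\bigr)=S(\widetilde{\omega})$; and (ii) ordinary subadditivity for $\widetilde{\omega}$, splitting $\mathfrak{A}_{[-n_2,L-1+n_1]}^{(d)}=\mathfrak{A}_{[0,L-1]}^{(d)}\otimes\bigl(\mathfrak{A}_{[-n_2,-1]}^{(d)}\otimes\mathfrak{A}_{[L,L-1+n_1]}^{(d)}\bigr)$, giving $S(\widetilde{\omega})\le S(\widetilde{\omega}_{[0,L-1]})+(n_1+n_2)\log_2 d=S_L(0)+n\log_2 d$. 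Chaining (i) and (ii) yields $S_L(1)\le S_L(0)+n\log_2 d$.

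Since $\omega\circ T$ is again translation-invariant, applying this inequality repeatedly gives $S_L(t)\le S_L(0)+nt\log_2 d$ by induction on $t$. For the lower bound, note that $T^{-1}$ is itself a QCA with the same neighbourhood parameter $n$ (again by \cite{Vogts2009}); applying the upper bound just proved to $T^{-1}$ with base state $\omega(t)=\omega\circ T^{t}$ gives $S_L(0)=S\bigl((\omega(t)\circ T^{-t})_{[0,L-1]}\bigr)\le S_L(t)+nt\log_2 d$, i.e.\ $S_L(t)\ge S_L(0)-nt\log_2 d$. The remaining assertion, sharpness for $d=2$, is then obtained in Section~\ref{sec:ent_stab}: a Clifford QCA acting on the product stabilizer state raises the $L$-block entropy by exactly $n\log_2 d$ at each step until it saturates at $L\log_2 d$, matching the bound for every $t$ before saturation.

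I expect the main obstacle to be the bookkeeping in the one-step argument: verifying carefully that $(T^{-1})_L$ really has range inside $\mathfrak{A}_{[-n_2,L-1+n_1]}^{(d)}$, that its extension $\widetilde{T}^{-1}_L$ can be chosen so that $\widetilde{\omega}_{[0,L-1]}=\omega_{[0,L-1]}$, and that the right entropy inequality is used in step (i) — it is precisely the monotonicity coming from strong subadditivity, rather than plain subadditivity, that improves the constant from $2n$ to $n$ in the translation-invariant setting.
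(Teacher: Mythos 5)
Your proposal is correct and follows essentially the same route as the paper's own proof: the same reduction to $n_1,n_2\ge 0$, the same extension of the restricted monomorphism $(T^{-1})_L$ to an automorphism of $\mathfrak{A}_{[-n_2,L-1+n_1]}^{(d)}$, the same combination of block-entropy monotonicity (from strong subadditivity) with ordinary subadditivity to get the one-step bound $S_L(1)\le S_L(0)+n\log_2 d$, followed by induction, time reversal for the lower bound, and deferral of sharpness to the stabilizer analysis. Your explicit rendering of the time-reversal step as an application of the upper bound to $T^{-1}$ with base state $\omega(t)$ is a slightly cleaner formulation of the paper's contradiction argument, but the content is identical.
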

\subsection{Entanglement generation starting from translation-invariant stabilizer states}
\label{sec:ent_stab}
In this section we will consider the entanglement generation of CQCA acting on translation-invariant pure stabilizer states. We will first calculate the bipartite entanglement in a general translation-invariant pure stabilizer state. Using this result we will present a proof of asymptotically linear growth of entanglement for non-periodic CQCAs.

For every translation-invariant stabilizer state $\omega$ with stabilizer $\stab=\langle\weyl{\hat\tau^x\xi},\,x\in\ZZ\rangle$ there exists a CQCA $\CQCA$ which maps $\weyl{0,1}$ to $\weyl{\xi}$. Each $\stab=\langle\weyl{\hat\tau^x\xi},\,x\in\ZZ\rangle$ defines a unique translation-invariant stabilizer state if and only if $\xi$ is reflection invariant and $\gcd(\xi_+,\xi_-)=1$ \cite{SchlingemannCQCA}. The image of a one-site Pauli matrix under the action of a CQCA $B$ is always of this form. The study of the entanglement generation of CQCAs acting on initially unentangled stabilizer product states is thus equivalent to the study of the entanglement properties of translation-invariant stabilizer states.
 
There are several results on the entanglement entropy for stabilizer states in the literature, the most general example would be the formalism developed in \cite{Fattal2004}. One case considered is a bipartite split of the state $\omega_{AB}$ with respect to the subsystems $A$ and $B$. The set of stabilizers $\stab$ then splits up into three sets. $\stab_A$ and $\stab_B$ are the local stabilizers, which act non-trivially only on part $A$ resp. $B$. The third set $\stab_{AB}$ accounts for correlations between the subsystems. It is defined as follows:
\begin{defi}
  The correlation subgroup $\stab_{AB}$ for a bipartite stabilizer state is generated by all stabilizer generators that have support on both parts of the system.
\end{defi}
$\stab_A$ and $\stab_B$ together form the so called local subgroup. The correlation subgroup $\stab_{AB}$ can be brought into a form where it consists of pairs of stabilizers whose projections on $A$ (and $B$) anticommute, but commute with all elements of other pairs and the local subgroup. 
The entanglement or von Neumann entropy of such a stabilizer state is 
\begin{equation}
  \label{eq:stab_entropy}
  E(\omega_{AB})=\frac{1}{2}|\stab_{AB}|,
\end{equation}
if $\omega_{AB}$ is a pure state, where $|\stab_{AB}|$ is the size of the minimal generating set of $\stab_{AB}$. 

Unfortunately, in \cite{Fattal2004} only finitely many qubits are considered. The proof of (\ref{eq:stab_entropy}) relies heavily on this fact. However, there is a different approach to the bipartite entanglement in stabilizer states which we use to extend this result to infinitely many qubits.

In our approach we make use of the phase space description of stabilizer states introduced in \cite{SchlingemannCQCA}. A stabilizer state is fixed by a set of defining commuting Pauli products. In the phase space description commutation relations are encoded in the symplectic form $\sigma(\xi,\eta)$. If $[\weyl{\xi},\weyl{\eta}]=0$ we have $\sigma(\xi,\eta)=0$. Thus abelian algebras of Weyl operators (Pauli products) correspond to subspaces on which the symplectic form vanishes. Those subspaces are called isotropic subspaces. If for an isotropic subspace $\isosub$ $\sigma(\eta,\xi)=0$, $\forall\xi\in\isosub$ implies $\eta\in\isosub$, we call $\isosub$ maximally isotropic. Maximally isotropic subspaces correspond to maximally abelian algebras. In \cite{SchlingemannCQCA} it was shown that the above condition on $\xi$ (reflection invariance and $\gcd(\xi_+,\xi_-)=1$) is equivalent to the condition that $\fring\xi$ is a maximally isotropic subspace. $\fring\xi$ denotes the space generated by the products of $\xi$ and all elements of $\fring$. We have $\langle\weyl{\eta},\eta\in\fring\xi\rangle=\langle\weyl{\hat\tau^x\xi},\,x\in\ZZ\rangle=\stab$. $\fring\xi$ is the phase space of the stabilizer group $\stab$. By $(\fring\xi)_A$ etc. we denote the phase spaces of $\stab_A$ etc.

\begin{thm}
  \label{thm:stab_ent}
  The number of maximally entangled qubit pairs in a pure translation invariant stabilizer state stabilized by $\stab=\langle\weyl{\hat\tau^x\xi},\,x\in\ZZ\rangle$ on a bipartite spin-chain is the number of pairs $\eta^i,\,\zeta^i\in(\fring\xi)_{AB}$ with $\sigma_A(\eta^i,\zeta^j)=\delta_{ij}$, $\sigma(\eta^i_A,\eta^j_A)=\sigma(\zeta^i,\zeta^j)=0$, $\sigma(\eta^i_A,\mu)=\sigma(\eta^i,\mu)=0,\,\forall\mu\in(\fring\xi)_A$, and $\sigma(\zeta_A^i,\mu)=\sigma(\zeta^i,\mu)=0,\,\forall\mu\in(\fring\xi)_A$. where $\eta_A$, $\zeta_A$ denote the restriction of the phase space vectors to subsystem $A$ completed with $0$ on $B$ so we can use the the symplectic form $\sigma$ of the whole chain.
  \begin{proof}
    If we restrict the stabilizer to subsystem $A$ (or $B$) it is in general not translation invariant any more. Therefore the corresponding subspace is not maximally isotropic. The restricted state is not a pure translation invariant stabilizer state. However the uncut stabilizer operators in $\stab_A$ stabilize a subspace of the statespace. The elements of the correlation subgroup $\stab_{AB}$ which is generated by the cut stabilizer generators map this subspace onto itself because they commute with the elements of $\stab_A$. From the theory of quantum error correction \cite{Gottesman1997} we know that a pair of operators leaving a stabilized subspace invariant can be used to encode a logical qubit if the operators fulfill the same commutation relations as $\sigma_1$ and $\sigma_3$. As the restrictions of the elements of $\stab_{AB}$ don't have to commute such pairs of operators can exist. In the phase space description this means, that we have to find $\eta,\,\zeta\in(\fring\xi)_{AB}$ with $\sigma(\eta_A,\zeta_A)=1$,  $\sigma(\eta_A,\mu_A)=\sigma(\eta,\mu)=0,\,\forall\mu\in(\fring\xi)_A$, and $\sigma(\zeta_A,\mu_A)=\sigma(\zeta,\mu)=0,\,\forall\mu\in(\fring\xi)_A$. Several such pairs encode several qubits. Of course the operators from different pairs have to commute. Thus the qubits are encoded by pairs of operators $(\bar\sigma_1^i,\bar\sigma_3^i)=(\weyl{\eta^i},\weyl{\zeta^i})$ whose phase space vectors fulfill $\sigma_A(\eta^i,\zeta^j)=\delta_{ij}$, $\sigma(\eta^i_A,\eta^j_A)=\sigma(\zeta^i,\zeta^j)=0$, $\sigma(\eta^i_A,\mu)=\sigma(\eta^i,\mu)=0,\,\forall\mu\in(\fring\xi)_A$, and $\sigma(\zeta_A^i,\mu)=\sigma(\zeta^i,\mu)=0,\,\forall\mu\in(\fring\xi)_A$. 
    As we have $\sigma(\eta_A,\zeta_A)+\sigma(\eta_B,\zeta_B)=\sigma(\eta_A+\eta_B,\zeta_A+\zeta_B)=\sigma(\eta,\zeta)=0,\,\forall\eta,\zeta\in\fring\xi$ we know that $\sigma(\eta_B,\zeta_B)=\sigma(\eta_A,\zeta_A)$, thus the restrictions of our operators to system $A$ and $B$ fulfill the same commutation relations. We therefore have pairs of operators of the form $(\bar\sigma_1^A\otimes\bar\sigma_1^B,\bar\sigma_3^A\otimes\bar\sigma_3^B)$. Each such a pair encodes a Bell pair as seen in Example \ref{exam:bell}. Thus the number of maximally entangled qubit pairs is the number of such pairs of operators. 
  \end{proof}
\end{thm}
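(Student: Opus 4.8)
The plan is to count maximally entangled qubit pairs by passing to the quantum error-correction picture of the bipartite cut and then translating everything into the symplectic phase space. First I would cut the chain into the two halves $A$ and $B$ and restrict the stabilizer $\stab$; as noted, the restriction to $A$ is no longer maximally isotropic, so the reduced state is mixed, and its support is exactly the code space $\CH_A$ stabilized by the local subgroup $\stab_A$ (the Pauli products of $\stab$ supported inside $A$). The key observation is that every element of the correlation subgroup $\stab_{AB}$, restricted to $A$, commutes with all of $\stab_A$ — because the full operators commute in $\stab$ and the $B$-parts act on a different tensor factor — hence these restrictions map $\CH_A$ into itself and act as logical operators on the code $\CH_A$. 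By the Gottesman framework \cite{Gottesman1997}, a pair of such code-preserving operators satisfying the commutation relations of $\sigma_1$ and $\sigma_3$ encodes one logical qubit, and a maximal collection of mutually commuting pairs of this type counts the number of encoded qubits, which by (\ref{eq:stab_entropy}) is the entropy.

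Next I would phrase this in phase space. A pair of logical operators becomes a pair of vectors $\eta,\zeta\in(\fring\xi)_{AB}$ for which the $A$-restrictions satisfy $\sigma(\eta_A,\zeta_A)=1$, while $\eta$, $\zeta$ and their $A$-restrictions are symplectically orthogonal to the whole local phase space $(\fring\xi)_A$; the two conditions $\sigma(\eta_A,\mu)=\sigma(\eta,\mu)=0$ and $\sigma(\zeta_A,\mu)=\sigma(\zeta,\mu)=0$ for all $\mu\in(\fring\xi)_A$ guarantee both that the logical operators genuinely preserve the code and that we have not re-counted operators already contained in $\stab_A$. Asking for $k$ mutually independent qubits then forces the additional requirements $\sigma_A(\eta^i,\zeta^j)=\delta_{ij}$ and $\sigma(\eta^i_A,\eta^j_A)=\sigma(\zeta^i,\zeta^j)=0$, which is precisely the list in the statement. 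Existence of such a generating set is a symplectic Gram--Schmidt (Witt decomposition) argument over $\ZZ_2$: one repeatedly picks an operator in $\stab_{AB}$ whose $A$-restriction fails to commute with the current code, pairs it with a partner, and splits it off.

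Finally I would observe that these encoded qubits are in fact maximally entangled Bell pairs, which is where the isotropy of $\fring\xi$ enters decisively: since $\sigma(\eta,\zeta)=0$ for all $\eta,\zeta\in\fring\xi$ and $\sigma(\eta,\zeta)=\sigma(\eta_A,\zeta_A)+\sigma(\eta_B,\zeta_B)$, we get $\sigma(\eta_B,\zeta_B)=\sigma(\eta_A,\zeta_A)=1$. Hence the $A$-halves and the $B$-halves of each pair obey the \emph{same} Pauli relations, so the encoded pair has the form $(\bar\sigma_1^A\otimes\bar\sigma_1^B,\,\bar\sigma_3^A\otimes\bar\sigma_3^B)$ and, exactly as in Example \ref{exam:bell}, stabilizes a Bell state between $A$ and $B$. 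Counting these pairs therefore counts the maximally entangled qubit pairs, as claimed.

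The main obstacle I anticipate is not any single computation but the infinite-chain bookkeeping: the formula $E(\omega_{AB})=\frac{1}{2}|\stab_{AB}|$ of \cite{Fattal2004} and the logical-qubit counting are stated for finitely many qubits, so one has to argue that, for a translation-invariant stabilizer state cut into two half-chains, the Witt-type decomposition of $\stab_{AB}$ terminates with a finite number of pairs and that this number really equals the von Neumann entropy of the (infinite) reduced state. Handling the fact that $\stab$ itself is infinitely generated while $\stab_{AB}$ is finitely generated modulo the local subgroup is the technically delicate point; once that is in place, the rest is the linear-algebra translation sketched above.
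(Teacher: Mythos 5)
Your proposal follows the paper's own proof essentially step for step: restricting to $A$, identifying the support of the reduced state as the code space stabilized by $\stab_A$, treating the $A$-restrictions of $\stab_{AB}$ as logical operators in the Gottesman framework, translating the commutation requirements into the listed symplectic conditions, and using the isotropy of $\fring\xi$ together with $\sigma(\eta,\zeta)=\sigma(\eta_A,\zeta_A)+\sigma(\eta_B,\zeta_B)$ to conclude that each pair encodes a Bell pair as in Example \ref{exam:bell}. The extra points you flag (the symplectic Gram--Schmidt construction of the pairs and the finite-versus-infinite bookkeeping) are exactly what the paper defers to Theorem \ref{thm:stab_ent_encode} and its appendix proof, so your account is correct and takes the same route.
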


We now show that the number of qubit pairs is $\frac{1}{2}|\stab_{AB}|$. As mentioned above, only Weyl operators $\weyl{\xi}$ fulfilling certain conditions can span the stabilizer $\stab=\langle\weyl{\hat\tau^x\xi},\,x\in\ZZ\rangle$ of a pure state. On the level of tensor products of Pauli matrices the above conditions have three important consequences that stem from the requirement for $\xi_+$ and $\xi_-$ to have no common divisors and to be reflection invariant:
\begin{enumerate}
  \item The length of the product has to be odd, because palindromes of even length are always divisible by $(1+\varia)$. We will write $l=2n+1$. 
  \item The central element of the product can not be the identity. Else $\xi$ has the divisor $(u^{-1}+u)$.
  \item At least two different types of elements (both different from the identity) have to occur (e.g. $\sigma_1$ and $\sigma_2$). Else $\xi_+=0$ or $\xi_-=0$ or $\xi_+=\xi_-$, each case implying common divisors.
\end{enumerate}
If we make a bipartite cut\footnote{Due to the translation invariance all possible cuts are equivalent.} in our system, $2n$ stabilizers will be affected. All other operators are localized on one side of the cut, only those with localization on both sides are elements of $\stab_{AB}$. If we could find $k$ pairs of anticommuting operators in the projections of $\stab_{AB}$ on the right (or left) halfchain there would be $k$ pairs of maximally entangled qubits. In fact we can always find $k=n$ such pairs and thus $k=\frac{1}{2}|\stab_{AB}|$.

\begin{defi}
  The bipartite entanglement $E(\omega_\xi)$ of a translation-invariant stabilizer state $\omega_\xi$ is the number of maximally entangled qubit pairs with respect to any bipartite cut. 
\end{defi}

\begin{thm}
  \label{thm:stab_ent_encode}
  A pure translation-invariant stabilizer state $\omega_\xi$ of stabilizer generator length $2n+1$ entangles $n$ qubit pairs maximally with respect to any bipartite cut. 
  \begin{equation}
    E(\omega_\xi)= \frac{\mathrm{length}(\xi)-1}{2}.
  \end{equation}
\end{thm}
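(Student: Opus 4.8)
The plan is to invoke Theorem~\ref{thm:stab_ent}, which has already absorbed the infinite-dimensional difficulties, and to reduce the claim to a finite piece of symplectic linear algebra over $\ZZ_2$. By translation invariance fix a cut so that, writing the support of $\xi$ as $\{0,1,\dots,2n\}$, the phase space $\fring\xi$ splits with respect to $V=V_A\oplus V_B$ ($A$ the left, $B$ the right half-chain) as follows: set $g_j:=\hat\tau^j\xi\in\fring^2$. Then $g_j$ has support $\{j,\dots,j+2n\}$, so $g_j$ lies entirely in $A$ for $j\le -2n-1$, entirely in $B$ for $j\ge 0$, and straddles the cut for the $2n$ values $j\in\{-2n,\dots,-1\}$. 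A short support count — using that the extreme coefficients of $\xi$ do not vanish, so in any nontrivial $\ZZ_2$-combination $\sum_x c_x g_x$ the rightmost (resp.\ leftmost) contributing generator touches a site that cannot be cancelled — shows $(\fring\xi)_A=\mathrm{span}_{\ZZ_2}\{g_j:j\le -2n-1\}$ and $(\fring\xi)_B=\mathrm{span}_{\ZZ_2}\{g_j:j\ge 0\}$. Since $\gcd(\xi_+,\xi_-)=1$ gives $\fring\xi\cong\fring$ with the $g_x$ as a $\ZZ_2$-basis, the $2n$ straddling vectors form a basis of a complement to $(\fring\xi)_A\oplus(\fring\xi)_B$, i.e.\ $|\stab_{AB}|=2n$; moreover every nontrivial $\ZZ_2$-combination of them is again genuinely straddling.

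Next I would identify $E(\omega_\xi)$ with half the rank of the alternating $\ZZ_2$-bilinear form $M(u,v):=\sigma(u_A,v_A)$ on this $2n$-dimensional space $U:=\mathrm{span}\{g_{-1},\dots,g_{-2n}\}$. By Theorem~\ref{thm:stab_ent}, $E(\omega_\xi)$ is the largest number of hyperbolic pairs $\eta^k,\zeta^k\in(\fring\xi)_{AB}$ meeting the stated orthogonality conditions. Since $\fring\xi$ is isotropic, $\sigma(g_i,g_j)=0$ for all $i,j$; because the supports in $A$ and in $B$ are disjoint this also gives $\sigma(u_A,v_A)=\sigma(u_B,v_B)$ for $u,v\in U$, and the conditions of Theorem~\ref{thm:stab_ent} that involve the full vectors or vectors from $(\fring\xi)_A$ are automatically satisfied (a vector of $(\fring\xi)_A$ is supported in $A$ and lies in the isotropic $\fring\xi$). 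Hence finding the pairs is exactly putting $M$ in symplectic normal form over $\ZZ_2$: one obtains $\tfrac12\,\mathrm{rank}(M)$ pairs, realised as suitable $\ZZ_2$-combinations of the $g_{-i}$, and not more since $U$ has dimension $2n$. So everything comes down to showing that $M$ is non-degenerate.

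This non-degeneracy is the main obstacle, and I would settle it through the symplectic-reduction identity for the maximal isotropic subspace $L:=\fring\xi$ of $V=V_A\oplus V_B$: writing $W_B$ for the image of $L$ under restriction to the $B$-coordinates, maximality of $L$ forces $W_B^{\perp_B}=L\cap V_B$, which by the support count above equals $(\fring\xi)_B$. Now suppose $v=\sum_i c_i g_{-i}$ lies in the radical of $M$. Then $v_B$ is $\sigma$-orthogonal to all $(g_{-i})_B$, and — using $\sigma(v,g_x)=0$ together with the disjointness of the $A$- and $B$-supports — to all of $W_B$; hence $v_B\in(\fring\xi)_B\subset\fring\xi$, so $v_A=v-v_B\in\fring\xi$ is supported in $A$, forcing $v_A\in(\fring\xi)_A=\mathrm{span}\{g_j:j\le -2n-1\}$. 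But $v_A$ is supported in $\{-2n,\dots,-1\}$, which only a zero combination of those generators can be; thus $v_A=0$, $v=v_B\in(\fring\xi)_B$, and linear independence of the $g_x$ gives $v=0$. Therefore $\mathrm{rank}(M)=2n$, and $E(\omega_\xi)=n=\tfrac12|\stab_{AB}|$, i.e.\ $E(\omega_\xi)=(\mathrm{length}(\xi)-1)/2$. If one prefers a less structural route, the same conclusion can be reached by applying the finite-system identity $E=\tfrac12|\stab_{AB}|$ of \cite{Fattal2004} to the truncations $\mathfrak{A}_{[-N,N]}$ and checking that the number of entangled pairs stabilises for large $N$; either way, the careful support bookkeeping is the fiddly part.
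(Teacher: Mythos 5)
Your proof is correct, and it reaches the conclusion by a genuinely different route than the paper's. Both arguments start from Theorem~\ref{thm:stab_ent} and from the same support bookkeeping (the $2n$ straddling translates span a complement of $(\fring\xi)_A\oplus(\fring\xi)_B$, because the extreme coefficients of $\xi$ cannot be cancelled), but they diverge on the core linear-algebra step. The paper (Appendix~\ref{sec:app_stab_ent_encode}) is constructive: it arranges the restrictions of the $2n$ cut generators into a $2n\times 4n$ matrix over $\ZZ_2$, Gauss-eliminates to produce operators $s_k$ with an explicit $\sigma_1\cdots\sigma_3$ structure, reads off $n$ anticommuting pairs from the first nonvanishing off-diagonal $\xi^{n-i}_-$, and then runs the symplectic Gram--Schmidt procedure on the commutation matrix $C$ to make distinct pairs commute; this requires a (footnoted) case distinction on the outermost Pauli and some care that the pairing survives the cleanup. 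You instead identify $E(\omega_\xi)$ with half the rank of the alternating form $M(u,v)=\sigma(u_A,v_A)$ on the $2n$-dimensional correlation space and prove non-degeneracy structurally, via the duality $W_B^{\perp_B}=L\cap V_B$ that maximal isotropy of $L=\fring\xi$ forces: an element of the radical has its $B$-part in $(\fring\xi)_B$ and its $A$-part simultaneously in $(\fring\xi)_A$ and supported in the straddling window, hence zero. Once $M$ is non-degenerate, the symplectic normal form over $\ZZ_2$ hands you the $n$ hyperbolic pairs for free, and the side conditions of Theorem~\ref{thm:stab_ent} involving $(\fring\xi)_A$ are automatic by isotropy and disjointness of supports, as you note. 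What each approach buys: the paper's version exhibits the encoded Bell pairs explicitly (useful for the finite-window variant in Theorem~\ref{thm:stab_ent_encode_finite}), while yours is shorter, avoids the case analysis and the Gram--Schmidt bookkeeping, and isolates the one genuinely non-trivial fact --- non-degeneracy of the cut form --- as a direct consequence of $L^\perp=L$. Both are valid proofs of the statement.
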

For the proof of this theorem we refer to Appendix~\ref{sec:app_stab_ent_encode}.

Now it remains to show how the stabilizer generator length evolves under the action of CQCAs, and to deduce the asymptotic entanglement generation rate. 
\begin{defi}
  The asymptotic entanglement generation rate from stabilizer states for CQCAs is defined as 
  \begin{equation}
    \frac{\Delta E}{\Delta t} =\lim_{t\to\infty}\frac{1}{t}E(\omega_\xi,t),
  \end{equation}
  where $E(\omega_\xi,t)$ is the bipartite entanglement at time~$t$.
\end{defi}

We will now prove the following theorem:
\begin{thm}
  \label{thm:ent_stab}
  The asymptotic entanglement generation per step of a general centered CQCA is the highest exponent in its trace polynomial, $\dg{\tr\sca}$.
\end{thm}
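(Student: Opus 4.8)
The plan is to convert the statement into a question about the growth of polynomial degrees under matrix powers and then control that growth by the Cayley--Hamilton identity. First I would reduce to the ``all spins up'' stabilizer state $\omega_{\sigma_3}$, whose stabilizer vector is $\binom{0}{1}$. For a general translation-invariant stabilizer state $\omega_\xi$, choose a CQCA $B$ with $B\weyl{\xi}=\sigma_3$; then $\omega_\xi\circ\CQCA^t=\omega_{\sigma_3}\circ(B\CQCA B^{-1})^t\circ B$, where $B\CQCA B^{-1}$ is again a centered CQCA with the same trace polynomial as $\CQCA$ (conjugation preserves the trace, and the translation part of $B$ drops out since $\CQCA$ commutes with $\tau$). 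Composing a stabilizer state with the fixed CQCA $B$ changes its bipartite entanglement by at most a $B$-dependent constant --- either by the general bound of Section~\ref{sec:bounds}, or directly from $E(\omega_\eta)=\tfrac12(\mathrm{length}(\eta)-1)$ (Theorem~\ref{thm:stab_ent_encode}) and the fact that left multiplication of a reflection-invariant minimal vector $\eta$ by a fixed matrix shifts its top degree by at most a constant. Hence it suffices to prove the theorem for $\omega_{\sigma_3}$, i.e. for the orbit of $\binom{0}{1}$.

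Next, by Theorem~\ref{thm:stab_ent_encode} and reflection invariance, $E(\omega_{\sigma_3},t)=\tfrac12(\mathrm{length}(v_t)-1)=\delta(v_t)$, where $v_t$ is the stabilizer vector at time $t$ (up to a unit monomial, $\sca^{-t}\binom{0}{1}$, a centered palindrome vector since $\sca$ and $\sca^{-1}$ are centered palindrome matrices of determinant~$1$, and minimality is preserved because $\sca^{-1}$ has polynomial entries), and $\delta(v):=\max(\deg v_+,\deg v_-)$ is its top exponent. Because $\det\sca=1$ gives $\tr\sca^{-1}=\tr\sca$, it suffices to show $\lim_{t}\delta(v_t)/t=\dg{\tr\sca}$ for $v_t:=\sca^t\binom{0}{1}$. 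The Cayley--Hamilton identity over $\ZZ_2$ with $\det\sca=1$ reads $\sca^2=(\tr\sca)\sca+\Id$, hence $v_{t+1}=(\tr\sca)v_t+v_{t-1}$ for $t\ge1$. Writing $d:=\dg{\tr\sca}$, so that $\tr\sca$ has leading monomial $\varia^d$, I would prove by induction that $\delta(v_{t+1})=d+\delta(v_t)$ for all $t\ge1$: the degree-$(d+\delta(v_t))$ part of $(\tr\sca)v_t$ is $\varia^d$ times the nonzero top part of $v_t$, and as long as $\delta(v_{t-1})<d+\delta(v_t)$ no cancellation with $v_{t-1}$ can occur, so $\delta(v_{t+1})=d+\delta(v_t)$; the hypothesis $\delta(v_t)=\delta(v_1)+d(t-1)$ then makes $d+\delta(v_t)=\delta(v_1)+dt>\delta(v_1)+d(t-2)=\delta(v_{t-1})$ automatic once $d\ge1$. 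For the base case $t=1$ one needs $d+\delta(v_1)>\delta(v_0)=0$; here $v_1=\binom{\sca_{12}}{\sca_{22}}$ and $\sca_{12}\ne0$ (otherwise $\det\sca=\sca_{11}\sca_{22}=1$ would force $\sca_{11}=\sca_{22}=1$ and $\tr\sca=0$, a periodic automaton), so $\delta(v_1)\ge\deg\sca_{12}\ge0$ and $d\ge1$ gives the inequality. Thus $\delta(v_t)=dt+(\delta(v_1)-d)$ and $\frac{\Delta E}{\Delta t}=d=\dg{\tr\sca}$. The periodic case is handled separately: there $\tr\sca$ is constant, $\dg{\tr\sca}=0$, and by Proposition~\ref{prop:periodic_sca} the state $\omega_\xi$ is periodic under $\CQCA$, so $E(\omega_\xi,t)$ is bounded and the rate is $0$.

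I expect the genuinely delicate point to be the ``no cancellation'' step, i.e. guaranteeing that the top-degree part really propagates starting from $t=1$. For a non-product initial vector the naive induction can fail at small times --- the top exponent may temporarily drop, as it does e.g. for $\simpleglidersca$ applied to $\binom{\varia+\varia^{-1}}{1}$ --- which is exactly why the reduction to $\binom{0}{1}$, where $\delta(v_0)=0$ is minimal and the base case is unconditional, is the right move. The other thing to pin down carefully is that composition with the fixed $B$ perturbs the entanglement only boundedly, so that the asymptotic rate does not depend on which translation-invariant stabilizer state one starts from.
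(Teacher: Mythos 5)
Your proof is correct, and while it rests on the same two pillars as the paper's --- the identification of the bipartite entanglement with the stabilizer degree via Theorem~\ref{thm:stab_ent_encode}, and the Cayley--Hamilton identity $\sca^2=(\tr\sca)\sca+\Id$ --- the core analytic step is genuinely different. The paper's Lemma~\ref{lem:stab_length_growth} keeps the general initial vector $\zeta=\scb\tbinom{0}{1}$ throughout, invokes Fekete's subadditivity lemma to obtain existence of $\lim_t\dg{\sca^t\zeta}/t$, and then evaluates the limit only along the dyadic subsequence $t=2^k$ via an explicit closed form for $\sca^{2^k}$, with a final case distinction on boundedness. You instead normalize the initial state to $\tbinom{0}{1}$ first (absorbing the conjugating automaton $B$ into a bounded additive error on the degree, which is the right way to dispose of it, and noting $\tr(B\CQCA B^{-1})=\tr\CQCA$), and then run a direct induction on the plain three-term recursion $v_{t+1}=(\tr\sca)v_t+v_{t-1}$: over $\ZZ_2$ the leading term of $(\tr\sca)v_t$ has degree exactly $\dg{\tr\sca}+\dg{v_t}$, and the inductive hypothesis places $\dg{v_{t-1}}$ strictly below it, so no cancellation occurs and the top degree grows by exactly $\dg{\tr\sca}$ at every step. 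This yields the exact formula $\dg{v_t}=\dg{\tr\sca}\cdot t+\mathrm{const}$ for all $t\ge1$, which is stronger than the asymptotic statement, and it dispenses with both Fekete's lemma and the subsequence argument. Your diagnosis of why the normalization is the enabling move is also on point: for a general reflection-invariant initial vector the base case of the no-cancellation induction can fail because the degree may transiently drop (your example with $\simpleglidersca$ shows this), and $\tbinom{0}{1}$ is precisely the vector for which $\dg{v_0}=0$ makes the base case unconditional. The remaining bookkeeping --- $\tr\sca^{-1}=\tr\sca$ so the ambiguity between $\sca^{t}$ and $\sca^{-t}$ in the evolved stabilizer is harmless, preservation of minimality and reflection invariance of $\sca^t\tbinom{0}{1}$ so that Theorem~\ref{thm:stab_ent_encode} applies at every time, and the separate trivial treatment of the periodic case --- is all handled correctly.
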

For the proof we need the following lemma:
\begin{lem}
  \label{lem:stab_length_growth}
  The length $2n+1$ of the stabilizer generator of a stabilizer state grows asymptotically with 
  \begin{equation}
    \frac{\Delta n}{\Delta t} =\lim_{t\to\infty}\frac{1}{t}n(t,\xi) = \dg{\tr\sca}
  \end{equation}
  for any centered CQCA $\CQCA$ and any translation invariant pure stabilizer state $\omega_{\xi}$.
  \begin{proof}
    We know that CQCAs map pure stabilizer states to pure stabilizer states and stabilizer generators to stabilizer generators \cite{SchlingemannCQCA}. Thus the image of a pure stabilizer state $\omega_{\xi}$ with stabilizer $\stab=\langle\weyl{\hat\tau^x\xi},\,x\in\ZZ\rangle$ under $\CQCA^t$ is again a pure stabilizer state. We can write $\xi={\bf b}\tbinom{0}{1}$ and $\omega_\xi=\omega_{(0,1)}\circ B$. Therefore we can write the evolved state as $\omega_{\sca^t\xi}=\omega_\xi\circ\CQCA^t=\omega_{(0,1)}\circ\CQCA^t\circ B$. The length of the stabilizer generator is determined by the highest order of the stabilizer generators polynomials, $\dg{\xi}$. Namely the stabilizer generator is of length $2\cdot\dg{\xi}+1$. So we have to calculate $\dg{\sca^t\xi}=\dg{\sca^t{\bf b}\tbinom{0}{1}}$.
    
    An arbitrary product of CSCAs can be written as $\prod_{i=1}^k\sca_i$. The series $(a_l)_{1\le l\le k}=\dg{\prod_{i=1}^l\sca_i}$ is subadditive, i.e. $a_{n+m}\le a_n+a_m$, because concatenation of CSCAs is essentially the multiplication and addition of polynomials which is subadditive in the exponents. For subadditive series $a_n$ Fekete's Lemma \cite{Fekete1923} states that the limit $\lim_{n\to\infty}\frac{a_n}{n}$ exists. In our case the series is always positive, so the limit is positive and finite. To determine the limit, we use the subsequence of the $t=2^k$th steps. Using the Cayley-Hamilton theorem we get
    \begin{equation*}
      \sca^{2^k}=\sca(\tr\sca)^{2^k-1}+\Id\sum_{i=1}^k(\tr\sca)^{2^k-2^i}
    \end{equation*}
    and
    \begin{eqnarray*}
      \mu(k)&\mathrel{\mathop:}=&\dg{\sca^{2^k}\scb\tbinom{0}{1}}\\
      &=&\dg{\sca\scb\tbinom{0}{1}(\tr\sca)^{2^k-1}+\scb\tbinom{0}{1}\sum_{i=1}^k(\tr\sca)^{2^k-2^i}}\\
      &=&\eta\cdot c(k)+\zeta\cdot d(k)
    \end{eqnarray*}
    with $\eta=\sca\scb\tbinom{0}{1}$, $\zeta=\scb\tbinom{0}{1}$, $c(k)=(\tr\sca)^{2^k-1}$, and $d(k)=\sum_{i=1}^k(\tr\sca)^{2^k-2^i}$.
    $c(k)$ and $d(k)$ fulfill the recursion relations
    \begin{eqnarray*}
      c(k+1)&=&(\tr\sca)^{2^{k+1}-1}=(\tr\sca)^{2^k-1}(\tr\sca)^{2^k}=c(k)(\tr\sca)^{2^k},\\
      d(k+1)&=&\sum_{i=1}^{k+1}(\tr\sca)^{2^{k+1}-2^i}=\sum_{i=1}^{k}(\tr\sca)^{2^k+2^k-2^i}+(\tr\sca)^0=d(k)(\tr\sca)^{2^k}+1.
    \end{eqnarray*}
    Therefore 
    \begin{equation*}
      \mu(k+1)=\mu(k)(\tr\sca)^{2^k}+\zeta.
    \end{equation*}
    At this point we need a binary case distinction. Either (1.) $\dg{\mu(k)}$ is uniformly bounded by $\dg{\zeta}$, implying peridocity of $\sca$, or (2.) $\dg{\mu(k)}$ is unbounded and passes $\dg{\zeta}$ so no cancellation can occur and we can easily calculate the limit.
    \begin{enumerate}
      \item $\dg{\mu(k)}$ uniformly bounded by $\dg{\zeta}$ implies $\dg{\mu(k)}\le\dg{\zeta}$ for all $k$. Then $n(t=2^k,\xi)=\dg{\mu(k)}$ is bounded and $\frac{\Delta n}{\Delta t}=0$. But $n(t,\xi)$ bounded also implies $\sca$ periodic and therefore $\dg{\tr\sca}=0$. Thus we have $\frac{\Delta n}{\Delta t}=0=\dg{\tr\sca}$.
      \item If $\dg{\mu(k)}$ is not uniformly bounded by $\dg{\zeta}$, there exists a $\mu(k_0)$ with $\dg{\mu(k_0)}>\dg{\zeta}$. Then $\dg{\mu(k_0+1)}=\dg{\mu(k_0)}+\dg{\tr\sca}(2^{k_0})$ and by recursion $\dg{\mu(k+1)}=\dg{\mu(k)}+\dg{\tr\sca}(2^{k}),\,\forall k\ge k_0$.
      Now we can calculate the limit:
      \begin{eqnarray*}
        \lim_{k\to\infty}\frac{1}{2^k}\dg{\sca^{2^k}{\bf b}\tbinom{0}{1}}&=&\lim_{k\to\infty}\frac{1}{2^k}\dg{\mu(k)}\\
        &=&\lim_{k\to\infty}\frac{1}{2^k}\left(\dg{\mu(k_0)}+\sum_{i=1}^{k-k_0}\dg{\tr\sca}(2^{k-i})\right)\\
        &=&\underbrace{\lim_{k\to\infty}\frac{1}{2^k}\dg{\mu(k_0)}}_{=0}+\lim_{k\to\infty}\sum_{i=1}^{k-k_0}\dg{\tr\sca}\frac{1}{2^{i}}\\
        &=&\dg{\tr\sca}.
      \end{eqnarray*}
    \end{enumerate}
    Thus in all cases $\frac{\Delta n}{\Delta t}=\dg{\tr\sca}$.
  \end{proof}
\end{lem}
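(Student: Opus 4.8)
The plan is to reduce the statement to a purely algebraic estimate on the growth of the degree $\dg{\sca^t\xi}$ of the evolved phase-space vector. Since a translation-invariant pure stabilizer state $\omega_{\sca^t\xi}$ has a stabilizer generator of length $2\dg{\sca^t\xi}+1$, and since CQCAs map pure stabilizer states to pure stabilizer states and generators to generators \cite{SchlingemannCQCA}, we have $n(t,\xi)=\dg{\sca^t\xi}$; it therefore suffices to prove $\lim_{t\to\infty}\frac1t\dg{\sca^t\xi}=\dg{\tr\sca}$. First I would establish that the limit exists at all: the degree is subadditive under the matrix and polynomial operations (the degree of a product is the sum of degrees and the degree of a sum is bounded by the maximum), so $t\mapsto\dg{\sca^t}$ is subadditive and Fekete's lemma \cite{Fekete1923} guarantees that $\frac1t\dg{\sca^t}$ converges, which forces convergence of $\frac1t\dg{\sca^t\xi}$ as well. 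The value of the limit may then be computed along any convenient subsequence.

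The core of the argument is to evaluate the limit along the dyadic subsequence $t=2^k$, where the characteristic-two Frobenius structure makes everything explicit. Using $\det\sca=1$, Cayley--Hamilton gives $\sca^2=(\tr\sca)\sca+\Id$, and squaring repeatedly in characteristic two (so that all cross terms vanish) yields the closed form $\sca^{2^k}=(\tr\sca)^{2^k-1}\sca+c_k\Id$ with $c_k=\sum_{i=1}^k(\tr\sca)^{2^k-2^i}$. Applying this to $\xi$ and setting $\mu_k:=\sca^{2^k}\xi$, I would record the clean recursion $\mu_{k+1}=(\tr\sca)^{2^k}\mu_k+\xi$. This recursion is the real workhorse: it isolates the leading contribution as a high power of $\tr\sca$ multiplying the previous vector, with only the fixed low-degree correction $\xi$ added.

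With $d:=\dg{\tr\sca}$ and $D_k:=\dg{\mu_k}=n(2^k,\xi)$, I would split into two cases. If $d=0$, then $\tr\sca$ is constant, so $\sca$ is periodic by Proposition \ref{prop:periodic_sca}; hence $\{\sca^t\xi\}$ is finite, $D_k$ is bounded, and $D_k/2^k\to0=\dg{\tr\sca}$. If $d\ge1$, I first claim that $D_{k_0}>\dg\xi$ for some $k_0$: were $D_k\le\dg\xi$ for all $k$, the centered vectors $\mu_k$ would lie in a finite set of bounded-span vectors over $\ZZ_2$, forcing $\mu_{k_1}=\mu_{k_2}$ for some $k_1<k_2$, so that $\sca^{2^{k_2}-2^{k_1}}$ fixes the nonzero vector $\mu_{k_1}$; by Proposition \ref{prop:ev_one_periodic} this makes $\sca$ of finite order, whence the identity $\tr(\sca^{2^j})=(\tr\sca)^{2^j}$ (a trace of the closed form, using $\tr\Id=0$) would stay of bounded degree, contradicting $d\ge1$. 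Once $D_{k_0}>\dg\xi$ the recursion propagates it: since $\dg{(\tr\sca)^{2^k}\mu_k}=2^k d+D_k>\dg\xi$, adding $\xi$ cannot cancel the leading term, so $D_{k+1}=D_k+2^k d$ for all $k\ge k_0$. Summing the resulting geometric series gives $D_k=D_{k_0}+d(2^k-2^{k_0})$, hence $D_k/2^k\to d$. Combined with the existence of the full limit, this yields $\frac{\Delta n}{\Delta t}=\dg{\tr\sca}$.

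The main obstacle I expect is controlling \emph{cancellation of leading terms in characteristic two}: a priori the two summands $(\tr\sca)^{2^k-1}\sca\xi$ and $c_k\xi$ can have equal top degree and cancel, so one cannot simply read off $\dg{\sca^{2^k}\xi}$ term by term. The recursion $\mu_{k+1}=(\tr\sca)^{2^k}\mu_k+\xi$ circumvents this by making the dominant factor $(\tr\sca)^{2^k}$ carry overwhelmingly higher degree than the correction $\xi$ once the degree has grown past $\dg\xi$; the only delicate input is then the base step, namely ruling out permanently bounded degree for a non-periodic automaton, which is exactly where Propositions \ref{prop:periodic_sca} and \ref{prop:ev_one_periodic} (together with $\tr(\sca^{2^j})=(\tr\sca)^{2^j}$) enter.
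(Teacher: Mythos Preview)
Your proposal is correct and follows essentially the same route as the paper: Fekete's lemma for existence of the limit, the dyadic subsequence $t=2^k$ exploited via Cayley--Hamilton and the Frobenius identity in characteristic two, the key recursion $\mu_{k+1}=(\tr\sca)^{2^k}\mu_k+\xi$, and the observation that once $\dg{\mu_k}>\dg\xi$ no leading-term cancellation can occur so degrees add exactly. The only organisational difference is that the paper splits on whether $\dg{\mu_k}$ stays bounded by $\dg\xi$ and then asserts that boundedness forces periodicity, whereas you split on $d=\dg{\tr\sca}$ and, for $d\ge1$, supply a pigeonhole argument together with Proposition~\ref{prop:ev_one_periodic} and the trace identity $\tr(\sca^{2^j})=(\tr\sca)^{2^j}$ to rule out boundedness; this makes the ``bounded $\Rightarrow$ periodic'' step more explicit than in the paper but is otherwise the same idea.
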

Now we can proceed to the proof of Theorem \ref{thm:ent_stab}.
\begin{proof}[Proof of Theorem \ref{thm:ent_stab}]
  As shown in Theorem \ref{thm:stab_ent_encode}, a stabilizer state of stabilizer generator length $2n+1$ encodes $n$ maximally entangled qubits with respect to a bipartite cut. In Lemma \ref{lem:stab_length_growth} we showed that the minimal length of a stabilizer generator grows asymptotically with $2\cdot\dg{\tr\sca}$ under the action of a CQCA $\CQCA$. Together these results prove the theorem.
\end{proof}

Figure \ref{fig:bipartite_stab} illustrates this behavior for different CQCAs.

\begin{figure}[ht]
  \begin{center}
    \input{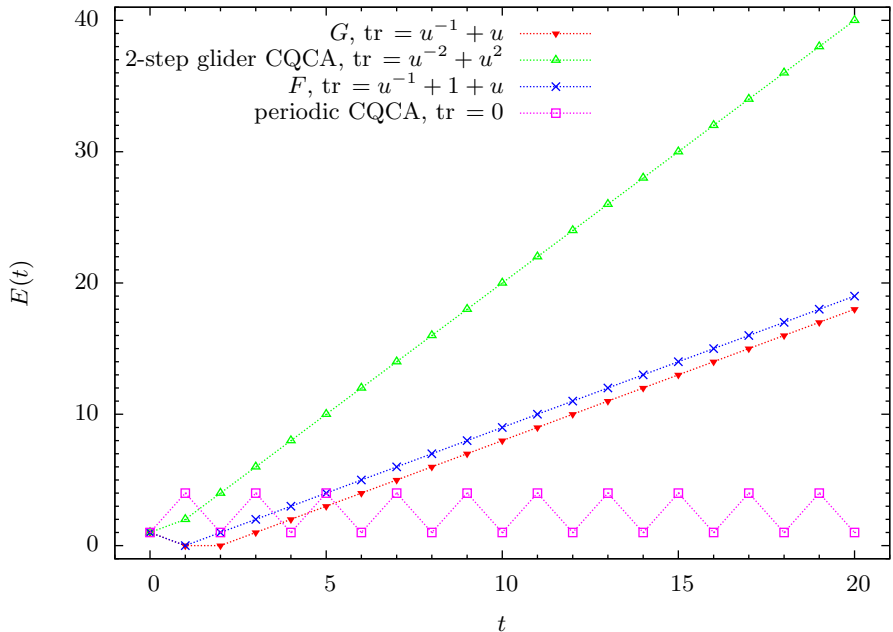}
    \caption{Entanglement generation for the stabilizer state with $\weyl{\xi}=\sigma_2\otimes\sigma_1\otimes\sigma_2$ in a bipartite setting with different CQCAs. One can see that entanglement can also be destroyed, but grows asymptotically linear with the number of timesteps $n$. The coefficient is given by the degree of the trace of the CSCA matrix.}
    \label{fig:bipartite_stab}
  \end{center}
\end{figure}

We can also calculate the entanglement of a finite region, i.e.,\ $L$ consecutive spins, with the rest of the chain. To do this calculation, we use the same method as above, and arrive at the following theorem.
\begin{thm}
  \label{thm:stab_ent_encode_finite}
  Given a pure translation-invariant stabilizer state of stabilizer generator length $2n+1$, a region of length $L$ shares $2n$ maximally entangled qubit pairs with the rest of the chain if $2n\le L$ and $L$ qubits pairs if $2n>L$.
  \begin{proof}
    The proof works exactly as in the bipartite case. In the case $2n\le L$ the cut stabilizers are only cut on one side. But all stabilizers that are cut on the left side commute with those cut on the right side. Thus we have two independent cuts of the bipartite case and therefore $2n$ pairs of maximally entangled qubits. In the case $2n > L$ some stabilizers are cut on both sides. We arrange them in a $(2n+L)\times 2L$-matrix like in the proof of Theorem \ref{thm:stab_ent_encode} and use the same technique to produce the mutually commuting anticommuting pairs which encode the qubits. We always find $L$ pairs of maximally entangled qubits.
  \end{proof}
\end{thm}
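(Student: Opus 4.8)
The plan is to reduce everything to the bipartite analysis of Theorem~\ref{thm:stab_ent_encode}. By Theorem~\ref{thm:stab_ent}, the entanglement of a finite region $R$ with the rest of the chain, in the pure translation-invariant stabilizer state $\omega_\xi$, equals the number of pairs of phase-space vectors in the restricted correlation subgroup that can be brought to the form of pairs anticommuting within a pair and commuting across pairs; and, as in the appendix proof of Theorem~\ref{thm:stab_ent_encode}, this count is read off from the binary matrix whose rows are the restrictions to $R$ of those stabilizer generators $g_x=\weyl{\hat\tau^x\xi}$ whose support meets both $R$ and its complement -- the ``cut'' generators. Each $g_x$ is supported on exactly the $2n+1$ consecutive sites $[x,x+2n]$ and, by the constraints on $\xi$ recalled before Theorem~\ref{thm:stab_ent_encode} (odd length, $\gcd(\xi_+,\xi_-)=1$), carries non-identity Pauli factors at both endpoints $x$ and $x+2n$. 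Take $R=[0,L-1]$.

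First I would treat the case $2n\le L$. A generator is cut iff it reaches a site $<0$ or a site $>L-1$; since $2n\le L$, no single generator can do both, so every cut generator is of ``left'' type ($-2n\le x\le-1$, its part outside $R$ lying strictly left of $R$) or of ``right'' type ($L-2n\le x\le L-1$, outside part strictly right of $R$), giving $2n$ generators of each kind. The key point is that a left-type and a right-type generator, restricted to $R$, commute: the two full operators commute because they lie in the abelian group $\stab$, while the portion of the left generator outside $R$ (on sites $<0$) and that of the right generator outside $R$ (on sites $\ge L$) are supported on sites disjoint from each other and from both generators' supports inside $R$ -- this is exactly where $2n\le L$ is used -- so the sign contributed to the symplectic form by the outside portions vanishes, and the restrictions to $R$ must already commute. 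Hence the restricted correlation subgroup is the direct product of a ``left-edge'' part and a ``right-edge'' part, each of which, after translating so that the relevant boundary becomes ``the'' cut, is precisely the bipartite correlation subgroup of Theorem~\ref{thm:stab_ent_encode}, encoding $n$ maximally entangled pairs; this yields $2n$ pairs in total.

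For the case $2n>L$ I would show that the reduced state of $\omega_\xi$ on the $L$ qubits of $R$ is already maximally mixed, so that $R$ is maximally entangled with its complement and shares $L$ Bell pairs. Every generator meeting $R$ is cut (a window of $L<2n+1$ sites cannot contain the whole support of a generator), and I claim $\stab$ contains no nontrivial Pauli operator supported entirely inside $R$: in any product $\prod_k g_{x_k}$ with distinct $x_k$, the leftmost factor $g_{x_{\min}}$ leaves an uncancelled non-identity Pauli at site $x_{\min}$ (no other factor reaches that site) and likewise the rightmost factor leaves one at site $x_{\max}+2n$, so the product's support spans at least $[x_{\min},x_{\max}+2n]$, and requiring this to lie inside $[0,L-1]$ forces $0\le x_{\min}\le x_{\max}\le L-1-2n<0$, a contradiction. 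Since $\omega_\xi$ is a pure stabilizer state, a maximally mixed restriction to $R$ means the entropy of $R$ is $L$, i.e.\ $L$ maximally entangled pairs; equivalently, the $(2n+L)\times 2L$ matrix of restricted cut generators attains the maximal symplectic rank $2L$ under the reduction used in Theorem~\ref{thm:stab_ent_encode}.

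I expect the main obstacle to be the bookkeeping near the threshold $2n\approx L$: one must check that a generator which happens to contain all of $R$ is still correctly classified (its outside part lies on only one side, so the $2n\le L$ argument goes through unchanged) and that no double counting occurs between the left- and right-edge parts. The genuinely substantive point is the ``uncancelled endpoints'' argument in the $2n>L$ case, which is what distinguishes this regime from $2n\le L$, where products of the fully interior generators do give nontrivial local stabilizers inside $R$; keeping the symplectic-rank count honest there -- rather than merely plausible -- is the part that requires care, and the cleanest route is the maximal-mixedness statement above rather than an explicit row reduction.
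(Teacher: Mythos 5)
Your proof is correct. For the case $2n\le L$ it follows the paper's own route exactly: classify the cut generators as left-type or right-type (and your observation that $2n\le L$ forbids a generator from being cut on both sides, plus the site-wise decomposition of the symplectic form showing that left- and right-restricted operators commute, is precisely the justification the paper leaves implicit for its phrase ``two independent cuts of the bipartite case''), giving $n+n=2n$ Bell pairs. For the case $2n>L$ you genuinely diverge: the paper arranges the doubly-cut generators into a $(2n+L)\times 2L$ matrix and reruns the pair-extraction of Theorem~\ref{thm:stab_ent_encode}, whereas you show directly that $\fring\xi$ contains no nonzero vector supported in a window of $L<2n+1$ sites (the uncancelled-endpoint argument on the leftmost and rightmost translates is sound, since the length-$(2n+1)$ condition forces non-identity Paulis at both ends of $\xi$ and no other translate reaches those sites), so the restriction of $\omega_\xi$ to the region is the tracial state and the entropy is the maximal value $L$. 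Your route buys a self-contained and arguably more rigorous argument than the paper's terse ``use the same technique''; what it costs is a small bridging step you assert but do not prove, namely that entropy $L$ for a pure stabilizer state is the same as $L$ maximally entangled qubit pairs in the sense of Theorem~\ref{thm:stab_ent} (equivalently, that the restricted correlation group attains full symplectic rank $2L$). That equivalence does hold -- maximal entropy leaves no room for a nontrivial local subgroup, and the paper's own identification $E=\frac{1}{2}|\stab_{AB}|$ then forces $L$ pairs -- but a sentence making it explicit would close the loop between your entropy computation and the theorem's statement about qubit pairs.
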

For the evolution of entanglement under the action of a CQCA $\CQCA$, this means that starting with a product stabilizer state, the entanglement grows with $2\cdot\tr\sca$ until it reaches $L$. Then it remains constant. If we start with a general translation-invariant stabilizer state, the entanglement might decrease at first. After some time it starts increasing and reaches $L$, where it remains if the CQCA is not periodic. Results are shown in Figure \ref{fig:tripartite_stab}.

\begin{figure}[ht]
  \begin{center}
    \input{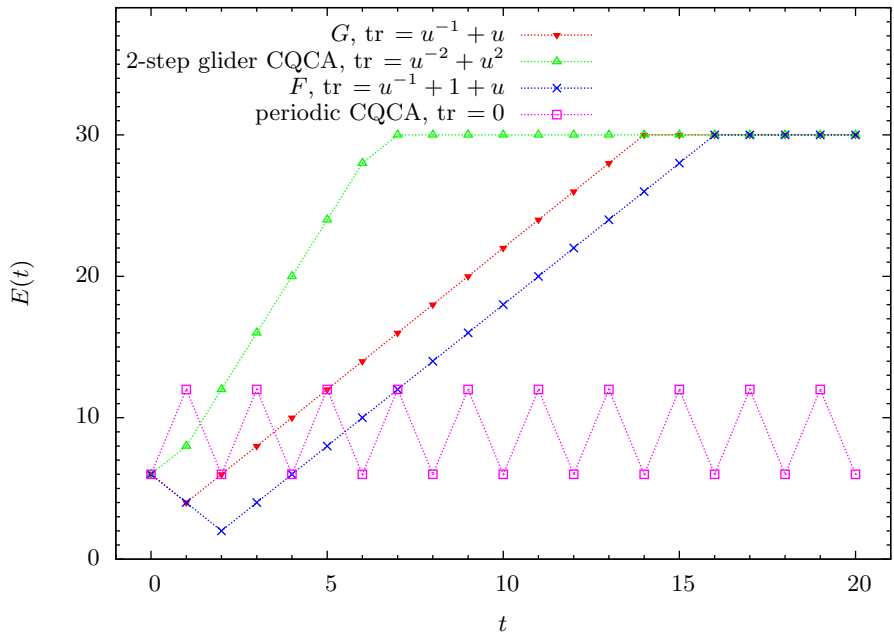}
    \caption{Evolution of entanglement of a subchain of $30$ consecutive spins for an initial  stabilizer state with $\weyl{\xi}=\sigma_2\otimes\sigma_1\otimes\sigma_1\otimes\sigma_1\otimes\sigma_1\otimes\sigma_1\otimes\sigma_2$ for different CQCA actions. The entanglement first grows as in the bipartite case, but then saturates at $30$ qubit pairs.}
    \label{fig:tripartite_stab}
  \end{center}
\end{figure}
\subsection{Entanglement generation starting from translation-invariant quasifree states}
\label{sec:ent_qf}
In this section we study the entanglement
generation of the glider automorphism $\glider$ acting on a  
family of pure 
quasifree states that interpolates between the
all-spins-up state (discussed in the previous section)
and a glider-invariant state (discussed in Section \ref{sec:stat_qf}).

Let $\omega$ be a pure translation-invariant quasifree state,
and let $\omega_{[1,L]}$ denote its restriction to the lattice points
$\{1,2, \dots,L \}$.
The entanglement entropy of the restricted state $\omega_{[1,L]}$ can be calculated from the eigenvalues $\{\lambda_i \}_{i=1 \dots 2L}$ of the restricted majorana two-point matrix $\left[M_{n,m}\right]_{n,m=1}^{2L}$ 
by the formula \cite{Fannes,VLRK}:
\begin{equation}
S(\omega_{[1,L]}) = - \sum_{i=1}^{2L} \lambda_i/2 \log 
( \lambda_i/2). \label{bin} 
\end{equation}

The family of states that we will consider
as initial states are the pure translation-invariant
quasifree states $\omega_A$ described by the symbol (see Section \ref{sec:stat_qf})):
\begin{equation*}
Q^{(\omega_A)}(p)= 
\begin{bmatrix}
1-\chi_{[-\pi A, 0]}(p)+\chi_{[0,\pi A]}(p) & 
i[1-\chi_{[-\pi A,0]}(p)-\chi_{[0,\pi A]}(p)]\\
-i[1-\chi_{[-\pi A,0]}(p)-\chi_{[0,\pi A]}(p)] & 
1-\chi_{[-\pi A,0]}(p)+\chi_{[0,\pi A]}(p)
\end{bmatrix},
\end{equation*}
where $\chi_{[a,b]}$ denotes the characteristic function
of the interval $[a,b]$, 
and $A$ is some real number between $0$ and $1$.
The state corresponding to $A=0$ is the all-spins-up state,
while the state corresponding to $A=1$ is a glider-invariant state.
We have shown in Section \ref{sec:stat_qf}
that by applying the glider automorphism $n$-times
on $\omega_A$ one obtains a quasifree state $\omega_A^{(n)}$ 
belonging to the symbol
\begin{equation*}
Q^{(\omega_A^{(n)})}(p)= 
\begin{bmatrix}
1-\chi_{[-\pi A, 0]}(p)+\chi_{[0,\pi A]}(p) & 
i[1-\chi_{[-\pi A,0]}(p)-\chi_{[0,\pi A]}(p)]e^{2inp}\\
-i[1-\chi_{[-\pi A,0]}(p)-\chi_{[0,\pi A]}(p)]e^{-2inp} & 
1-\chi_{[-\pi A,0]}(p)+\chi_{[0,\pi A]}(p)
\end{bmatrix}.
\end{equation*}
Using this result and Formula (\ref{bin}), we calculated 
numerically the entanglement generation.
The results are shown in Figures \ref{fig:ent_gen_qf_a} and \ref{fig:ent_gen_qf_l}. 
We can observe that the entanglement generation is linear in time,
its rate is maximal when $A=0$,
and the rate can be arbitrarily small (when $A$ approaches $1$). This is illustrated in Figure \ref{fig:ent_gen_qf_a}. For longer subchains it takes more time steps for the entanglement to saturate. We show this in Figure \ref{fig:ent_gen_qf_l}.

\begin{figure}[ht]
  \begin{center}
    \input{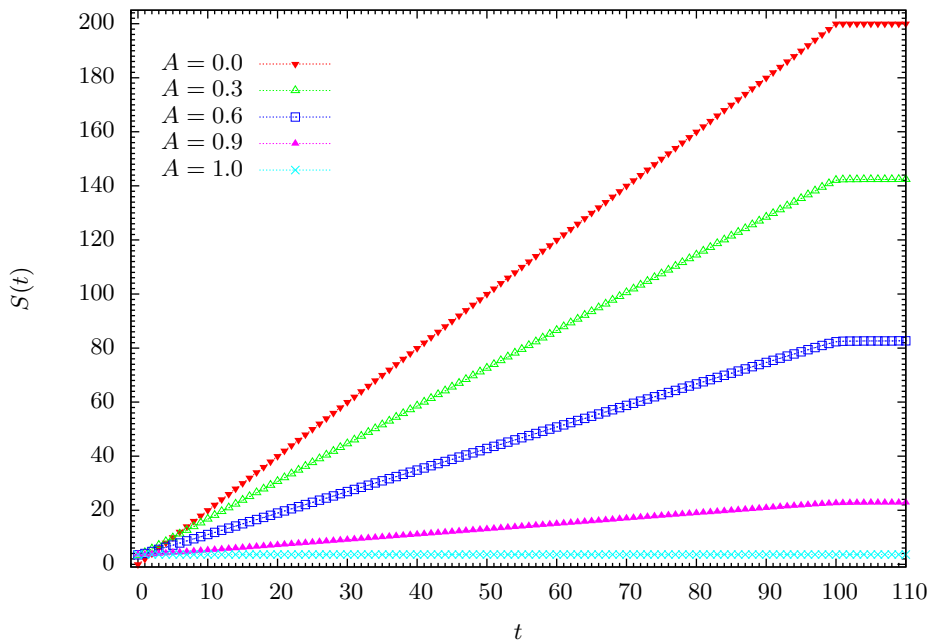}
    \caption{The entanglement entropy of a subchain of length $200$ after applying the glider time-evolution t times. Different colors mark the different initial of the parameter $A$ of the initial quasifree state.}
    \label{fig:ent_gen_qf_a}
  \end{center}
\end{figure}

\begin{figure}[ht]
  \begin{center}
    \input{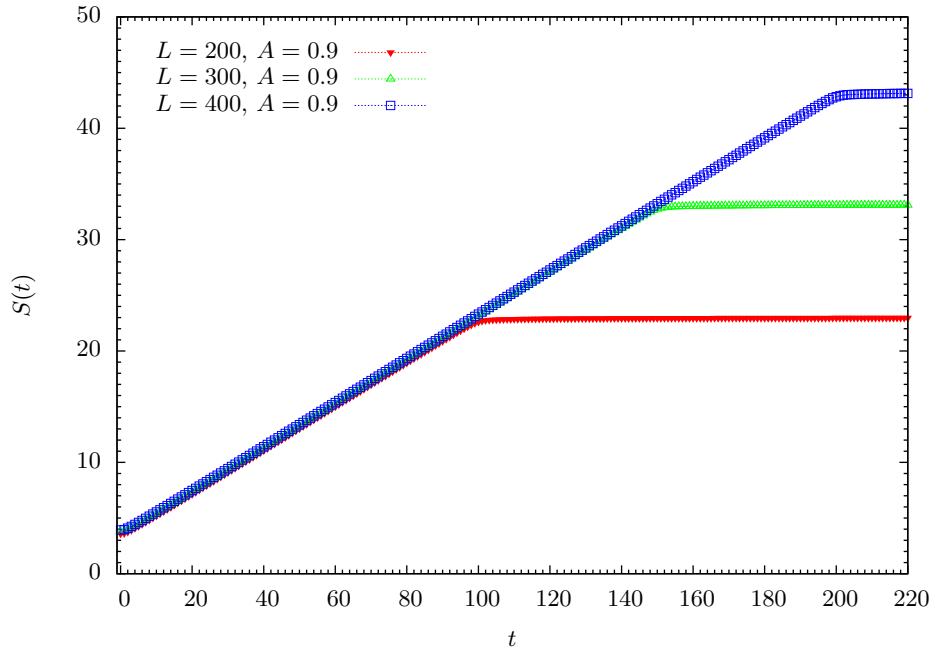}
    \caption{The entanglement entropy of a subchain of $L$ consecutive spins in dependence of the time steps. For larger $L$ more entanglement is possible. Different colors refer to different length of the chain. $A$ is fixed at $0.9$. }
    \label{fig:ent_gen_qf_l}
  \end{center}
\end{figure}
\begin{acknowledgments}
  The authors would like to thank Vincent Nesme, Holger Vogts, Szil\'ard Farkas, and P\'eter Vecserny\'es for helpful discussions.
  Johannes G\"utschow is supported by the Rosa Luxemburg Foundation. The Braunschweig/Hannover group is supported by the DFG Forschergruppe 635, the European Union through the Integrated Project ``SCALA'' (grant number 015714) and the European FP6 STREP QICS project (grant number 033763). Zolt\'an Zimbor\'as is supported by the European project COQUIT under FET-Open grant number 233747.
\end{acknowledgments}
\begin{appendix}
\section{Proofs and technicalities}
  \subsection{Some results for fractal CQCA}
    We state here some results for fractal CQCAs. All results are stated for CQCAs, because it is more convenient than the pure phase space formulation. Nevertheless, the results are also true for CSCAs and the proofs use the phase space formulation. 
    \label{app:frac}
    \begin{lem}
      \label{lem:finite_pauli}
      A finite tensor product of only one kind of Pauli matrices (and the identity) occurs at most once for every Pauli matrix $\sigma_j,\,j=1,2,3$ in the history of any non-periodic CQCA $\CQCA$.
      \begin{proof}
        We begin with the case $j=1$. For two observables of the form $\bigotimes\sigma_i,\,i=1,0$ to occur in the same time evolution of a CQCA $\CQCA$, the condition
        \begin{equation*}
          \label{eq:cqca_same_Pauli}
          \sca^n=\left(
          \begin{array}{cc}
            x&y\\
            z&v
          \end{array}
          \right)
          \binom{a}{0}=\binom{b}{0}
        \end{equation*}
        has to be true. It immediately follows, that 
        \begin{equation*}
          \sca^n= \left(
          \begin{array}{rl}
              1 & y  \\
              0 & 1 \\
          \end{array} \right),
        \end{equation*}
        which is a periodic automaton. The case $j=3$ works analogous. For $j=2$ we employ the fact, that we can build a CQCA $\CQCA$ which fulfills (\ref{eq:cqca_same_Pauli}) via
        \begin{displaymath}
          \sca=\left(\begin{array}{rl}
              1 & 0  \\
              1 & 1 \\
          \end{array} \right)
          \scb
          \left(\begin{array}{rl}
              1 & 0  \\
              1 & 1 \\
          \end{array} \right)
        \end{displaymath}
        from any CQCA $B$, that fulfills $\scb\tbinom{a}{a}=\tbinom{b}{b}$. Moreover, as the conjugation with a CQCA does not change the trace, all CQCAs $B$ have to be periodic.
      \end{proof}
    \end{lem}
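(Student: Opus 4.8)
The statement to prove is Lemma \ref{lem:finite_pauli}: a finite tensor product of only one kind of Pauli matrix (and identities) occurs at most once in the history of a non-periodic CQCA.

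The approach is to work in the phase space picture. A finite product of only $\sigma_1$'s and identities corresponds to a phase space vector of the form $\binom{a}{0}$ with $a \in \fring$; similarly $\sigma_3$-only products are $\binom{0}{a}$, and $\sigma_2$-only products are $\binom{a}{a}$. Suppose such a vector occurs at two distinct times in the orbit of a CSCA $\sca$. Then some power $\sca^n$ (with $n \geq 1$) maps one such vector to another of the same type.

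\textbf{Case $j=1$.} Here we would assume $\sca^n \binom{a}{0} = \binom{b}{0}$ for nonzero $a,b \in \fring$. Writing $\sca^n = \left(\begin{smallmatrix} x & y \\ z & v\end{smallmatrix}\right)$, the second component gives $z a = 0$, and since $\fring$ is an integral domain and $a \neq 0$, we get $z = 0$. Then $\det \sca^n = xv = 1$, so $x$ and $v$ are invertible; by Proposition \ref{mono} they are monomials $\varia^{k}$ and $\varia^{-k}$. But $\sca^n$ is a CSCA, so its entries are centered palindromes, forcing $x = v = 1$. Hence $\sca^n = \left(\begin{smallmatrix} 1 & y \\ 0 & 1\end{smallmatrix}\right)$. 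This matrix has trace $0$ (in $\ZZ_2$), hence by Proposition \ref{prop:periodic_sca} $\sca^n$ is periodic, and therefore $\sca$ itself is periodic (no power of a non-periodic automaton can be periodic, as noted in the proof of Lemma \ref{lem:fractal_power}), contradicting the hypothesis. So at most one such configuration appears. The case $j=3$ is identical after swapping the two rows/columns.

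\textbf{Case $j=2$.} Here the relevant vectors are $\binom{a}{a}$. The trick is to conjugate into the previous case using the local generator $H = \htra$ (equivalently $H^{-1}=H$ over $\ZZ_2$, up to the shear direction). Note $\left(\begin{smallmatrix} 1 & 0 \\ 1 & 1\end{smallmatrix}\right)\binom{a}{0} = \binom{a}{a}$, so if $\sca\binom{a}{a} = \binom{b}{b}$ then the conjugated automaton $\scb := \left(\begin{smallmatrix} 1 & 0 \\ 1 & 1\end{smallmatrix}\right)^{-1}\sca\left(\begin{smallmatrix} 1 & 0 \\ 1 & 1\end{smallmatrix}\right)$ satisfies $\scb\binom{a}{0} = \binom{b}{0}$, which by the $j=1$ case forces $\scb$ periodic; since conjugation preserves the trace, $\sca$ is periodic too, again a contradiction. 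I would phrase the lemma's proof exactly along these lines, citing Propositions \ref{mono}, \ref{prop:periodic_sca}, and the trace-invariance-under-conjugation fact already established before Theorem \ref{thm:glider_equiv}.

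The main (minor) obstacle is being careful about the distinction between ``the vector $\binom{a}{0}$ reappearing at a later time'' and ``two different such vectors appearing at different times'': both cases must be handled, but both lead to $\sca^n$ having the stated upper-triangular unipotent form for some $n \geq 1$, so the argument is uniform. One also has to make sure the degenerate sub-cases ($a=0$, i.e.\ the identity) are excluded from the statement — the identity trivially reappears at every step, which is why the lemma is phrased for genuine one-type products. Overall this is a short argument once the phase-space dictionary is in place; the only real content is invoking that invertible Laurent polynomials are monomials and that centered palindromic invertibles are $1$.
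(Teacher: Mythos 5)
Your proposal is correct and takes essentially the same route as the paper's proof: the condition $\sca^n\binom{a}{0}=\binom{b}{0}$ forces the unipotent upper-triangular form (you usefully spell out the steps the paper leaves implicit, namely $za=0\Rightarrow z=0$ in the integral domain, $xv=1$ with centered palindromic invertibles forcing $x=v=1$), and the $j=2$ case is reduced to $j=1$ by conjugating with $\left(\begin{smallmatrix}1&0\\1&1\end{smallmatrix}\right)$ exactly as in the paper, using trace invariance under conjugation. No substantive difference in approach.
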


    \begin{lem}\label{Pauli_viele}
      Let $\CQCA$ be a fractal CQCA on a spin chain and let $\bigotimes \sigma_i$ be a finite tensor product of Pauli matrices. For every $k \in \NN$ there exist an $m \in \NN$ such that
      $\CQCA^m \bigotimes \sigma_i$ contains at least $k$ Pauli matrices.
      \begin{proof}
        We assume that the number of elements is bounded by some $k_{max}$. The area over which these $k_{max}$ elements are distributed is not bounded: If it were, the elements would either be restricted to a finite area for an infinite number of time steps, implying periodicity, or the area would move as a whole implying gliders for some power of $\CQCA$, which is not possible as shown in Lemma~\ref{lem:fractal_power}.
        So we see that any group of Pauli matrices will eventually be distributed over any area. But as we require the number of elements to be bounded, the distance between any two groups of Pauli matrices becomes larger than the neighborhood of the automaton. Then the starting argument applies to each of the new groups and forces them to break apart further until only isolated single-cell observables are left. But these will expand, thus the number of elements can't be bounded.
      \end{proof}
    \end{lem}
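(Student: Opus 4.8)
The plan is to argue by contradiction. First I would suppose that the number $k(t)$ of non-identity tensor factors in $\CQCA^t(\bigotimes\sigma_i)$ stays bounded, say $k(t)\le k_{max}$ for all $t$, and then work entirely on the phase-space side with the matrix $\sca$ of the associated CSCA, writing $\xi\in\fring^2$ for the phase-space vector of $\bigotimes\sigma_i$. Alongside $k(t)$ I would track the spatial \emph{spread} $d(t)$, i.e.\ the length (largest minus smallest occupied lattice site) of the support of $\sca^t\xi$.

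The first case to dispose of is $d(t)$ bounded. If $d(t)\le R$ for infinitely many $t$, then, since up to translation there are only finitely many phase-space vectors supported in a window of length $R$, two of these iterates agree after a shift: $\sca^{t}\xi=\varia^{c}\sca^{s}\xi$ for some $s<t$ and some $c\in\ZZ$. Setting $\eta=\sca^{s}\xi\neq 0$ and dividing out the gcd of its components, we get $\sca^{t-s}\eta=\varia^{c}\eta$ for a non-zero $\eta$; the eigenvalue is automatically a monomial, consistent with Remark~\ref{rem:glider}. If $c=0$ then $\sca^{t-s}$ fixes $\eta$, hence is periodic of period two by Proposition~\ref{prop:ev_one_periodic}, so $\sca$ is periodic, contradicting fractality. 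If $c\neq 0$ then $\eta$ is a glider for $\sca^{t-s}$, so $\sca^{t-s}$ is not fractal, contradicting Lemma~\ref{lem:fractal_power}. Hence $d(t)$ must be unbounded.

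The core step is then: $k(t)$ bounded together with $d(t)\to\infty$ forces the at most $k_{max}$ occupied sites to organize into clusters that drift arbitrarily far apart. Once two clusters are separated by more than the neighbourhood radius of $\CQCA$, the subsequent evolution acts on each cluster independently, since the light cones do not meet. I would then apply the same dichotomy to each cluster separately: it is a configuration with at most $k_{max}$ factors, and strictly fewer whenever a genuine split has occurred, so each cluster either stays bounded in extent (and we again get periodicity or a glider for a power, a contradiction) or spreads and splits further. A downward induction on cluster size terminates at isolated single-cell observables; but a lone single Pauli matrix under a fractal — in particular non-periodic and non-glider — automaton has unbounded support in time, because a bounded orbit would put us back in the first case. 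This contradicts the assumed bound on $k(t)$ and proves the lemma.

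I expect the main obstacle to be making the clustering/independence argument and its recursion fully rigorous: one must pin down the definition of a cluster (components whose mutual gaps exceed the neighbourhood width), verify that genuine splittings strictly decrease a suitable complexity measure so that the induction is well-founded, and confirm that the base case — an isolated single-cell observable — genuinely expands without bound under a fractal CSCA, which is exactly where the classification results (Lemma~\ref{lem:fractal_power}, Proposition~\ref{prop:ev_one_periodic}, Remark~\ref{rem:glider}) do the real work.
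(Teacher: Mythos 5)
Your proof is correct and follows essentially the same route as the paper's: assume a bound $k_{max}$, rule out bounded spatial spread because it would force periodicity or a glider for a power of the automaton (via Proposition~\ref{prop:ev_one_periodic} and Lemma~\ref{lem:fractal_power}), and then let the bounded number of factors force a recursive splitting into causally separated clusters down to single-cell observables, which must expand. Your pigeonhole argument making the bounded-spread dichotomy explicit is a sharpening of a step the paper only sketches, and the rigor gaps you flag in the clustering recursion are precisely the points the paper also leaves informal.
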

  
  \subsection{Proof of theorem~\ref{thm:stab_ent_encode}}\label{sec:app_stab_ent_encode}
    \begin{proof}[Proof of Theorem~\ref{thm:stab_ent_encode}]

    We use the criterion of Theorem~\ref{thm:stab_ent} and explicitly construct the pairs $\weyl{\xi_i}$, $\weyl{\eta_i}$ using methods from stabilizer codes for quantum error correction \cite{Gottesman1997}. As said in Section \ref{sec:ent_stab} only stabilizer generators localized on both sides of the cut are elements of the correlation group $\stab_{AB}$. The projections of all other stabilizer generators are just the stabilizer generators themselves which trivially commute with all other stabilizer generators and their projections on $A$ resp.\ $B$. We now use the phase space representation of the Pauli products and build the following $2n\times4n$-matrix from the cut translates of the stabilizer generators:
    \begin{equation*}
      \left(\begin{array}{cccc|cccc}
              \xi^{-n}_+&\xi^{-n+1}_+&\cdots&\xi^{n-1}_+&\xi^{-n}_-&\xi^{-n+1}_-&\cdots&\xi^{n-1}_-\\
              0&\xi^{-n}_+&\cdots&\xi^{n-2}_+&0&\xi^{-n}_-&\cdots&\xi^{n-2}_-\\
              \vdots&\ddots&\ddots&\vdots&\vdots&\ddots&\ddots&\vdots\\
              0&\cdots&0&\xi^{-n}_+&0&\cdots&0&\xi^{-n}_-
            \end{array}
      \right).
    \end{equation*}
    Let us assume that the outermost element is a $\sigma_1$\footnote{The other cases work equivalently.}. Then $\xi^{-n}_+=1$ and $\xi^{-n}_-=0$. From Section \ref{sec:ent_stab} we know, that at least one $\xi^{i}_-\ne0$. Let the $i$-th  diagonal of the right part be the first non-zero one. We also use the reflection invariance of $\xi$ to replace $\xi^{-j}$ by $\xi^j$ and get the following matrix: 
    \begin{equation*}
      \left(\begin{array}{ccccc|ccccccccccc}
              1&\xi^{n-1}_+&\xi^{n-2}_+&\cdots&\xi^{n-1}_+&0&\cdots&0&\xi^{n-i}_-&\xi^{n-i-1}_-&\cdots&\xi^{n-i}_-&0&\cdots&0\\
              0&1&\xi^{n-1}_+&\cdots&\xi^{n-2}_+&&&&&&&\ddots&&&\\
              &&&&&&&&&&0&\xi^{n-i}_-&\xi^{n-i-1}_-&\cdots&\xi^{n-i}_-\\
              \vdots&&\ddots&&\vdots&\vdots&&&&\ddots&&&&\ddots&\vdots\\
              &&&&&&&&&&&&&0&\xi^{n-i}_-\\
              &&&&&&&&&&&&&&0\\
              0&\cdots&0&1&\xi^{n-1}_+&&&&&&&&&&\vdots\\
              0&\cdots&0&0&1&0&&&&&\cdots&&&&0\\
            \end{array}
      \right).
    \end{equation*}
    Now we can perform the Gaussian algorithm on the matrix to obtain an identity matrix in the left part. As the rows are shifted copies of the first row, all operations will also be applied in a shifted copy. If we would add the third row to the first, we would also add the fourth to the second and so forth. We only add rows to rows above, because the lower left part of the matrix is already zero. We therefore get
    \begin{equation*}
      \left(\begin{array}{ccccc|ccccccccc}
              1&0&0&\cdots&0&0&\cdots&0&\zeta^{-n+i}&&\cdots&\zeta^{n-1}\\
              0&1&0&\cdots&0&&&&&&\ddots&\vdots\\
              &&&&&&&&&&0&\zeta^{-n+i}\\
              \vdots&&\ddots&&\vdots&\vdots&&&\ddots&&&0\\
              0&\cdots&0&1&0&&&&&&&\vdots\\
              0&\cdots&0&0&1&0&&&&\cdots&&0\\
            \end{array}
      \right).
    \end{equation*}
    The $i$-th row of the right part of the matrix remains unchanged, so $\zeta^{-n+i}=\xi^{n-i}_-=1$. We therefore get operators of the form
    \begin{equation*}
      s_k=\weyl{\tilde\xi_k}=\sigma_1^{-n}\otimes\left(\bigotimes_{j=-n+1}^{-n+i-1}\Id^{j}\right)\otimes\sigma_3^{-n+i}\otimes\left(\bigotimes_{j=-n+i+1}^{n-k}\sigma_?\right),\,1\le k\le 2n,
    \end{equation*}
    where the $\sigma_?$ can only be $\sigma_3$ or $\sigma_0=\Id$.
    We can easily see, that $\{s_k,s_{k+i-1}\}=0$. As $i\le n$ we can always find $n$ pairs of anticommuting operators. Unfortunately these pairs do not necessarily commute with other pairs. But through multiplication of operators we can find new pairs, which fulfill the necessary commutation relations. To show this we create a (symmetric) matrix $c_{i,j}=\sigma(s_i,s_j)$ of commutation relations.
    \begin{equation*}
      C=\begin{array}{r|ccccc}
        &s_1&s_i&s_j&s_{j+i-1}&\cdots\\
        \hline
        s_1&0&1&?&?&\cdots\\
        s_i&1&0&?&?&\cdots\\
        s_j&?&?&0&1&\cdots\\
        s_{j+i-1}&?&?&1&0&\cdots\\
        \vdots&\vdots&\vdots&\vdots&\vdots&\ddots
      \end{array}.
    \end{equation*}
    A ``$1$'' stands for anticommutation, a ``$0$'' for commutation. In the end we want all operators from different pairs to commute, so all positions denoted by question marks should get a zero entry. If we multiply operators, the corresponding rows and columns are added. Through these operations we can bring the commutation matrix to the form
    \begin{equation*}
      \widehat{C}=\begin{array}{r|ccccc}
        &s_1&s_i&s_j&s_{j+i-1}&\cdots\\
        \hline
        s_1&0&1&0&0&\cdots\\
        s_i&1&0&0&0&\cdots\\
        s_j&0&0&0&1&\cdots\\
        s_{j+i-1}&0&0&1&0&\cdots\\
        \vdots&\vdots&\vdots&\vdots&\vdots&\ddots
      \end{array}.
    \end{equation*}
    To show that this is possible, we will consider a prototype of such an operation. Given the matrix 
    \begin{equation*}
      C=\begin{array}{r|ccccc}
        &s_1&s_i&s_j&s_{j+i-1}&\cdots\\
        \hline
        s_1&0&1&c_{13}&c_{14}&\cdots\\
        s_i&1&0&c_{23}&c_{24}&\cdots\\
        s_j&c_{13}&c_{23}&0&1&\cdots\\
        s_{j+i-1}&c_{14}&c_{24}&1&0&\cdots\\
        \vdots&\vdots&\vdots&\vdots&\vdots&\ddots
      \end{array}.
    \end{equation*}
    We now pick a nonzero $c_{ij}$ and do the following. If $c_{ij}=1$ and $i$ odd, we add row $i+1$ to row $j$ and the same for the columns. If $c_{ij}=1$ and $i$ even, we add row $i-1$ to row $j$ and the same for the columns. This only changes the one $c_{ij}$ we are considering, the others remain unchanged. After each step we get a new matrix $\widetilde{C}$ and pick another nonzero $c_{ij}$ from the same $2\times 2$ block (in this example we only have one block). By doing this for all blocks in the first two rows, we create pairs of operators that commute with the first pair. Now we have to check if the process destroyed the anticommutation within the pairs. The diagonal entries of the matrix trivially stay zero, because all operators commute with themselves. We only have to check the other elements of the block (due to the symmetry, we only have to check one). So if $c_{13}=1$ we get $1\mapsto 1+c_{24}$. We can write $1\mapsto 1+c_{13}c_{24}$. Including the whole block of $c_{ij}$ we get $1\mapsto 1+c_{13}c_{24}+c_{14}c_{23}+c_{23}c_{14}+c_{24}c_{13}=1$ as all operations are carried out modulo $2$. The new pairs thus fulfill the anticommutation condition. We can repeat this process for the new pairs until all operators from different pairs commute. We started with $2n$ operators, thus we arrived at $n$ pairs which together with their counterparts on the other subsystem encode $n$ pairs of maximally entangled qubits.
  \end{proof}

\end{appendix}
\bibliography{cqca_assym_paper}
\end{document}